\theoremstyle{theorem}
\newtheorem{theorem}{Theorem}[section]
\newtheorem{proposition}[theorem]{Proposition}
\newtheorem{lemma}[theorem]{Lemma}
\newtheorem{corollary}[theorem]{Corollary}
\theoremstyle{definition}
\newtheorem{definition}[theorem]{Definition}
\newtheorem{example}[theorem]{Example}
\newtheorem{remark}[theorem]{Remark}
\newtheorem{notation}[theorem]{Notation}
\newcommand{\memo}[1]{\textcolor{red}{(#1)}}
\newcommand{\cont}{\memo{cont.}}
\newcommand{\notyet}{\textcolor{red}{(not yet)\ }}
\newcommand{\bbR}{\mathbb{R}}
\newcommand{\calF}{\mathcal{F}}
\newcommand{\calL}{\mathcal{L}}
\newcommand{\calS}{\mathcal{S}}
\newcommand{\cat}[1]{\mathcal{#1}}
\renewcommand{\bar}{\overbar}
\newcommand{\overbar}[1]{\mkern 1.5mu\overline{\mkern-1.5mu#1\mkern-1.5mu}\mkern 1.5mu}
\renewcommand{\phi}{\varphi}
\newcommand{\xxtos}[1]{\stackrel{#1}{\to}}
\newcommand{\xxto}{\xrightarrow}
\newcommand{\tto}{\rightrightarrows}
\newcommand{\into}{\rightarrowtail}
\newcommand{\onto}{\twoheadrightarrow}
\newcommand{\pow}{\mathcal{P}}
\newcommand{\dist}{\mathcal{D}_{=1}}
\newcommand{\sdist}{\mathcal{D}_{\le 1}}
\newcommand{\set}[2]{\left\{\, #1 \mathrel{}\middle|\mathrel{} #2 \,\right\}}
\newcommand{\sett}[1]{\left\{ #1 \right\}}
\newcommand{\place}{{-}}
\newcommand{\op}{\mathrm{op}}
\newcommand{\ev}{\mathrm{ev}}
\newcommand{\id}{\mathrm{id}}
\newcommand{\uintv}{[0, 1]}
\newcommand{\Set}{\mathbf{Set}}
\newcommand{\Sets}{\Set}
\newcommand{\Pos}{\mathbf{Pos}}
\newcommand{\Rel}{\mathbf{Rel}}
\newcommand{\Top}{\mathbf{Top}}
\newcommand{\CL}{\mathbf{CL}}
\newcommand{\GEMod}{\mathbf{GEMod}}
\newcommand{\DcEMod}{\mathbf{DcEMod}}
\newcommand{\Conv}{\mathbf{Conv}}
\newcommand{\EM}{\mathcal{E}{\kern-.5ex}\mathcal{M}}
\newcommand{\Kl}{\mathcal{K}{\kern-.2ex}\ell}
\newcommand{\biglor}{\bigvee}
\newcommand{\bigland}{\bigwedge}
\DeclarePairedDelimiter\bra{\langle}{\rvert}
\DeclarePairedDelimiter\ket{\lvert}{\rangle}
\DeclarePairedDelimiter\abs{\lvert}{\rvert}
\DeclarePairedDelimiter\p{(}{)}
\newcommand{\OmegaD}{\Omega_{\cat{D}}}
\newcommand{\List}{\mathsf{List}}
\newcommand{\Upx}{\mathsf{Up}}
\newcommand{\lt}{<}
\newcommand{\gt}{>}
\renewcommand{\subset}{\subseteq}
\newcommand{\StrCL}{\mathrm{\textbf{StrCL}}_{\bigland,\mathrm{pos}}}
\newif\ifignore 
\newcommand{\auxproof}[1]{
\ifignore\mbox{}\newline
\textbf{BEGIN: AUX-PROOF} \dotfill\newline
{#1}\mbox{}\newline
\textbf{END: AUX-PROOF}\dotfill\newline
\fi}
\def\pb#1{\save[]+<20 pt,0 pt>:a(#1)\ar@{pb{}}[]\restore}
\def\myqed{\qed}
\newcommand{\after}{\mathrel{\circ}}
\newcommand{\co}{\mathrel{\circ}}
\newcommand{\Kleisli}[1]{\mathcal{K}{\kern-.2ex}\ell(#1)}
\newcommand{\bbP}{\mathbb{P}}
\newcommand{\fpow}{\mathcal{P}_{\omega}}
\newcommand{\lift}{\mathcal{L}}
\newcommand{\iso}{\mathrel{\stackrel{
\raisebox{.5ex}{$\scriptstyle\cong\,$}}{
\raisebox{0ex}[0ex][0ex]{$\rightarrow$}}}}
\newcommand{\To}{\Rightarrow}
\newcommand{\longto}{\longrightarrow}
\def\compsign{\mathrel>\kern-2pt\joinrel>\kern-2pt\joinrel>}
\newcommand{\valg}[3]{\raisebox{.00in}
{\mbox{\large
${{{\scriptstyle #1}\atop
{\phantom{\scriptstyle #2}}\scriptstyle\downarrow #2}
\atop{\scriptstyle #3}}$}}}
\newcommand{\dar}{\ar@{..>}}
\newcommand{\lar}{\ar@{-}}
\newcommand{\ttrue}{\mathtt{t{\kern-1.5pt}t}}
\newcommand{\ffalse}{\mathtt{f{\kern-1.5pt}f}}
\newcommand{\ssub}{\mathrel{\subset{\kern-1.6ex}\subset}}
\newcommand{\wpre}{\mathop{\mathrm{wp}}\nolimits}
\newcommand{\EMod}{\mathbf{EMod}}
\newcommand{\tauTotal}{\tau_{\mathrm{total}}}
\newcommand{\tauPartial}{\tau_{\mathrm{partial}}}
\newcommand{\PredKl}[1]{\bbP^{\mathcal{K}{\kern-.2ex}\ell}(#1)}
\newcommand{\PredEM}[1]{\bbP^{\mathcal{E}{\kern-.5ex}\mathcal{M}}(#1)}
\newcommand{\upcl}{\mathop{\uparrow}\nolimits}
\newcommand{\UP}{\mathcal{U{\kern-.3ex}P}}
\newcommand{\CD}{\mathcal{C{\kern-.3ex}D}}
\newcommand{\V}{\mathcal{V}}
\newcommand{\RC}{\mathsf{Cv}}
\begin{document}

\setlength{\pdfpageheight}{\paperheight}
\setlength{\pdfpagewidth}{\paperwidth}

\conferenceinfo{CONF 'yy}{Month d--d, 20yy, City, ST, Country}
\copyrightyear{20yy}
\copyrightdata{978-1-nnnn-nnnn-n/yy/mm}
\doi{nnnnnnn.nnnnnnn}




\titlebanner{}        
\preprintfooter{}   

\title{Healthiness from Duality}

\authorinfo{Wataru Hino \and Hiroki Kobayashi \and\\ Ichiro Hasuo}
           {University of Tokyo, Japan}
           {\{wataru, hkoba7de, ichiro\}@is.s.u-tokyo.ac.jp}
\authorinfo{Bart Jacobs}
           {Radboud University Nijmegen, the Netherlands}
           {bart@cs.ru.nl}

\maketitle

\begin{abstract}
  \emph{Healthiness} is a good old question in program logics that
  dates back to Dijkstra. It asks for an intrinsic characterization of
  those predicate transformers which arise as the (backward)
  interpretation of a certain class of programs. There are several
  results known for healthiness conditions: for deterministic
  programs, nondeterministic ones, probabilistic ones, etc.  Building
  upon our previous works on so-called \emph{state-and-effect
    triangles}, we contribute a unified categorical framework for
  investigating healthiness conditions. This framework is based on a
  \emph{dual adjunction} induced by a dualizing object and on our
  notion of \emph{relative Eilenberg-Moore algebra}.  The latter
  notion seems interesting in its own right in the context of monads,
  Lawvere theories and enriched categories.
\end{abstract}

\category{F.3.2}{Semantics of Programming Languages}{Algebraic Approaches to Semantics}


\keywords
program logic, category theory, duality


\section{Introduction}\label{sec:intro}

\paragraph{Predicate Transformer Semantics of Computation}
\emph{Program logics} are formal systems for reasoning about
programs. They come in different styles:
in the \emph{Floyd-Hoare logic}~\cite{Hoare69} one derives 
 triples
of a
precondition, a program and a postcondition;  \emph{dynamic
logics}~\cite{HarelTK00}
are  logics that have programs as modal operators;
type-theoretic presentations would have predicates as \emph{refinement}
(or \emph{dependent})
\emph{types}, allowing smooth extension to higher-order programs; and many
 program verification tools for imperative programs have
programs represented as \emph{control flow graphs}, where predicates
are labels to the edges. Whatever presentation style is taken, the
basic idea that underlies these variations of program logics is that of
\emph{weakest precondition}, dating back to
Dijkstra~\cite{Dijkstra76}. It asks: \emph{in order to guarantee a given
postcondition after the execution of a given program, what
precondition does it suffice to assume, before the execution?}

Through weakest preconditions a program gives rise to a \emph{(backward)
predicate transformer} that carries a given postcondition to the
corresponding weakest precondition. This way of interpreting
programs---sometimes called \emph{axiomatics
semantics}~\cite{Winskel93}---is in contrast to \emph{(forward) state
transformer semantics} where programs are understood as functions
(possibly with branching or side effects) that carry input states/values
to output ones.

\paragraph{Predicate Transformer Semantics and Quantum Mechanics}
The topic of weakest precondition and predicate transformer semantics
is  classic in computer science, in decades of foundational and
practical studies. Recently, fresh light has been shed on their
\emph{structural} aspects: the same kind of interplay between
\emph{dynamics} and \emph{observations} for \emph{quantum mechanics}
and \emph{quantum logic} appears in predicate transformer semantics,
as noted by one of the current authors---together with his
colleagues~\cite{Jacobs14CMCS,jacobs2015dijkstra,Jacobs15LMCS}.  This
enabled them to single out a simple categorical scheme---called
\emph{state-and-effect triangles}---that is shared by program
semantics and quantum mechanics.

On the program semantics side,
the scheme of state-and-effect triangles
allows
the informal ``duality'' between  state and predicate transformer
semantics to be formalized as a
categorical duality. Interestingly, the quantum counterpart of this
duality is the one between the \emph{Schr\"{o}dinger} and
\emph{Heisenberg} pictures of quantum mechanics. In this sense the idea
of weakest precondition dates back before Dijkstra, and before the
notion of program.

 State-and-effect triangles will be
elaborated on in Section~\ref{sub:stateAndEffectTriangles}; we note at this
stage that the term ``effect'' in the name refers to a notion in quantum
mechanics and should be read as \emph{predicate} in the programming
context. In particular, it has little to do with \emph{computational
effect}.

\paragraph{In Search of Healthiness}
The question of \emph{healthiness conditions} is one that is as old as
the idea of weakest precondition~\cite{Dijkstra76}: it asks for an intrinsic characterization of those predicate
  transformers which arise as the (backward) interpretation of
  programs.
One basic healthiness result is for \emph{nondeterministic}
programs.  The result is stated, in elementary terms, as follows.
\begin{theorem}[healthiness under the ``may''-nondeterminism]
\label{thm:healthiness-nondet-elementary}
 \begin{enumerate}
  \item Let $R\subseteq X\times Y$ be a binary relation; it is thought
	of as a nondeterministic computation from  $X$ to $Y$. This
	$R$ induces a predicate transformer ($\wpre$ for
	``weakest precondition'')
	\begin{align*}
	 &	\wpre_{\Diamond}(R)\colon 2^{Y}\longrightarrow 2^{X},
	 \quad\text{defined by}\quad
	 \\
	 &
	 \wpre_{\Diamond}(R)(f)(x)=1
	 \;\Longleftrightarrow\;
	 \exists y\in Y.\, (xRy\;\land\;f(y)=1),
	\end{align*}
	for each $f\colon Y\to 2$ (thought of as a \emph{predicate} and
	more specifically as a \emph{postcondition}) and each $x\in X$.
  \item (Healthiness) Let $\varphi\colon 2^{Y}\to 2^{X}$ be a function. The following
	are equivalent.
	\begin{enumerate}
	 \item The function $\varphi$ arises  in
	       the way prescribed above. That is, there exists
	       $R\subseteq X\times Y$ such that
	       $\varphi=\wpre_{\Diamond}(R)$.
	 \item The map $\varphi$ is \emph{join-preserving}, where $2^{Y}$ and
	       $2^{X}$
	       are equipped with (the pointwise extensions of) the order
	       $0<1$ in $2$.
	\end{enumerate}
\end{enumerate}
\end{theorem}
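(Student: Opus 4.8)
The plan is to establish the equivalence of (a) and (b) by direct, elementary reasoning, treating $2^{Y}$ and $2^{X}$ as complete lattices under the pointwise order, in which arbitrary joins are computed pointwise: $(\biglor_{i}f_{i})(y)=1$ iff $f_{i}(y)=1$ for some $i$. Throughout, \emph{join-preserving} is read as preserving all (possibly infinite, and including empty) joins. The implication (a)~$\Rightarrow$~(b) is the routine direction; the content lies in (b)~$\Rightarrow$~(a), where from an abstract join-preserving map I must reconstruct a concrete witnessing relation.

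For (a)~$\Rightarrow$~(b), suppose $\varphi=\wpre_{\Diamond}(R)$ and fix a family $(f_{i})_{i}$ in $2^{Y}$. Unfolding the definition from Part~1, $\wpre_{\Diamond}(R)(\biglor_{i}f_{i})(x)=1$ holds iff there is $y$ with $xRy$ and $(\biglor_{i}f_{i})(y)=1$, i.e. with $xRy$ and $f_{i}(y)=1$ for some $i$. Commuting the two existentials, this holds iff for some $i$ there is $y$ with $xRy$ and $f_{i}(y)=1$, that is iff $(\biglor_{i}\wpre_{\Diamond}(R)(f_{i}))(x)=1$. Thus $\varphi$ preserves joins; the only ingredient is that existential quantification distributes over disjunction.

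For (b)~$\Rightarrow$~(a), the key observation is that in the powerset lattice every predicate is the join of the singletons below it: writing $\chi_{\{y\}}\colon Y\to 2$ for the indicator of $y$, we have $f=\biglor\set{\chi_{\{y\}}}{y\in Y,\ f(y)=1}$ for every $f\in 2^{Y}$. I therefore define the candidate relation by reading off $\varphi$ on singletons,
\begin{align*}
  xRy \;\defiff\; \varphi(\chi_{\{y\}})(x)=1,
\end{align*}
and claim $\varphi=\wpre_{\Diamond}(R)$. Indeed, for any $f$ and $x$, join-preservation gives $\varphi(f)=\biglor\set{\varphi(\chi_{\{y\}})}{f(y)=1}$, so $\varphi(f)(x)=1$ holds exactly when $\varphi(\chi_{\{y\}})(x)=1$ for some $y$ with $f(y)=1$; by the definition of $R$ this is precisely the condition $\exists y.\,(xRy\land f(y)=1)$ defining $\wpre_{\Diamond}(R)(f)(x)=1$. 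Hence the two maps agree.

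The only genuinely non-formal step is (b)~$\Rightarrow$~(a): one must guess that $\varphi$ is entirely determined by its values on the generating singletons, and one must read \emph{join-preserving} as preservation of \emph{all} suprema, not merely finite ones, since for infinite $Y$ the decomposition of $f$ uses infinite joins, while the empty join covers the case $f=\bot$. Once these are in place no real obstacle remains. I note finally that the same computation forces $R$ to be unique, as $xRy$ must equal $\wpre_{\Diamond}(R)(\chi_{\{y\}})(x)$; this uniqueness is the elementary shadow of the categorical duality that the paper goes on to develop in full generality.
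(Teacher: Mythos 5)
Your proof is correct: both directions are sound, the empty and infinite joins are handled properly, and the singleton reconstruction $xRy \defiff \varphi(\chi_{\{y\}})(x)=1$ does exactly what you claim. However, your route is genuinely different from the paper's. The paper proves this theorem categorically, as a warm-up for its general framework: it first factorizes the powerset monad through the dual adjunction $\Set \rightleftarrows (\CL_{\biglor})^{\op}$ induced by the dualizing object $2$, via the monad isomorphism $\sigma \colon \pow \iso [2^{(\place)},2]_{\biglor}$ of Lemma~\ref{lem:monad-isom-jslat}; it then identifies $\wpre_{\Diamond}$ with the composite $K \co \Kl(\sigma)$ of the induced Kleisli isomorphism and the comparison functor $K$, and concludes healthiness from the general fact that comparison functors are full and faithful. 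Your argument is the elementary shadow of exactly this: your relation-from-singletons construction is precisely the inverse $\sigma^{-1}_{Y}(\xi)=\set{y}{\xi(\delta_{y})=1}$ that the paper's appendix uses to prove Lemma~\ref{lem:monad-isom-jslat}, and your uniqueness remark is the faithfulness half of the comparison functor's full-and-faithfulness. What your approach buys is brevity and self-containedness --- no monads, adjunctions, or comparison functors are needed, and the combinatorial content (existentials commute with disjunctions; predicates are joins of singletons) is laid bare. What the paper's approach buys is reusability: by isolating the three ingredients (dual adjunction, monad map, comparison functor), the same argument template extends to the ``must'' modality (Theorem~\ref{thm:healthiness-nondet-elementary-box}), to probabilistic monads (Section~\ref{sec:prob-examples}), and to alternating branching (Section~\ref{sec:two-player-setting}), where no analogous singleton decomposition of predicates exists and a direct reconstruction like yours would not go through.
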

\noindent
Here we
interpret $0\in 2$ as false and $1\in
2$ as true, a convention we adopt throughout the paper.

There are many different instances of healthiness results.  For example,
the works~\cite{Kozen81,Jones90PhD} study \emph{probabilistic}
computations in place of nondeterministic ones; the (alternating)
combination of nondeterministic and probabilistic branching is studied
in~\cite{MorganMS96}; and Dijkstra's original work~\cite{Dijkstra76}
deals with the (alternating) combination of nondeterminism and
divergence. In fact it is implicit in our notation $\wpre_{\Diamond}$
that there is a possible ``must'' variant of
Theorem~\ref{thm:healthiness-nondet-elementary}. In this
variant, another  predicate transformer $\wpre_{\Box}$ is defined by
\begin{equation}\label{eq:boxPredTransfConcretely}
\small\begin{array}{l}
  	 \wpre_{\Box}(R)(f)(x)=1
	 \quad\Longleftrightarrow\quad
	 \forall y\in Y.\, (xRy\;\Rightarrow\;f(y)=1),
\end{array}
\end{equation}
requiring that every possible poststate must satisfy the postcondition
$f$.  The corresponding healthiness result has it that the resulting predicate
transformers are characterized by \emph{meet-preservation}.

The goal of the current work is to identify a structural and categorical
principle behind  healthiness, and hence to provide a common
ground for the  existing body of healthiness results, also providing a methodology that possibly aids finding
new  results.

 As a concrete instance of this goal, we wish to answer why
join-preservation should characterize ``may''-nondeterministic predicate
transformers $\wpre_{\Diamond}$ in
Theorem~\ref{thm:healthiness-nondet-elementary}. A first observation
would be that the powerset monad $\pow$---that occurs in the alternative
description $R\colon X\to \pow Y$ of a binary relation $R$---has
complete join-semilattices as its Eilenberg-Moore algebras.  This alone
should not be enough though---the framework needs to account for
different modalities, such as $\Diamond$ (``may'') vs.\ $\Box$
(``must'') for nondeterminism. (In fact it turns out that this ``first
observation'' is merely a coincidence. See
Section~\ref{sub:diamond-modality} later.)

\paragraph{Our Contributions}
We shall answer to the above question of ``categorical healthiness
condition'' by unifying two constructions---or \emph{recipes}---of
state-and-effect triangles.
\begin{itemize}
 \item One recipe~\cite{Hasuo14,Hasuo15TCS} is called the \emph{modality} one,
       whose modeling of situations like in
       Theorem~\ref{thm:healthiness-nondet-elementary} is centered
       around the notion of \emph{monad}. Firstly,
       the relevant class of
       computations (nondeterministic, diverging,
       probabilistic, etc.) is determined by a monad $T$, and
       a computation is then a function of the type $X\to TY$. Secondly,
              the set $\Omega$ of truth values (such as $2$
       in Theorem~\ref{thm:healthiness-nondet-elementary})
       carries a $T$-algebra
       $\tau\colon T\Omega\to\Omega$; it represents a modality such
       as $\Diamond$ and $\Box$.
 \item The other recipe~\cite{Jacobs15CALCO} is referred to as the
       \emph{dual adjunction} one. It takes a dual adjunction
       \begin{math}
		\xymatrix@1@C-.5em{
	 {\cat{C}}
	   \ar@/^1ex/[r]
	   \ar@{}[r]|-{\bot}
        &
	 {\cat{D}\rlap{$^{\op}$}
	}
	   \ar@/^1ex/[l]
	}
       \end{math}\quad
       as an ingredient; and uses two \emph{comparison functors}---from
       a Kleisli category and to an Eilenberg-Moore category---to form a
       state-and-effect triangle, additionally exploiting $\cat{D}$'s
       completeness assumption. One notable feature is that the
       resulting state-and-effect triangle is automatically
       ``healthy''---this is because comparison functors are full and
       faithful.
\end{itemize}
Combining the two recipes we take advantages of both: the former
provides a concrete presentation of predicate transformers by a
modality; and the latter establishes healthiness.  We demonstrate that
many known healthiness results are instances of this framework.

The key to combining the two recipes is to interpret a
monad $T$ on $\Sets$ in a category $\cat{D}$ that is other than
$\Sets$. For this purpose---assuming that the dual adjunction in the
second recipe is given with a dualizing
object---we introduce the notion of \emph{$\cat{D}$-relative $T$-algebra}
and develop its basic theory. Notably the structure map of a $\cat{D}$-relative $T$-algebra  is given by a \emph{monad map} from $T$ to a suitable
continuation-like monad (that arises from the aforementioned dual
adjunction). This notion seems to be more than a tiny side-product of the
current venture: we expect it to play an important role in the
\emph{categorical model theory} (see e.g.~\cite{AdamekR94,LackP09,MakkaiP89}) where the
equivalence between (finitary) monads and \emph{Lawvere theories} is
fundamental. See below for further discussions.

\paragraph{Related and Future Work}
We believe the current results allow rather straightforward
generalization (from ordinary, $\Sets$-based category theory) to
enriched category theory~\cite{Kelly82}. For example, the use of
the $|X|$-fold product $\Omega^X$ can be replaced by the \emph{cotensor}
$[X,\Omega]$. Doing so, and identification of this generalization's
relevance in program logics, is left as future work.

The current theoretical developments are heavily influenced by
\emph{Lawvere theories}, another categorical formalization of algebraic
structures that is (if finitary) equivalent to monads.
In particular, our notion of relative algebra is aimed to be a (partial)
answer to the oft-heard question: \emph{A Lawvere theory can be interpreted
in different categories. Why not a monad?} We intend to
establish
formal relationships in  future work, possibly in an enriched
setting.
 There the line of works on enriched Lawvere theories will be
relevant~\cite{LackP09,hyland2007category}.
The first observation in this direction is that: a monad $T$ on $\Sets$
gives rise to a (possibly large) ``Lawvere theory'' $\Kl(T)^{\op}$; and
then its ``algebra'' in a category $\mathcal{D}$ (with enough products)
is a product-preserving functor $\Kl(T)^{\op}\to \mathcal{D}$. 

What is definitely lacking in the current work (and in our previous work~\cite{Jacobs15CALCO,Hasuo15TCS}) is  syntax for
programs/computations and program logics. In this direction the
work~\cite{GoncharovS13b} presents a generic  set of inference
rules---that is sound and relatively
complete---for a certain class of monadic computations.

We are grateful to a referee who brought our attention to
recent~\cite{HofmannN15}. Motivated by the modal logic question of
equivalences between Kripke frames and modal algebras---possibly
equipped with suitable topological structures---they are led to a
framework that is close to ours.  Their aim is a dual equivalence
between a Kleisli category $\Kl(T)$ and a category of algebras
$\cat{D}$, and our goal of healthiness (i.e.\ a full and faithful
functor $\Kl(T)\to \cat{D}^{\op}$) comes short of such only by failure of
iso-denseness. Some notable differences are as follows. Firstly,
in~\cite{HofmannN15} principal examples of a monad $T$ is for
nondeterminism, so that a Kleisli arrow is a relation, whereas we have
probability and alternation as other leading examples.  Secondly, in
place of relative algebra (that is our novelty), in~\cite{HofmannN15}
they use the notion of algebra that is syntactically presented with
operations. Unifying the results as well as the motivations of the two
papers is an exciting direction of future research. See also
Remark~\ref{rem:CABA}.

Another closely related work~\cite{Keimel15}
studies healthiness from a domain-theoretic point of
view. While it is based on syntactic presentations of algebras
(differently from our monadic presentations), notable similarity
is found in its emphasis on continuation monads. Its domain-theoretic
setting---every construct is $\mathbf{DCpo}$-enriched---will be relevant
when we wish to accommodate recursion in our current results, too.

\paragraph{Organization of the Paper}
We exhibited our leading example in
Theorem~\ref{thm:healthiness-nondet-elementary}. In
Section~\ref{sec:nondet} we describe its proof---in a categorical
language---and this will motivate our general framework. After
recalling the scheme of state-and-effect triangles
in
Section~\ref{sec:general-healthiness}, in
Section~\ref{sec:relativeAlgebraRecipe} we unify  two known recipes for them to present a
new \emph{relative algebra} recipe. The basic theory of relative
algebras is developed there, too. Section~\ref{sec:prob-examples} is devoted
to probabilistic  instances of our framework. Finally in Section~\ref{sec:two-player-setting}
we
further extend the generic framework to accommodate \emph{alternating}
branching that involve two players typically with conflicting interests.

Some missing proofs are found in the appendix.

\paragraph{Preliminaries and Notations}
We assume familiarity with basic category theory, from references
like~\cite{MacLane71,BarrW85}. We list some categories that we will use, mostly for fixing notations:
 the category $\Set$  of sets and functions;
 the category $\Rel$  of sets and binary relations;
 and the categories
 $\CL_{\biglor}$ and
 $\CL_{\bigland}$ of complete join- and meet-semilattices, and
 join- and meet-preserving maps between them,
 respectively.\footnote{Here a \emph{complete join-semilattice} is a
 poset with arbitrary joins $\biglor$. It is well-known that in this case
 arbitrary meets $\bigland$ exist, too; we say ``join-'' to indicate the notion of
homomorphism we are interested in.
 } Given a monad $T$,
 its \emph{Eilenberg-Moore} and \emph{Kleisli} categories are denoted
 by $\EM(T)$ and $\Kl(T)$, respectively. Their definitions are found
 e.g.\ in~\cite{MacLane71,BarrW85}.

Let $S,T$ be monads on $\cat{C}$. The standard notion of
 \emph{monad map} from $S$ to $T$ is defined by
a natural transformation $\alpha\colon S\to T$ that is compatible with
the monad structures. For its explicit requirements see
Appendix~\ref{appendix:monadMap}.


We shall be using various ``hom-like'' entities such as
homsets, exponentials, cotensors and so on; they are denoted by
$\cat{C}(X, Y)$, $Y^X$, $[X, Y]$, etc. For those entities we abuse
the notations $f^*$ and $f_*$ and use them uniformly for the
\emph{precomposition} and \emph{postcomposition} morphisms, such as:
\begin{align*}
\small\begin{array}{l}
 f^{*} = (-) \after f \colon Z^Y \longrightarrow Z^X
 \quad\text{and}\quad
 f_{*} = f \after (-) \colon X^{Z}\longrightarrow Y^{Z}\enspace,
\end{array}
\end{align*}
for $f\colon X\to Y$.
Another generic notation  we will use for those hom-like entities is
 $(\place)^{\sharp}$ for  correspondences like
\begin{displaymath}
  f\colon B\to A^{X}\quad\bigl/\!\!\bigr/\quad
  f^{\sharp} \colon X\to A^{B}\enspace.
\end{displaymath}
An example of such is via the universality of products:
\begin{displaymath}
\small\begin{array}{l}
\infer={  f^{\sharp} \colon X\longto \cat{D}(B,A)\;\text{in $\Sets$}}{  f\colon B\longto A^{X}\;\text{in a category $\cat{D}$ with arbitrary
 products}
}
\end{array}
\end{displaymath}
 where $A,B\in\cat{D}$, $X\in \Sets$ and $A^{X}$ is the $|X|$-fold
 product of $A$.


We shall use a somewhat unconventional notation of writing $X_{x}$ for an
(Eilenberg-Moore) $T$-algebra $x\colon TX\to X$. In our arguments the
monad $T$ is mostly obvious from the context, and this notional
convention turns out to be succinct and informative.




\auxproof{
\begin{definition}
  Let $R \subset X \times Y$ be a binary relation.
  We then define mappings
  $\exists_R, \forall_R \colon \pow{X} \to \pow{Y}$
  and
  $\Diamond_R, \Box_R \colon \pow{Y} \to \pow{X}$
  as follows:
  \begin{align*}
    \exists_R{X'} &= \set{y \in Y}{\exists x.\; x \mathrel{R} y \land x \in X'} \\
    \forall_R{X'} &= \set{y \in Y}{\forall x.\; x \mathrel{R} y \to   x \in X'} \\
    \Diamond_R{Y'} &= \set{x \in X}{\exists y.\; x \mathrel{R} y \land y \in Y'} \\
    \Box_R{Y'}     &= \set{x \in X}{\forall y.\; x \mathrel{R} y \to   y \in Y'} \enspace .
  \end{align*}

  The first operator $\exists_R$ takes the \emph{image} of a given set
  by $R$; the $\forall_R$ is the dual of $\exists_R$ (that is probably used less
  commonly).  The operators $\Diamond_R$ and $\Box_R$ are well-known in the
  context of modal logic. They  are called \emph{diamond modality} and
  \emph{box modality}.

\end{definition}

Notice that $\exists_R$ and $\forall_R$ work covariantly (with respect
to $R$) whereas $\Diamond_R$ and $\Box_R$ work contravariantly. In fact it holds that $\Diamond_R = \exists_{R^{-1}}$ and $\Box_R = \forall_{R^{-1}}$.
We have two pairs of adjunctions $\exists_R \dashv \Box_R$
and $\Diamond_R \dashv \forall_R$.

\begin{remark}
  If $R$ is functional, that is, there is a function $f\colon X\to Y$
  such that $x\mathrel{R}y$   if and only if $f(x)=y$,
  then we have $\Box_R = \Diamond_R = f^{-1}$
  and the situation degenerates to the familiar adjunction
  $\forall_f \dashv f^{-1} \dashv \exists_f$.
\end{remark}
}

\section{Leading Example: Nondeterministic Computation and Join- (or
 Meet-) Preservation}
\label{sec:nondet}

In this section, as a leading example, we revisit the well-known
healthiness result in Theorem~\ref{thm:healthiness-nondet-elementary}
together with its ``must''
variant. We shall
 prove the results in an abstract categorical language, paving the way to the
 general and axiomatic modeling in
 Section~\ref{sec:general-healthiness}.

\subsection{``May''-Nondeterminism}
\label{sub:diamond-modality}
In Section~\ref{sec:intro}, regarding
Theorem~\ref{thm:healthiness-nondet-elementary}, we noted
the coincidence between the healthiness condition (join-preservation)
and Eilenberg-Moore $\pow$-algebras (complete join-semilattices).
This turns out to be a deceptive coincidence---the essence lies rather
in a \emph{factorization} of the powerset monad $\pow$ by a dual
adjunction, as we shall describe.

We have a dual adjunction between $\Set$ and the category
$\CL_{\biglor}$
of complete join-semilattices and join-preserving maps.
\begin{align}
  \label{eq:set-jslat-dualadj}
  \vcenter{\xymatrix@1@C+3em{
    \Set \ar@/^/[r]^-{2^{(\place)}} \ar@{}[r]|-{\bot}
    & (\CL_{\bigvee})^{\op} \ar@/^/[l]^-{[\place, 2]_{\bigvee}}
  }}
\end{align}
It is given by a \emph{dualizing object} $2$, in the ``homming-in''
manner:
\begin{align*}
  2^{(\place)} &\colon \Set \longto (\CL_{\biglor})^{\op};\;  X \longmapsto 2^X\enspace, \\
  [\place, 2]_{\biglor} &\colon (\CL_{\biglor})^{\op} \longto \Set;\;  L \longmapsto [L, 2]_{\biglor} \enspace ;
\end{align*}
here $2$ is the poset $\{ 0 < 1 \}$,
the poset $2^X$ is the  $\abs{X}$-fold product of $2$,
and $[L, 2]_{\biglor} = \CL_{\biglor}(L, 2)$ is the set of
join-preserving maps.
This adjunction yields a monad
$X\mapsto [2^X, 2]_{\biglor}$ on $\Sets$;
the unit $\eta$ of the monad $[2^{(\place)}, 2]_{\biglor}$ is defined by $\eta_X(x) = \lambda f \ldotp f(x)$
and the multiplication $\mu_X$ is
$\mu_X(\Xi) = \lambda f \ldotp \Xi \p*{\lambda \xi \ldotp \xi(f)}$.


The following  is the first key observation.
\begin{lemma}
  \label{lem:monad-isom-jslat}
  The monad  $[2^{(\place)}, 2]_{\biglor}$ is isomorphic to the powerset
 monad $\pow$, with an isomorphism
 $\sigma \colon \pow \iso [2^{(\place)}, 2]_{\biglor}$ given by
  $\sigma_X(S) = \lambda f \ldotp \biglor_{x \in S}  f(x)$.
  \qed
\end{lemma}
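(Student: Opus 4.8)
The plan is to verify directly that the family $\sigma=(\sigma_X)_X$ is an isomorphism of monads $\pow\iso[2^{(\place)},2]_{\biglor}$, which amounts to four checks: each $\sigma_X$ is well-defined and bijective, $\sigma$ is natural, and $\sigma$ is compatible with the units and multiplications. For well-definedness I would confirm that $\sigma_X(S)=\lambda f.\,\biglor_{x\in S}f(x)$ indeed preserves arbitrary joins, hence lands in $[2^X,2]_{\biglor}$. Since suprema in the product $2^X$ are computed pointwise, for any family $(f_i)_i$ one has $\sigma_X(S)\p*{\biglor_i f_i}=\biglor_{x\in S}\biglor_i f_i(x)=\biglor_i\biglor_{x\in S}f_i(x)=\biglor_i\sigma_X(S)(f_i)$ by commutativity of joins, the empty family giving preservation of $\bot$.

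For bijectivity the crucial observation is that every $f\in 2^X$ is the join $f=\biglor\set{e_x}{f(x)=1}$ of the point indicators $e_x$ (with $e_x(y)=1$ iff $y=x$). Consequently a join-preserving $\phi\colon 2^X\to 2$ is completely determined by its values $\phi(e_x)$, and setting $S_\phi=\set{x\in X}{\phi(e_x)=1}$ yields $\phi=\sigma_X(S_\phi)$; conversely $\sigma_X(S)(e_x)=1$ iff $x\in S$, so $S\mapsto\sigma_X(S)$ and $\phi\mapsto S_\phi$ are mutually inverse. Abstractly, this bijection is the instance of the free--forgetful adjunction for sup-lattices coming from $2^X\cong\pow X$ being the free complete join-semilattice on $X$, but the elementary argument above is what pins down the explicit form of $\sigma$.

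Naturality first needs the action of the monad $M=[2^{(\place)},2]_{\biglor}$ on a map $g\colon X\to Y$ spelt out: it sends $\phi\mapsto\phi\after g^{*}$, where $g^{*}=(\place)\after g\colon 2^Y\to 2^X$ is precomposition. Then both $M(g)\p*{\sigma_X(S)}$ and $\sigma_Y\p*{\pow(g)(S)}$ send $h\in 2^Y$ to $\biglor_{x\in S}h(g(x))=\biglor_{y\in g[S]}h(y)$, so the naturality square commutes. The monad-map laws I would settle by evaluating on elements: unit preservation is immediate from $\sigma_X(\sett{x})=\lambda f.\,f(x)=\eta_X(x)$, and for the multiplication, evaluating at $f\in 2^X$ collapses the required equation to $\biglor_{S\in\mathcal S}\biglor_{x\in S}f(x)=\biglor_{x\in\bigcup\mathcal S}f(x)$, which holds because both sides equal $1$ exactly when $\bigcup\mathcal S$ meets the support of $f$.

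The only places that need genuine care are the bijectivity step and the correct unwinding of the structure maps $M(g)$ and $\mu^M$ of this double-dualization (continuation-like) monad; once the decomposition $f=\biglor e_x$ is in hand and those maps are made explicit, each remaining obligation reduces to a one-line computation with joins.
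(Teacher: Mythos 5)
Your proposal is correct and follows essentially the same route as the paper's own proof: a direct verification that each $\sigma_X(S)$ is join-preserving (via commutation of joins), element-wise checks of naturality and of the unit and multiplication laws (the multiplication law reducing to $\biglor_{S\in\mathcal{S}}\biglor_{x\in S}f(x)=\biglor_{x\in\bigcup\mathcal{S}}f(x)$, exactly the paper's computation), and bijectivity exhibited by the same inverse $\xi\mapsto\set{x\in X}{\xi(\delta_x)=1}$ built from point indicators. Your only addition is spelling out the decomposition $f=\biglor\set{\delta_x}{f(x)=1}$ to justify that the two maps are mutually inverse, a detail the paper leaves implicit.
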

\noindent
 The isomorphism in Lemma~\ref{lem:monad-isom-jslat} put us in the
 following situation.
\begin{equation}\label{eq:stateEffectForDiamondWithSigmaExplicit}
  \small
  \vcenter{\xymatrix@R-1.5em@C-1em{
  \Kl(\pow)
       \ar[r]^-{\Kl(\sigma)}_-{\cong}
  & {\Kl\bigl([2^{(\place)}, 2]_{\biglor}\bigr)}
      \ar[rr]^-{K} \ar@/^/[rd]
  && {(\CL_{\bigvee})^{\op}} \ar@/^/[ld] \\
  && {\Set} \ar@/^/[ul]^{} \ar@/^/[ur]_{} \ar@{}[ul]|{\dashv} \ar@{}[ur]|{\dashv}
  }}
\end{equation}
Here $\Kl(\sigma)$ is
the
 functor induced by the isomorphism $\sigma$ in
 Lemma~\ref{lem:monad-isom-jslat}; and $K$ is the \emph{comparison
 functor}
from the Kleisli adjunction as the
 ``initial'' factorization of a monad. See
 e.g.~\cite{MacLane71,BarrW85}.

The second key observation is that
 the top composite
 $K\co \Kl(\sigma)$---its action on arrows, precisely---coincides
 with the predicate transformer $\wpre_{\Diamond}$ in
 Theorem~\ref{thm:healthiness-nondet-elementary}.
 Indeed, identifying a binary relation $R\subseteq X\times Y$ with
 a function $X\to \pow Y$ and hence with
 a morphism $X\to Y$ in $\Kl(\pow)$, the action of $K\co \Kl(\sigma)$ can
 be concretely described as follows. The arrows on the second line are all
 in $\Sets$.
\begin{align*}\small
  \xymatrix@R=-.3em@C=2em{
    \Kl(\pow)
    \ar[r]^-{\Kl(\sigma)}
    & \Kl\bigl([2^{(\place)}, 2]_{\biglor}\bigr) \ar[r]^-{K}
    & (\CL_{\biglor})^{\op} \\
    (X \xrightarrow{R} \pow Y) \ar@{|->}[r]
    & \bigl(X \xrightarrow{\sigma_{Y}\co R} [2^{Y}, 2]_{\bigvee}\bigr) \ar@{|->}[r]
    & (2^X \xleftarrow{K(\sigma_{Y}\co R)} 2^Y)
  }
\end{align*}
Unfolding the construction of the comparison functor $K$, the function
$K(\sigma_{Y}\co R)\colon 2^{Y}\to 2^{X}$ in the end is presented as follows.
Given $f\colon Y\to 2$,
\begin{align*}
 K(\sigma_{Y}\co R)(f)
= \lambda x \ldotp (\sigma_Y \circ R)(x)(f)
= \lambda x \ldotp \biglor \set{f(y)}{x \mathbin{R} y} \enspace .
\end{align*}
This is nothing but the predicate $\wpre_{\Diamond}(R)(f)\colon X\to 2$ as
defined in Theorem~\ref{thm:healthiness-nondet-elementary}. Thus we have
established
\begin{displaymath}
 (K\co \Kl(\sigma))_{X,Y}=(\wpre_{\Diamond})_{X,Y}
 \colon
   \Kl(\pow)(X,Y) \to
   \CL_{\biglor}(2^{Y},2^{X})
\end{displaymath}
for each $X$ and $Y$.

The last key observation is that
 a comparison functor is full and faithful
in general.   The action $(K\co \Kl(\sigma))_{X,Y}$ is therefore bijective;
 hence so is $(\wpre_{\Diamond})_{X,Y}$.
This proves Theorem~\ref{thm:healthiness-nondet-elementary}.

In the arguments above the key observations have been: 1) factorization of
a monad via a \emph{dual adjunction} (Lemma~\ref{lem:monad-isom-jslat});
2) a \emph{monad map} $\sigma$ giving rise to a predicate transformer
$\wpre_{\Diamond}=K\co\Kl(\sigma)$; and 3) the role of a \emph{comparison
functor} $K$---in particular that its fullness entails healthiness.
Our general framework will be centered around these three notions (dual
adjunction, monad map and comparison), with our notion of \emph{relative
algebra} bonding them together.

\begin{remark}\label{rem:CABA}
In the above (and in Theorem~\ref{thm:healthiness-nondet-elementary}) we
 established a full and faithful functor $\Kl(\pow)\to
 (\CL_{\biglor})^{\op}$.  Cutting down its codomain, together with a
 well-known isomorphism between $\Kl(\pow)$ and the category $\Rel$ of
 sets and relations, gives us a dual equivalence $\Rel\simeq
 (\mathbf{CABA}_{\biglor})^{\op}$. Here $\mathbf{CABA}_{\biglor}$ is the
 category of complete atomic Boolean algebras and join-preserving maps
 between them. The last dual equivalence is a well-known one, found
 e.g.\ in~\cite[Section~II.9]{Halmos06} and~\cite{JonssonT51}.

 Our principal interest in this paper---motivated by healthiness in
 program logics---is in a full and faithful functor. A dual equivalence,
 in contrast, is pursued typically in the context of modal logic
 (specifically for correspondences between modal algebras and relational
 frames); see e.g.~\cite{HofmannN15}. The relevance of such equivalences
 in program logics would lie in identification of (not only programs
 but) appropriate \emph{state spaces} that realize desired predicate
 transformers. Further investigation is future work.
\end{remark}

\begin{remark}\label{rem:pitfalls}
  For a join-semilattice $L$ there is a poset isomorphism
  $L^{\op} \cong [L, 2]_{\biglor}$.
\auxproof{ Here $L^{\op}$ is the poset obtained
   from $L$ by reversing the order; and
   the isomorphism
   is  given concretely  by
   \ldotp y \not\leq x)$ (where $0\in 2$ is for false and $1\in 2$ is
   for true).  The adjunction~(\ref{eq:set-jslat-dualadj}) can also be
   stated using this isomorphism, in which case the isomorphism
   between $\pow$ and $[2^{(\place)}, 2]_{\biglor}$ becomes obvious.
}  This isomorphism $L^{\op} \cong [L, 2]_{\biglor}$ however tends to
oversimplify arguments, often leading to errors in our experience.
For a similar reason we explicitly write the isomorphism $\sigma$ in the
situation~(\ref{eq:stateEffectForDiamondWithSigmaExplicit}).
\end{remark}

\auxproof{
\begin{remark}
  \label{rem:construct-monad-map}
 \notyet
  The definition of the  map $\sigma_X$ in
  Lemma~\ref{lem:monad-isom-jslat} can alternatively be described as
  follows. This alternative definition will be the basis of our
  generalized construction in ???.

  Let us define $\sigma''_X \colon \pow X \to [2^X, 2]_{\biglor}$ as follows.
  Given $S \in \pow X$ and $f \colon X \to 2$,
  we first extend $f$ to a join-preserving map
  $f^{\sharp} \colon \pow X \to 2$.
  Here we use the join-semilattice structure of $2$
  and the fact that $\pow X$ is the free join-semilattice
  over $X$.   We define $\sigma''_X(S)(f) = f^{\sharp}(S)$;
  it is then not hard to see that $\sigma=\sigma''$.




\end{remark}
}

\subsection{``Must''-Nondeterminism}
\label{sub:box-modality}
We noted after Theorem~\ref{thm:healthiness-nondet-elementary} that a
``must''-predicate transformer $\wpre_{\Box}$ can be conceived for
nondeterministic computations, besides the ``may'' one
$\wpre_{\Diamond}$. See~(\ref{eq:boxPredTransfConcretely}).
We shall briefly describe how this variant is supported by
the same line of arguments as in
Section~\ref{sub:diamond-modality}.

The only difference from Section~\ref{sub:diamond-modality} is that we replace the dual adjunction~(\ref{eq:set-jslat-dualadj}) with
\begin{align}
  \label{eq:set-mslat-dualadj}
\small\begin{array}{l}
  \vcenter{\xymatrix@1@C+3em{
    \Set \ar@/^/[r]^-{2^{(\place)}} \ar@{}[r]|-{\bot}
    & (\CL_{\bigwedge})^{\op} \ar@/^/[l]^-{[\place, 2]_{\bigwedge}}
  }}
\end{array}
\end{align}
that is given, as before, by
$2^{(\place)} \colon \Set \to (\CL_{\bigland})^{\op};  X \mapsto
2^X$, and
$[\place, 2]_{\bigland} \colon (\CL_{\bigland})^{\op} \to \Set;\;  L \mapsto [L, 2]_{\bigland}$.
The new adjunction~(\ref{eq:set-mslat-dualadj})  factorizes the
powerset monad $\pow$, as shown much like Lemma~\ref{lem:monad-isom-jslat}.
\begin{lemma}\label{lem:monad-isom-mslat}
 A natural transformation $\sigma' \colon \pow \to [2^{(\place)}, 2]_{\bigwedge}$
 given by $\sigma'_X(S) = \lambda f \ldotp \bigland_{x \in S}  f(x)$ is
  an isomorphism of monads. \qed
\end{lemma}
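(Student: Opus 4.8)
The plan is to mirror the proof strategy already laid out for Lemma~\ref{lem:monad-isom-jslat}, now using the meet-semilattice adjunction~(\ref{eq:set-mslat-dualadj}) in place of the join-semilattice one. First I would confirm that the adjunction~(\ref{eq:set-mslat-dualadj}) induces the monad $X \mapsto [2^{(\place)}, 2]_{\bigwedge}$ on $\Sets$, with the unit and multiplication given by the same ``homming-in'' formulas as before (the unit is $\eta_X(x) = \lambda f \ldotp f(x)$, and the multiplication is $\mu_X(\Xi) = \lambda f \ldotp \Xi(\lambda \xi \ldotp \xi(f))$). The difference is only in which subcategory of posets we hom into, so the monad structure is formally the same; what changes is the concrete description of its elements, since $[2^X, 2]_{\bigwedge}$ now consists of \emph{meet}-preserving maps.

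Next I would verify that the proposed $\sigma'_X(S) = \lambda f \ldotp \bigland_{x \in S} f(x)$ indeed lands in $[2^X, 2]_{\bigwedge}$; that is, that for each fixed $S \subseteq X$ the assignment $f \mapsto \bigland_{x \in S} f(x)$ preserves arbitrary meets in $2^X$. This is a routine check: meets in $2^X$ are computed pointwise, so $\bigland_{x \in S}(\bigland_i f_i)(x) = \bigland_i \bigland_{x \in S} f_i(x)$, using commutativity of iterated meets in $2$. Note the boundary case $S = \emptyset$: then $\bigland_{x \in \emptyset} f(x) = 1$, the top element, which is exactly the constant map preserving the empty meet, so $\sigma'_X(\emptyset)$ is the top of $[2^X, 2]_{\bigwedge}$.

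I would then establish that $\sigma'$ is a monad map, by checking compatibility with the units and multiplications of $\pow$ and $[2^{(\place)}, 2]_{\bigwedge}$; naturality in $X$ is immediate from the pointwise definition. The unit law amounts to $\sigma'_X(\{x\}) = \eta_X(x)$, i.e.\ $\bigland_{x' \in \{x\}} f(x') = f(x)$, which is clear. The multiplication law is the analogue of the computation implicit in Lemma~\ref{lem:monad-isom-jslat}, replacing joins by meets throughout; since the powerset multiplication is union and meets distribute over the relevant indexing, this goes through verbatim with $\bigland$ in place of $\biglor$.

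Finally I would argue that $\sigma'$ is componentwise a bijection, which with the preceding gives a monad \emph{isomorphism}. The natural candidate for the inverse sends a meet-preserving map $g \colon 2^X \to 2$ to the set $\{\, x \in X \mid g(\delta_x^c) = 0 \,\}$ for a suitably chosen family of predicates witnessing membership (dually to how the join case recovers $S$ from the behaviour of $\sigma_X(S)$ on point-indicators). The main obstacle I anticipate is precisely getting this recovery map right: in the ``may'' case the join formula $\sigma_X(S)(f) = \biglor_{x \in S} f(x)$ is tested against indicator predicates $f = \chi_{\{y\}}$ to detect membership, but for meets the detecting predicates are \emph{co-indicators} (predicates that are $0$ exactly at one point), and I must be careful that the chosen $g \mapsto S$ really inverts $\sigma'$ on both sides rather than merely on one. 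Once the correct pairing of predicates is fixed, injectivity and surjectivity follow from the fact that a meet-preserving map $2^X \to 2$ is determined by the set of co-atoms it sends to $0$, completing the proof that $\sigma'$ is an isomorphism of monads.
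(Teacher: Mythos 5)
Your proposal is correct and takes essentially the same route as the paper: the paper proves this lemma by dualizing its (appendix) proof of Lemma~\ref{lem:monad-isom-jslat}, which is exactly your plan---meet-preservation of $\sigma'_X$, the monad-map axioms with $\bigland$ in place of $\biglor$, and an inverse built from co-indicators $\delta_x^c$ (zero exactly at $x$), dual to the paper's inverse $\xi \mapsto \{\,x \mid \xi(\delta_x)=1\,\}$. The two-sided inverse check you flag as the main obstacle does go through, since every $f \in 2^X$ equals the meet $\bigland_{x\,:\,f(x)=0}\delta_x^c$, so a meet-preserving $\xi \colon 2^X \to 2$ is determined by its values on the co-indicators, giving $\xi = \sigma'_X\bigl(\{\,x \mid \xi(\delta_x^c)=0\,\}\bigr)$.
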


Now we are in a situation that is analogous
to~(\ref{eq:stateEffectForDiamondWithSigmaExplicit}); in particular it
gives us a composite $\Kl(\pow)\xrightarrow{\sigma'}
\Kl\bigl([2^{(\place)}, 2]_{\bigwedge}\bigr)
\xrightarrow{K'}
(\CL_{\bigland})^{\op}$,
where $K'$ is a suitable comparison functor (that is full and
faithful). Working out the concrete definitions we easily observe that
\begin{align*}
\begin{array}{r}
   (K'\co \sigma')_{X,Y}\;=\;(\wpre_{\Box})_{X,Y}
  \;=\; \lambda R.\,\lambda f.\,\lambda x.\, \bigland \set{f(y)}{x \mathbin{R}
 y}
 \\
  \colon \Kl(\pow)(X,Y)\longrightarrow \CL_{\bigland}(2^{Y},2^{X})
  \enspace.
\end{array}
\end{align*}
This leads to the following analogue to
Theorem~\ref{thm:healthiness-nondet-elementary}.
\begin{theorem}[healthiness under the ``must''-nondeterminism]
 \label{thm:healthiness-nondet-elementary-box}
 Let $\varphi\colon 2^{Y}\to 2^{X}$ be a function. The following
	are equivalent.
	\begin{enumerate}
	 \item  There exists
	       $R\subseteq X\times Y$ such that
	       $\varphi=\wpre_{\Box}(R)$. Here $\wpre_{\Box}$ is
		from~(\ref{eq:boxPredTransfConcretely}).
	 \item The map $\varphi$ is meet-preserving. \qed
	\end{enumerate}
\end{theorem}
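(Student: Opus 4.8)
The plan is to mirror exactly the three-step argument of Section~\ref{sub:diamond-modality}, now using the meet-adjunction~(\ref{eq:set-mslat-dualadj}) and the monad isomorphism $\sigma'$ of Lemma~\ref{lem:monad-isom-mslat} in place of their join-counterparts. First I would record that $\Kl(\sigma')$, being induced by an isomorphism of monads, is an isomorphism of Kleisli categories, and that the comparison functor $K'\colon \Kl\bigl([2^{(\place)},2]_{\bigland}\bigr)\to(\CL_{\bigland})^{\op}$ is full and faithful, this being a general property of comparison functors arising from a Kleisli adjunction. Consequently the composite $K'\co\Kl(\sigma')$ is full and faithful, so each of its hom-set actions is a bijection.

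Next I would invoke the computation already carried out in the excerpt, namely that this hom-action coincides with $\wpre_{\Box}$:
\begin{displaymath}
(K'\co\Kl(\sigma'))_{X,Y}=(\wpre_{\Box})_{X,Y}\colon \Kl(\pow)(X,Y)\longrightarrow \CL_{\bigland}(2^Y,2^X)\enspace,
\end{displaymath}
where the codomain $\CL_{\bigland}(2^Y,2^X)$---the homset in $(\CL_{\bigland})^{\op}$ from $2^X$ to $2^Y$---is precisely the set of meet-preserving maps $2^Y\to 2^X$ with respect to the pointwise orders on $2^X$ and $2^Y$. Identifying $R\subseteq X\times Y$ with a Kleisli arrow $X\to\pow Y$ as usual, bijectivity of $(\wpre_{\Box})_{X,Y}$ then delivers both implications at once. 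That the target consists of meet-preserving maps gives $(1)\Rightarrow(2)$: every $\wpre_{\Box}(R)$ lies in $\CL_{\bigland}(2^Y,2^X)$ and is therefore meet-preserving. Surjectivity (coming from fullness) gives $(2)\Rightarrow(1)$: any meet-preserving $\varphi\colon 2^Y\to 2^X$ lies in the image, hence equals $\wpre_{\Box}(R)$ for some $R$.

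The genuinely new content is confined to Lemma~\ref{lem:monad-isom-mslat} and the hom-action computation, both already supplied; the remaining argument is purely formal. I expect the only point demanding care to be the bookkeeping of the contravariant ($\op$) direction: one must check that a morphism $2^X\to 2^Y$ in $(\CL_{\bigland})^{\op}$ is the same datum as a meet-preserving map $2^Y\to 2^X$, and that the order on these powers is the pointwise extension of $0<1$ used in the theorem statement---so that membership in $\CL_{\bigland}(2^Y,2^X)$ matches exactly the notion of meet-preservation named in clause~(2). Once this identification is made explicit, no further calculation is needed, since healthiness falls out as a direct corollary of the fullness and faithfulness of the comparison functor $K'$.
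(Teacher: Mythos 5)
Your proposal is correct and follows exactly the paper's own route: replace the join-adjunction by the meet-adjunction~(\ref{eq:set-mslat-dualadj}), use Lemma~\ref{lem:monad-isom-mslat} to factor $\pow$ through it, identify the hom-action of $K'\co\Kl(\sigma')$ with $\wpre_{\Box}$, and conclude both implications from the fullness and faithfulness of the comparison functor. Your extra care about the contravariant bookkeeping (that a morphism $2^X\to 2^Y$ in $(\CL_{\bigland})^{\op}$ is a meet-preserving map $2^Y\to 2^X$ for the pointwise order) is exactly the identification the paper makes implicitly, so nothing is missing.
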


\section{State-and-Effect Triangles}
\label{sec:general-healthiness}
We continue Section~\ref{sec:nondet} and present a general and
categorical framework for establishing (possibly partial) healthiness
results.  We shall first recall the scheme of \emph{state-and-effect
  triangles}~\cite{Jacobs14CMCS,Jacobs15LMCS,Jacobs15CALCO}, and two
of its ``recipes''~\cite{Jacobs15CALCO,Hasuo15TCS} which are relevant
here.


\subsection{State-and-Effect Triangles}
\label{sub:stateAndEffectTriangles}

\paragraph{State-and-Effect Triangles, in Quantum Logic and
   Program Logic}
In the previous work~\cite{Jacobs15LMCS,Jacobs15CALCO,Jacobs14CMCS} situations called
\emph{state-and-effect triangles} have been found to be fundamental in
various examples of predicate transformers. More specifically, the
triangular scheme
dictates
how
\emph{computations}, \emph{forward state-transformer semantics} and
\emph{backward predicate-transformer semantics} are organized, in terms of
 categories, functors and a dual adjunction.
\begin{equation}\label{eq:stateEffectTriangle}\small
 \vcenter{\xymatrix@R=.8em@C=-2em{
  {\normalsize\left(\begin{array}{c}
     \text{ predicate}
     \\
		\text{transformers}
     \\
		\text{(or ``effects'')}
	 \end{array}\right)^{\op}}
     \ar@/^/[rr]
     \ar@/^/[rr];[]^{}
     \ar@{}[rr]|{\top}
 &&
  {\normalsize\left(\begin{array}{c}
    \text{state}
     \\
		\text{transformers}
     \\
		\text{(or ``states'')}
	 \end{array}\right)}
 \\
 &
  {\normalsize\left(\begin{array}{c}
		      \text{computations}
	 \end{array}\right)
}
     \ar@/_/[ru]_(.6){\quad
  {\footnotesize\begin{array}{c}
   \text{state transformer}
    \\
   \text{semantics}
   \end{array}}
  }
     \ar@/^/[lu]^(.6){\footnotesize\begin{array}{c}
   \text{predicate transformer}
    \\
   \text{semantics}
   \end{array}\quad}
}}
\end{equation}
The name ``state-and-effect triangle'' comes from the
operational study of quantum
logics; here the term
``state'' refers to a \emph{state} of a quantum system---possibly a
\emph{mixed}
state, i.e.\ a probabilistic ensemble $\sum_{i\in
       I}c_{i}\ket{\varphi_{i}}\bra{\varphi_{i}}$ over \emph{pure}
       states---and
 the term ``effect'' refers to the notion in quantum theory, i.e.\
 a convex-linear map from (quantum) states to the values in the interval
 $[0,1]$. The dual adjunction at the top
 of~(\ref{eq:stateEffectTriangle}), in such quantum settings,
 represents the duality between the
 so-called \emph{Schr\"{o}dinger} and \emph{Heisenberg} pictures of
 quantum mechanics.

In our current context of program semantics and program logics,
the term ``state'' in the state-and-effect triangles is more intuitively
understood as \emph{superposed states}, and the term ``effect'' is
understood as \emph{predicates}. See~(\ref{eq:stateEffectTriangle}). We
emphasize, in particular, that the term ``effect'' in the
state-and-effect triangles refers to the quantum notion and has nothing to do with \emph{computational
effects} in functional programs.

It is interesting that the same categorical scheme underlies quantum
logics and program logics. This is essentially because they share the
combination of \emph{logic} and \emph{dynamics}. For example, in quantum mechanics
predicates (or ``effects'') have a distinctively \emph{operational}
flavor---measurements cause projection of quantum states.

\paragraph{An Example}
Let us exhibit an example. It is based on the constructions in
Section~\ref{sub:diamond-modality}, although the triangle itself was not explicit.
\begin{equation}\label{eq:stateEffectTriangleEx1}
 \scriptsize
  \vcenter{\xymatrix@R-2.5em@C-1em{
 {\CL_{\bigvee}^{\op}}
 \ar@/^/[rr]^-{R}
 \ar@{}[rr]|{\top}
  &&
  {\EM({\pow})}
 \ar@/^/[ll]^-{L}
 \\
 &
  \Kl(\pow)
\ar@/^/[ul]^{\wpre_{\Diamond}} \ar@/_/[ur]_{K}
  }}
  \quad\text{ with $\wpre_{\Diamond}\cong LK$.}
\end{equation}
The fact that $\wpre_{\Diamond}$ is a \emph{contravariant} functor
means that the (predicate transformer) semantics expressed by it is a
\emph{backward} one.
The comparison functor $K$ from the Kleisli category
$\Kl{(\pow)}$
to the Eilenberg-Moore category $\EM(\pow)$ acts concretely as
follows:\footnote{The comparison functor $K\colon \Kl(\pow)\to\EM(\pow)$
here is different from the one $K\colon \Kl(\pow)\to
(\CL_{\bigvee})^{\op}$ in Section~\ref{sub:diamond-modality}, although
they arise from the same ``universality'' of $\Kl(\pow)$. Using the same
notation $K$ will not cause confusion.
}
\begin{multline*}
 \bigl(\,
  X\xrightarrow{f}Y\;\text{in $\Kl(\pow)$}
\,\bigr)\enspace,
 \quad\text{i.e.\ } \quad
 \bigl(\,
 X\xrightarrow{f}\pow Y\;\text{in $\Sets$}
\,\bigr)
 \\
 \stackrel{K}{\longmapsto}
 \quad
 \bigl(\,
  \pow X\xrightarrow{Kf}\pow Y, \quad
  (U\subseteq X) \longmapsto\textstyle \bigcup \{\,f(x)\mid x\in U\,\}
\,\bigr)\enspace.
\end{multline*}
The intuition is that $U\in \pow X$ is a ``superposed
state'' that indicates which states  are possibly the current
state. The triangle~(\ref{eq:stateEffectTriangleEx1}) stipulates that
$\wpre_{\Diamond}$ factors through $K$. Finally, the \emph{healthiness}
condition---that the image of $\wpre_{\Diamond}$ is characterized by
join-preservation---translates to the statement that the functor $\wpre_{\Diamond}$
in~(\ref{eq:stateEffectTriangleEx1}) is \emph{full}.

Instances of  state-and-effect triangles abound, from quantum
mechanics to computations with various notions of branching. See
e.g.~\cite{Jacobs15LMCS,Jacobs15CALCO}; later in this paper there will
be further examples, too.

\subsection{The \emph{Dual Adjunction} Recipe
}
\label{sub:dualAdjRecipe}
One ``recipe'' for state-and-effect triangles is introduced
in~\cite{Jacobs15CALCO};
we refer to it as the \emph{dual adjunction recipe}.
It works as follows.
\begin{itemize}
 \item One starts with a monad $T$ on a category $\cat{C}$, and its
       ``factorization''
       \begin{equation}\label{eq:factorizationOfAMonad}
	\scriptsize
	\xymatrix@1@C+3em{
	 {\cat{C}}
	   \ar@(lu,ld)_{T=GF}[]
	   \ar@/^/^-{F}[r]
	   \ar@{}[r]|-{\bot}
        &
	 {\cat{D}^{\op}\mathrlap{\enspace.}}
	   \ar@/^/^-{G}[l]
 	}
       \end{equation}
      We assume that the adjunction is contravariant, for the sake of
       argument.
 \item As is well-known (see e.g.~\cite{MacLane71,BarrW85}), there arise
       two comparison functors $K$ and $R$, induced by the ``universality''
       of the Kleisli and Eilenberg-Moore constructions respectively,  as below.
       \begin{equation}\label{eq:dualityRecipeStep2}
	\scriptsize
	  \vcenter{\xymatrix@R-2em@C+2em{
 {\Kl(T)}
  \ar@/^/@{-->}[r]^-{K}
  &
  {\cat{D}^{\op}}
  \ar@/^/@{-->}[r]^-{R}
&
  {\EM(T)}
 \\
 &
  {\cat{C}}
	\ar@/^/[ul]
	\ar@/^/[ul];[]
	\ar@{}[ul]|{\dashv}
	\ar@/^/[u]
	\ar@/^/[u];[]
	\ar@{}[u]|{\dashv}
	\ar@/^/[ur]
	\ar@/^/[ur];[]
	\ar@{}[ur]|{\dashv}
  }}
       \end{equation}
 \item We organize the three categories on the top in the previous diagram~(\ref{eq:dualityRecipeStep2}) as a triangle. This gives rise to
       the following situation.
       \begin{align*}
	\scriptsize
	  \vcenter{\xymatrix@R-3em@C-1em{
  {\cat{D}^{\op}}
  \ar@{->}[rr]^-{R}
  &
&
  {\EM(T)}
 \\
 &
 {\Kl(T)}
    \ar@/^/[ul]^-{K}
   \ar@/_/[ur]_-{R\co K}
  }}
       \end{align*}
 \item A left adjoint to $R$ will complete a state-and-effect
       triangle. For its existence we assume suitable equalizers in $\cat{D}$
       (hence coequalizers in $\cat{D}^{\op}$) and use a variant
       of Beck's monadicity theorem.
\end{itemize}
 A formal statement is as follows.
\begin{theorem}[the dual adjunction recipe, {\cite[Theorem~1]{Jacobs15CALCO}}]
\label{thm:recipeInJacobs15CALCO}
 Assume an adjunction~(\ref{eq:factorizationOfAMonad}) and a monad
 $T=GF$. Assume further that the category $\cat{D}$ has equalizers of
 reflexive pairs. Then we have a situation
       \begin{align*}\scriptsize
	  \vcenter{\xymatrix@R-2.3em@C-1em{
  {\cat{D}^{\op}}
  \ar@/^/@{->}[rr]^-{R}
 \ar@{}[rr]|-{\top}
  &
&
  {\EM(T)}
 \ar@/^/[ll]^-{L}
 \\
 &
 {\Kl(T)}
    \ar@/^/[ul]^-{K}
   \ar@/_/[ur]_-{R\co K}
  }}
       \end{align*}
where $LRK\cong K$. Moreover $K$ is full and faithful.
\end{theorem}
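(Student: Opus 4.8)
The plan is to obtain $K$ and $R$ from the universal properties of the Kleisli and Eilenberg--Moore resolutions of $T=GF$, and then to build $L$ as a left adjoint to $R$ by transporting the canonical presentation of a $T$-algebra along $K$ and forming a reflexive coequalizer in $\cat{D}^{\op}$; this last step is the crude-monadicity heart of the argument.

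First I would set up the comparisons. The Kleisli comparison $K\colon\Kl(T)\to\cat{D}^{\op}$ acts as $X\mapsto FX$, and on homs it is literally the adjunction bijection
\[
  \Kl(T)(X,Y)=\cat{C}(X,TY)=\cat{C}(X,GFY)\;\cong\;\cat{D}^{\op}(FX,FY),
\]
so $K$ is full and faithful for free---this is exactly where healthiness comes from. Writing $F^{T}\dashv U^{T}$ for the free/forgetful adjunction of $\EM(T)$ and $\varepsilon\colon FG\To\Id$ for the counit of $F\dashv G$, the Eilenberg--Moore comparison $R\colon\cat{D}^{\op}\to\EM(T)$ sends $D\mapsto(GD,\,G\varepsilon_{D})$ and satisfies $U^{T}R=G$ and $RF=F^{T}$.

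The main work is the construction of $L$ with $L\dashv R$. I would begin from the canonical reflexive coequalizer presentation of an algebra $(A,a)$ in $\EM(T)$,
\[
  (T^{2}A,\mu_{TA})\;
    \underset{\mu_{A}}{\overset{Ta}{\rightrightarrows}}\;
  (TA,\mu_{A})\;\xrightarrow{\;a\;}\;(A,a),
\]
whose parallel pair has the common section $T\eta_{A}$. The crucial point is that this pair lies in the image of $R\co K$: regarding $\id_{TA}$ and $\eta_{A}\co a$ as Kleisli arrows $TA\to A$, one checks $\mu_{A}=R\,K(\id_{TA})$ and $Ta=R\,K(\eta_{A}\co a)$. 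I therefore define $L(A,a)$ to be the coequalizer in $\cat{D}^{\op}$ of the pair $K(\id_{TA}),K(\eta_{A}\co a)\colon F(TA)\rightrightarrows FA$. This pair is reflexive (its section comes from $T\eta_{A}$ through the full-faithfulness of $K$), so the coequalizer---an equalizer of a reflexive pair in $\cat{D}$---exists by hypothesis, and its universal property makes $L$ functorial.

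To verify $L\dashv R$ I would assemble the natural bijection $\cat{D}^{\op}(L(A,a),D)\cong\EM(T)((A,a),RD)$ as a composite of three: the universal property of the coequalizer defining $L(A,a)$ (maps out of $L(A,a)$ are maps $FA\to D$ in $\cat{D}^{\op}$ coequalizing the pair); the adjunction $F\dashv G$ (turning these into $\cat{C}$-maps $A\to GD$ with a matching condition); and the presentation of $(A,a)$ together with freeness of $F^{T}A$ (identifying those with algebra maps $(A,a)\to RD$). Matching the coequalizing compatibility coming from the universal property with the genuine Eilenberg--Moore homomorphism condition is the delicate step---and the main obstacle---since it is exactly where the reflexive presentation of $(A,a)$ and the full-faithfulness of $K$ must be combined; existence of the coequalizer itself only needs reflexivity, supplied by $T\eta_{A}$. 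Granting the adjunction, $LRK\cong K$ then follows by Yoneda: since $RF=F^{T}$ and $L\dashv R$,
\[
  \cat{D}^{\op}(LF^{T}X,D)\cong\EM(T)(F^{T}X,RD)\cong\cat{C}(X,GD)\cong\cat{D}^{\op}(FX,D),
\]
whence $LF^{T}\cong F$, and on objects $LRKX=LF^{T}X\cong FX=KX$, naturally in $X$.
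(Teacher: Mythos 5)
Your proposal is correct and takes essentially the same route as the paper, whose own proof is just a deferral to \cite{Jacobs15CALCO} plus the remark that fullness/faithfulness of $K$ is standard: comparison functors $K$ and $R$ from the universal properties of the Kleisli and Eilenberg--Moore resolutions, and $L$ obtained by the Beck-monadicity-style reflexive coequalizer in $\cat{D}^{\op}$ (your pair $K(\id_{TA}), K(\eta_A \co a)$ is exactly the classical pair $\varepsilon_{FA}, Fa \colon F(TA) \rightrightarrows FA$), using the assumed equalizers. Your identification of the delicate step (transposing the coequalizing condition into the Eilenberg--Moore homomorphism condition) and the Yoneda derivation of $LRK \cong K$ match the intended argument; the only point to polish is that the isomorphism $LF^{T} \cong F$ must be checked natural with respect to Kleisli morphisms, not just $\cat{C}$-morphisms, which is most cleanly done by observing that the counit of $L \dashv R$ is invertible on objects $KX$ (a split coequalizer) and invoking its naturality.
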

\begin{proof}
 The constructions have already been sketched in the above;
 see~\cite{Jacobs15CALCO} for details. That the comparison functor $K$
 is full and  faithful is standard; see e.g.~\cite{MacLane71,BarrW85}.
\end{proof}
Note that the dual adjunction
recipe in Theorem~\ref{thm:recipeInJacobs15CALCO}
automatically derives healthiness (that $K$ is full and
faithful). This recipe, though powerful, is also restrictive:
it obviously cannot be used to derive a non-full
predicate transformer semantics. Furthermore, the example
in~(\ref{eq:stateEffectTriangleEx1}) cannot be directly derived using
the dual adjunction recipe:
to do so we would need a slight generalization of the recipe that accommodates
a natural isomorphism $T\cong GF$---in place of the equality $T=GF$---in the
factorization~(\ref{eq:factorizationOfAMonad}). Our
generalized, ``combined'' recipe later in Section~\ref{sec:relativeAlgebraRecipe} will address these
issues.

\subsection{The  \emph{Modality} Recipe
}
\label{sub:set-pt-semantics}
Here we review the other previous recipe that we will be based on; it is
derived from the framework
of \emph{monadic predicate transformers}
from~\cite{Hasuo14,Hasuo15TCS}.\footnote{In~\cite{Hasuo14,Hasuo15TCS}
the framework is $\Pos$-based rather than $\Sets$-based. Here for
simplicity we present a $\Sets$-based framework; our generalization
later will account for the $\Pos$-based one as an instance. }
It is centered around the notion of \emph{modality}---given as an
Eilenberg-Moore
$T$-algebra $\tau\colon T\Omega\to \Omega$ over the domain $\Omega$ of
truth values---and interprets functions of the type $X\to TY$, that is,
\emph{$T$-branching computations}.



\begin{definition}[$\bbP^{\tau}$]
  \label{def:pt-semantics-in-set}
Let  $\tau \colon T {\Omega} \to \Omega$
be a $T$-algebra; it is called  a \emph{modality}. It induces
  a functor $\bbP^{\tau} \colon \Kl(T) \to \Set^{\op}$ that is defined
 by:
 $  \bbP^{\tau} X = \Omega^X$ and
 \begin{align*}
 \bbP^{\tau}\bigl(\,X \xxto{f} Y \text{\;(in $\Kl(T)$)}\,\bigr) &=
\bigl(\,\Omega^Y \xxto{\tau^{\sharp}}
  \Omega^{TY} \xxto{f^*} \Omega^X
\,\bigr)\enspace.
 \end{align*}
  Recall that $f^{*}$ denotes precomposition of $f$.
  Here $\tau^{\sharp}$ is the extension map
  that extends $h \colon Y \to \Omega$ to
  a $T$-algebra morphism $\tau^{\sharp}(h) \colon TY \to \Omega$,  via the
 bijective ``freeness'' correspondence
  \begin{align*}
     Y\longrightarrow \Omega \;\text{in $\Sets$}\quad\Bigl/\!\!\!\!\!\!\Bigr/\quad
     \Bigl(\valg{TTY}{\mu}{TY}\Bigr)
     \longrightarrow
     \Bigl(\valg{T\Omega}{\tau}{\Omega} \Bigr) \;\text{in $\EM(T)$}\enspace.
  \end{align*}

  Note that $\bbP^{\tau}(f)$ can be alternatively described as follows.
  Given $f\colon X\to TY$ (a computation) and $h\colon Y\to \Omega$ (a postcondition), the function
 $\bbP^{\tau}(f)(h)\colon X\to \Omega$ (the weakest precondition) is the composite
  \begin{math}
  X\xrightarrow{f}TY\xrightarrow{Th}T\Omega\xrightarrow{\tau}\Omega
  \end{math}.
\end{definition}

The functor $\bbP^{\tau}$ is the backward predicate transformer
semantics
induced by the modality $\tau$. It sends a state space $X$ to
the set $\Omega^X$ of predicates over $X$;
and a computation $f \colon X \to TY$ is sent to
to the (backward) predicate transformer
$\bbP^{\tau}f \colon {\Omega}^Y \to {\Omega}^X$.
The definition of $\bbP^{\tau}f$ requires a $T$-algebra
structure on $\Omega$; it determines how to interpret $T$-effects, and
hence is called a modality.

\begin{example}\label{ex:modalityAndPredTrans}
 Consider the set $2=\{0,1\}$ of the Boolean truth values; as a
 convention we identify $1$ as ``true.'' There are two
 $\pow$-algebra structures over $2$:
 \begin{align*}
  \tau_{\Box}=\bigwedge\;\colon \pow 2\to 2
  \quad\text{and}\quad
  \tau_{\Diamond}=\bigvee\;\colon \pow 2\to 2\enspace,
 \end{align*}
 where inf and sup refer to the order $0<1$. The former is
 the ``must'' modality, whereas the latter is the ``may'' one.
\end{example}

A modality $\tau\colon T\Omega\to \Omega$  gives rise to
an instance of the state-and-effect triangle.


\begin{theorem}[{the modality
 recipe,~\cite{Hasuo14,Hasuo15TCS}}]\label{thm:theModalityRecipe}
 Let $\tau\colon T\Omega\to \Omega$ be an Eilenberg-Moore algebra.
 It gives rise to the following situation, with $\bbP^{\tau}$
 factorized as  $\bbP^{\tau}\cong [\place, \Omega_{\tau}]_{T}\co K$.
  \begin{equation}\label{eq:theModalityRecipe}
   	\small
    \xymatrix@R-1.8em@C-1em{
      \Set^{\op} \ar@/^/[rr]^-{\Omega_{\tau}^{(\place)}} \ar@{}[rr]|-{\top}
      &&  \EM(T)\ar@/^/[ll]^-{[\place, \Omega_{\tau}]_{T}} \\
      & \Kl(T) \ar@/^/[ul]^-{\bbP^{\tau}} \ar@/_/[ur]_-{K}
    }
  \end{equation}
 Here $\bbP^{\tau}$ is from Definition~\ref{def:pt-semantics-in-set}, and
 $K$ is the comparison functor. The dual adjunction on the top  is
 induced by the dualizing object $\Omega_{\tau}$.\footnote{Recall our
 notational convention that an Eilenberg-Moore algebra $\tau\colon
 T\Omega\to \Omega$ is denoted by $\Omega_{\tau}$. See
 Section~\ref{sec:intro}. }
Specifically, the
 functor $[\place, \Omega_{\tau}]_{T}=\EM(T)(\place, \Omega_{\tau}) $
 is the homset functor; and $\Omega_{\tau}^X$  for a set $X$ is the
 $X$-fold product of the $T$-algebra $\Omega_{\tau}$. The latter is
 explicitly given by the transpose of
 \begin{align*}
  X \stackrel{\id^{\sharp}}{\to} \Set(\Omega^X, \Omega)
  \xrightarrow{\!T_{\Omega^{X},\Omega}\!} \Set\bigl(T(\Omega^X), T{\Omega}\bigr)
  \stackrel{\tau_*}{\to} \Set\bigl(T(\Omega^X), \Omega\bigr)
 \end{align*}
 where $\id^{\sharp}$ is the transpose of
 the identity $\id \colon \Omega^X \to \Omega^X$
 and $T_{\Omega^{X},\Omega}$ is the action of $T$ on homsets.
\qed
\end{theorem}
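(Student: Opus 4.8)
The plan is to split the statement into three essentially separate claims and verify each by unwinding the relevant universal properties: (a) that the $X$-fold product $\Omega_{\tau}^{X}$ exists in $\EM(T)$ and carries exactly the algebra structure named by the displayed transpose; (b) that $\Omega_{\tau}^{(\place)}$ and $[\place,\Omega_{\tau}]_{T}$ form the asserted dual adjunction; and (c) that $\bbP^{\tau}\cong[\place,\Omega_{\tau}]_{T}\co K$. For (a) I would first recall that the forgetful functor $U\colon\EM(T)\to\Sets$ creates limits, so all small products exist in $\EM(T)$ since $\Sets$ is complete; concretely the underlying set of $\Omega_{\tau}^{X}$ is the power $\Omega^{X}$, and its structure map is the unique map $T(\Omega^{X})\to\Omega^{X}$ turning every projection $\pi_{x}\colon\Omega^{X}\to\Omega$ into a $T$-algebra homomorphism, i.e.\ the tuple $\langle\tau\co T\pi_{x}\rangle_{x\in X}$.

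Next I would check that the displayed composite names precisely this structure map. Unwinding the transpose, $\id^{\sharp}$ sends $x\in X$ to the projection $\pi_{x}\in\Set(\Omega^{X},\Omega)$; applying $T_{\Omega^{X},\Omega}$ yields $T\pi_{x}$, and $\tau_{*}$ postcomposes to give $\tau\co T\pi_{x}$. Transposing back, the resulting map $T(\Omega^{X})\to\Omega^{X}$ has $x$-component $\tau\co T\pi_{x}$, which is exactly the product structure; its unit and multiplication laws hold because they hold componentwise for $\Omega_{\tau}$. For (b) the adjunction is then immediate from the same universal property: a homomorphism $A\to\Omega_{\tau}^{X}$ into the $X$-fold product is the same as an $X$-indexed family of homomorphisms $A\to\Omega_{\tau}$, that is, a function $X\to\EM(T)(A,\Omega_{\tau})$, yielding the natural bijection $\EM(T)(A,\Omega_{\tau}^{X})\cong\Set\bigl(X,[A,\Omega_{\tau}]_{T}\bigr)$ that witnesses the dual adjunction induced by the dualizing object $\Omega_{\tau}$; naturality in both variables is routine.

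Finally, for (c) I would use the freeness correspondence of Definition~\ref{def:pt-semantics-in-set}. On objects $KX$ is the free algebra $\mu_{X}\colon T^{2}X\to TX$, so $[\place,\Omega_{\tau}]_{T}(KX)=\EM(T)(KX,\Omega_{\tau})\cong\Set(X,\Omega)=\Omega^{X}=\bbP^{\tau}X$, the iso sending $g\colon TX\to\Omega_{\tau}$ to $g\co\eta_{X}$ with inverse $h\mapsto\tau^{\sharp}(h)=\tau\co Th$. On arrows, for $f\colon X\to TY$ in $\Kl(T)$ one has $Kf=\mu_{Y}\co Tf$ and $[\place,\Omega_{\tau}]_{T}(Kf)$ is precomposition by $Kf$; transporting along the freeness iso and using the Kleisli identity $(\mu_{Y}\co Tf)\co\eta_{X}=f$ together with $\tau^{\sharp}(h)=\tau\co Th$, precomposition by $Kf$ becomes $h\mapsto\tau\co Th\co f$, which is exactly the alternative description of $\bbP^{\tau}(f)(h)$ in Definition~\ref{def:pt-semantics-in-set} — and this computation is precisely the naturality square that upgrades the objectwise isomorphisms to the claimed natural isomorphism of functors $\Kl(T)\to\Set^{\op}$. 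No single step is deep; the genuinely error-prone part will be keeping the variances straight (two contravariant functors, $\Set^{\op}$, and the direction of precomposition) and resisting the tacit identifications warned against in Remark~\ref{rem:pitfalls}, so I would write every transpose and every restriction along $\eta$ out explicitly rather than appealing to $L^{\op}\cong[L,2]$-style shortcuts.
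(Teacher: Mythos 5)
Your proposal is correct, and its core argument for the factorization $\bbP^{\tau}\cong[\place,\Omega_{\tau}]_{T}\co K$ --- the freeness isomorphism $[KX,\Omega_{\tau}]_{T}\cong\Omega^{X}$ on objects, followed by the naturality-square computation that reduces precomposition by $Kf$ to $f^{*}\co\tau^{\sharp}$ using $Kf\co\eta_{X}=f$ and $\tau^{\sharp}(h)=\tau\co Th$ --- is exactly the paper's proof. The only difference is that you additionally spell out parts (a) and (b) (the product algebra structure named by the displayed transpose, and the dual adjunction from the universal property of products), which the paper treats as standard and leaves unproved.
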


\section{The ``Relative Algebra'' Recipe for State-and-Effect Triangles}
\label{sec:relativeAlgebraRecipe}
We unify the two recipes (\emph{dual adjunction} and \emph{modality}) to render a general one. It is called the
\emph{relative algebra recipe}, because of the  role played by
our notion of \emph{relative Eilenberg-Moore algebra}.

\subsection{Relative Eilenberg-Moore Algebra}
We shall introduce the notion of relative Eilenberg-Moore algebra for a
monad $T$ on $\Sets$ and a category $\cat{D}$ with small
products. Notably its carrier object is an object of $\cat{D}$; hence
what we do is arguably to interpret a monad $T$ on $\Sets$ over a
different category $\cat{D}$.

\begin{remark}
 We expect further generalization is possible. The developments below
 bear a strong enriched flavor; and we envisage a general framework
 where a $\V$-monad $T$ on an SMCC $\V$ is interpreted over an arbitrary
 $\V$-enriched category $\cat{D}$. Working out the precise statements is
 future work.
\end{remark}

Let $\cat{D}$ be a category with arbitrary products.
For each object $A\in\cat{D}$ there is a  dual adjunction,
with $A$ playing the role of a dualizing object.
\begin{equation}\label{eq:dual-adj-cotensor-hom}
 \vcenter{  \xymatrix@R-1.2em@C-1em{
    \Set \ar@/^/[rr]^-{A^{(\place)}} \ar@{}[rr]|-{\bot}
    &&  \cat{D}^{\op} \ar@/^/[ll]^-{\cat{D}(\place, A)}
  }
},
\quad\text{by}\quad
\vcenter{\infer={X\longto \cat{D}(B, A) \text{ in $\Sets$}}{B \longto
A^{X}\text{ in $\cat{D}$}}}\enspace.
\end{equation}
This is much like
in~(\ref{eq:set-jslat-dualadj}); recall that $A^{X}$ denotes the
$|X|$-fold
product of $A\in\cat{D}$ (i.e.\ a \emph{cotensor}, in the enriched
terms).

This adjunction induces
a continuation-like monad $\cat{D}(A^{(\place)}, A)$.

\begin{definition}[$\cat{D}$-relative $T$-algebra]
  \label{def:relative-emcat}
  Let $T$ be a monad on $\Sets$, and $\cat{D}$ be a category with small
 products.
A \emph{$\cat{D}$-relative $T$-algebra} is
\begin{displaymath}
 \text{a pair} \quad
 \bigl(\,A,\; \alpha\colon T \to \cat{D}(A^{(\place)}, A)\,\bigr)
\end{displaymath}
 of an object $A\in\cat{D}$ and a monad map $\alpha$ from $T$ to the
 continuation-like monad
 $\cat{D}(A^{(\place)}, A)$ from~(\ref{eq:dual-adj-cotensor-hom}).

 A \emph{morphism} of $\cat{D}$-relative $T$-algebras, say from
 $(A,\alpha)$ to $(B,\beta)$, is a morphism $f \colon A \to B$ in
 $\cat{D}$ such that the following diagram commutes for each $X \in
 \Set$.
      \begin{align*}
        \xymatrix@R-1.8em@C+2em{
          A^X \ar[r]^-{\alpha_X^{\sharp}} \ar@<-0.3em>[d]_-{f_*}
          & A^{TX} \ar@<-0.5em>[d]^-{f_*} \\
          B^X \ar[r]^-{\beta_X^{\sharp}}
          & B^{TX}
        }
      \end{align*}
  Here $\alpha^{\sharp}_{X}$ is induced canonically from
 $\alpha_{X}\colon TX\to \cat{D}(A^{X}, A)$, via the bijective
 correspondence in~(\ref{eq:dual-adj-cotensor-hom}) (the
 universality of products).

 $\cat{D}$-relative $T$-algebras, together with
 their morphisms, form a category $\EM(T; \cat{D})$ that we call the
 \emph{$\cat{D}$-relative Eilenberg-Moore category of $T$}. It comes with
 an obvious forgetful functor to $\cat{D}$:
 \begin{equation}\label{eq:forgetfulFunctorForRelativeAlg}
   U_{\cat{D}}\; \colon\; \EM(T;\cat{D}) \longto \cat{D} \enspace.
 \end{equation}
\end{definition}

There are many questions to be asked about relative algebras, for
example if $U_{\cat{D}}$ has a left adjoint. These questions are left as
future work: they seem to be best studied in conjunction with Lawvere
theories, and doing so deviates from the current paper's focus.

 We shall still
show that relative algebras  generalize the usual
notion of Eilenberg-Moore algebra.
We rely on the following folklore result on: algebras, and monad maps to
continuation-like monads. It is used e.g.\ in~\cite{Kelly80,KellyP93}.
\begin{proposition}
  \label{prop:emalg-monadmap}
  Let $\cat{C}$ be a complete category and $T$ be a monad on
 $\cat{C}$. For each object $A\in \cat{C}$,
  there is a canonical bijective correspondence between:
 1)   $T$-algebras $\hat{\alpha} \colon T A \to A$ with $A$ being their carrier objects;
  and 2) monad maps $\alpha \colon T \to A^{\cat{C}(\place, A)}$.

  The concrete correspondence is given by:
  \begin{math}
  \alpha_{X}=\bigl\langle \, TX\xrightarrow{Tf}TA\xrightarrow{\hat{\alpha}}A\,\bigr\rangle_{f\in
   \cat{C}(X,A)}
  \end{math} and
  \begin{math}
    \hat{\alpha} =
    \bigl(\,
   TA\xrightarrow{\alpha_{A}}A^{\cat{C}(A,A)}\xrightarrow{\pi_{\id_{A}}}
    A
\,\bigr)
  \end{math}.
  Moreover, $f \colon A \to B$ is a $T$-algebra morphism from $(A,\hat{\alpha})$ to $(B,\hat{\beta})$
  if and only if the following diagram commutes.
  \begin{align*}
    \xymatrix@R-2em@C+1em{
      \cat{C}(X, A) \ar[r]^-{\alpha^{\sharp}_X} \ar[d]^-{f_*}
      & \cat{C}(TX, A) \ar[d]^-{f_*} \\
      \cat{C}(X, B) \ar[r]^-{\beta^{\sharp}_X}
      & \cat{C}(TX, B)
    }
  \end{align*}
 Here $a^{\sharp}_{X}$ is defined analogously to
 Definition~\ref{def:relative-emcat}.
 \qed
\end{proposition}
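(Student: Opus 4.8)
The plan is to route everything through one naturality principle: a natural transformation valued in the continuation monad $\Phi := A^{\cat{C}(\place, A)}$ is completely pinned down by its component at $\id_{A}$. First I would invoke the universal property of the power (available since $\cat{C}$ is complete), $\cat{C}(B, A^{S}) \cong \Set(S, \cat{C}(B, A))$, which exhibits $\Phi$ as the monad of the dual adjunction $\cat{C}(\place, A) \dashv A^{(\place)}\colon \Set^{\op} \to \cat{C}$, and use it to identify a natural transformation $\alpha \colon T \To \Phi$ with a family $\alpha^{\sharp}_{X} \colon \cat{C}(X, A) \to \cat{C}(TX, A)$ natural in $X$. Spelling out naturality of $\alpha^{\sharp}$ against an arbitrary $f \colon X \to A$ (taking the source object to be $A$ and the chosen element $\id_{A}$) forces $\alpha^{\sharp}_{X}(f) = \hat{\alpha} \after Tf$, where $\hat{\alpha} := \alpha^{\sharp}_{A}(\id_{A}) = \pi_{\id_{A}} \after \alpha_{A}$; conversely every $\hat{\alpha} \colon TA \to A$ yields a natural family by the same formula. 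This already establishes that the two displayed assignments are mutually inverse bijections between natural transformations $T \To \Phi$ and morphisms $TA \to A$, before any axioms are imposed.

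Next I would verify that, under this bijection, the two monad-map axioms for $\alpha$ correspond exactly to the two Eilenberg-Moore axioms for $\hat{\alpha}$. The unit $\eta^{\Phi}$ of $\Phi$ has $\sharp$-form the identity, so the axiom $\alpha \after \eta^{T} = \eta^{\Phi}$ unfolds, using naturality of $\eta^{T}$, to $\hat{\alpha} \after \eta^{T}_{A} \after f = f$ for all $f$; evaluating at $f = \id_{A}$ (and conversely post-composing) this is precisely the unit law $\hat{\alpha} \after \eta^{T}_{A} = \id_{A}$. For the multiplication I would again observe that both sides of $\alpha \after \mu^{T} = \mu^{\Phi} \after (\alpha * \alpha)$ are natural transformations $TT \To \Phi$, hence—by the same naturality principle, now for the functor $TT$—determined by their induced maps $TTA \to A$. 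The left-hand side gives $\hat{\alpha} \after \mu^{T}_{A}$. For the right-hand side I would substitute the explicit multiplication $\mu^{\Phi} = A^{(\place)}\,\epsilon\,\cat{C}(\place, A)$ read off from the adjunction, whose only relevant consequence is $\pi_{\id_{A}} \after \mu^{\Phi}_{A} = \pi_{\pi_{\id_{A}}}$; feeding in $\pi_{h} \after \alpha_{Y} = \hat{\alpha} \after Th$ and $\pi_{\id_{A}} \after \alpha_{A} = \hat{\alpha}$, the projections telescope to $\hat{\alpha} \after T\hat{\alpha}$. Hence the multiplication axiom is equivalent to associativity $\hat{\alpha} \after \mu^{T}_{A} = \hat{\alpha} \after T\hat{\alpha}$.

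For the morphism statement I would feed the concrete forms $\alpha^{\sharp}_{X}(g) = \hat{\alpha} \after Tg$ and $\beta^{\sharp}_{X}(h) = \hat{\beta} \after Th$ into the displayed square. Commutativity then reads $f \after \hat{\alpha} \after Tg = \hat{\beta} \after Tf \after Tg$ for all $g \colon X \to A$, which—by evaluating at $g = \id_{A}$ in one direction and post-composing with $Tg$ in the other—is exactly the $T$-algebra morphism law $f \after \hat{\alpha} = \hat{\beta} \after Tf$.

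The single genuinely fiddly step is the multiplication axiom, and within it the correct bookkeeping of $\mu^{\Phi}$ from the adjunction before the projections collapse; all three reductions are otherwise the same naturality argument applied in turn. Conceptually the whole statement is the universal property of a codensity monad: $\Phi = A^{\cat{C}(\place, A)}$ is $\mathrm{Ran}_{A} A$ for the object $A \colon \onecat \to \cat{C}$, and monad maps into $\mathrm{Ran}_{G} G$ classify lifts of $G$ along the forgetful functor $\EM(T) \to \cat{C}$—here a single $T$-algebra structure carried by $A$. I would present the explicit computation above, but one could alternatively cite this universal property to obtain the bijection, its naturality, and the two axiom equivalences in one stroke.
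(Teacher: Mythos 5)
Your proposal is correct. Note first that the paper itself offers no proof of Proposition~\ref{prop:emalg-monadmap}: it is stated as a folklore result, with the references~\cite{Kelly80,KellyP93} standing in for an argument, and no proof appears in the appendix either, so there is no ``paper route'' to compare against --- what you have written is the standard argument underlying that folklore, and it is sound at every step. The transposition $\alpha \mapsto \alpha^{\sharp}$ across the power adjunction, followed by Yoneda at the object $A$ and the element $\id_A$, correctly yields the unrestricted bijection between natural transformations $T \To A^{\cat{C}(\place,A)}$ and morphisms $TA \to A$, with exactly the formulas displayed in the statement; the unit axiom collapses to $\hat{\alpha} \circ \eta^T_A = \id_A$ via naturality of $\eta^T$; your identity $\pi_{\id_A} \circ \mu^{\Phi}_A = \pi_{\pi_{\id_A}}$ is the correct reading of $\mu^{\Phi} = G\epsilon F$ for this dual adjunction, and the two substitutions of $\alpha^{\sharp}_Y(h) = \hat{\alpha} \circ Th$ then telescope the multiplication axiom to associativity $\hat{\alpha} \circ \mu^T_A = \hat{\alpha} \circ T\hat{\alpha}$, using (as you rightly note) that both sides of that axiom are natural transformations $TT \To \Phi$ and hence determined by their value at $A$ followed by $\pi_{\id_A}$; the morphism square is disposed of by the same two moves. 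Your closing remark that all of this is the universal property of the codensity monad $\mathrm{Ran}_A A$ --- monad maps into it classify liftings of $A \colon \onecat \to \cat{C}$ along the forgetful functor $\EM(T) \to \cat{C}$ --- is precisely the content of the works the paper cites, so an appeal to that property would have been an equally legitimate short proof. One cosmetic slip: in the paper's own conventions ($f^*$ for precomposition, $f_*$ for postcomposition), the operation you twice call ``post-composing'' (recovering the law at a general $f$, resp.\ $g$, from the law at $\id_A$ by composing with $f$, resp.\ $Tg$, on the right) is precomposition; the mathematics is unaffected.
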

This result and Definition~\ref{def:relative-emcat} yields the
following.
There we also need the isomorphism
 $\Sets(A^{X},A)\cong A^{\Sets(X,A)}$ that identifies homsets and
 cotensors. This is available since $\Sets$ is self-enriched.
\begin{corollary}\label{cor:SetsRelativeIsOrdinary}
 Let $T$ be a monad on $\Sets$. We have an isomorphism
 $\EM(T;\Sets)\cong\EM(T)$.
\qed
\end{corollary}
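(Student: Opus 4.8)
The plan is to exhibit both categories as bookkeeping devices for the same data, using Proposition~\ref{prop:emalg-monadmap} for objects and a direct comparison of the two morphism conditions for arrows. Recall that an object of $\EM(T;\Sets)$ is a pair $(A,\alpha)$ with $\alpha$ a monad map from $T$ to the continuation-like monad $\Sets(A^{(\place)},A)$ arising from the dual adjunction~(\ref{eq:dual-adj-cotensor-hom}) in the case $\cat{D}=\Sets$, whereas by Proposition~\ref{prop:emalg-monadmap} with $\cat{C}=\Sets$ (its completeness hypothesis holding) an object of $\EM(T)$ is equivalently a monad map from $T$ to $A^{\Sets(\place,A)}$. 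Hence it suffices to identify these two target monads and to match up the two notions of morphism.

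First I would observe that in $\Sets$ the $X$-fold product $A^{X}$ of $A$ is literally the homset $\Sets(X,A)$, naturally in $X$; feeding this into both expressions yields exactly the self-enrichment isomorphism $\Sets(A^{X},A)\cong A^{\Sets(X,A)}$ recalled just before the statement. Thus, evaluated at $X$, the monad $\Sets(A^{(\place)},A)$ is $\Sets(\Sets(X,A),A)$ and the monad $A^{\Sets(\place,A)}$ is again $\Sets(\Sets(X,A),A)$, so both are the double-dualization (continuation) monad determined by the dualizing object $A$. The one step that genuinely needs checking is that this identification is an \emph{isomorphism of monads}, i.e.\ that it carries unit and multiplication of one to those of the other. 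Both monad structures are induced by homming into $A$---the former via~(\ref{eq:dual-adj-cotensor-hom}), the latter via the adjunction implicit in Proposition~\ref{prop:emalg-monadmap}---so this is a routine diagram chase on the two continuation adjunctions; I expect it to be the only nontrivial point, and thus the (modest) main obstacle.

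Granting the monad isomorphism, composing it with the object-level bijection of Proposition~\ref{prop:emalg-monadmap} gives a bijection between $\Sets$-relative $T$-algebras $(A,\alpha)$ and ordinary $T$-algebras $\hat{\alpha}\colon TA\to A$ with the same carrier. For morphisms, note that on both sides an arrow is simply a function $f\colon A\to B$. Rewriting $A^{X}$ and $A^{TX}$ as $\Sets(X,A)$ and $\Sets(TX,A)$ (and likewise for $B$), the defining commuting square of a $\cat{D}$-relative morphism in Definition~\ref{def:relative-emcat} becomes verbatim the square characterizing $T$-algebra morphisms in Proposition~\ref{prop:emalg-monadmap}; here I would also check that the two transposes $\alpha^{\sharp}_{X}$ agree, which is immediate since both are formed by the same universal-property-of-products correspondence of~(\ref{eq:dual-adj-cotensor-hom}). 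Finally, since identities and composition are inherited from $\Sets$ on both sides, these bijections are functorial in the evident way and assemble into the desired isomorphism of categories $\EM(T;\Sets)\cong\EM(T)$.
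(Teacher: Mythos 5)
Your proof is correct and follows essentially the same route as the paper: the paper derives the corollary precisely by combining Proposition~\ref{prop:emalg-monadmap} with Definition~\ref{def:relative-emcat}, using the self-enrichment isomorphism $\Sets(A^{X},A)\cong A^{\Sets(X,A)}$ to identify the two continuation-like monads, exactly as you do. Your write-up merely makes explicit the routine checks (the monad-structure comparison and the matching of the two morphism squares) that the paper leaves implicit.
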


\begin{remark}
 There is a Lawvere theory-like intuition behind
 Proposition~\ref{prop:emalg-monadmap} (from which we came up with
 Definition~\ref{def:relative-emcat}). Given an algebra $\hat{\alpha} \colon
 TA\to A$,
 the corresponding monad map $\alpha_{X}\colon TX\to A^{\cat{C}(X,A)}$
 is understood as: ``given an algebraic/syntactic term $t\in TX$ with variables
 from $X$, and a valuation $V\colon X\to A$, the element $\alpha_{X}(t)(V)\in
 A$ is how $t$ is interpreted under $V$ (interpreting variables) and
 $\hat{\alpha}$
 (interpreting algebraic operations).''
 \end{remark}

 Lawvere theories are interpretation-free---hence
 ``syntactic''---presentations of algebraic structures. They are
 therefore subject to interpretation in \emph{any} category $\cat{D}$
 with finite products; see e.g.~\cite{hyland2007category}. In
 contrast, monads---although their equivalence to Lawvere theories is
 well-known, see e.g.~\cite{LackP09}---are always tied to their base
 category. Our notion of $\cat{D}$-relative $T$-algebra is how to
 ``interpret'' the algebraic structure embodied as a monad $T$ (on
 $\Sets$) on another category $\cat{D}$.\footnote{We speculate that,
   when a monad $T$ is bounded, our notion of relative $T$-algebra
   coincides with the models of the Lawvere theory $\cat{L}_{T}$
   induced by $T$. We note however that relative $T$-algebras can be
   defined even for unbounded $T$. We need this feature, too, since we
   deal with unbounded monad like the powerset monad $\pow$. }

\begin{example}
  \label{ex:example-of-relative-emcat}
  Let $\List$ denote the list monad on $\Sets$,
 whose Eilenberg-Moore algebras are monoids.
  For $\cat{D} = \Top$, the category of topological spaces and continuous
 maps,  the  category $\EM(\List; \Top)$
  is exactly the category of \emph{topological monoids}.
  Similarly for $\cat{D} = \Pos$, the category of posets and monotone maps, the  category
  $\EM(\List; \Pos)$ is that of \emph{ordered monoids}.
  The same phenomena can be observed for many other monads $T$ and
 categories $\cat{D}$.
\end{example}

We exhibit a  \emph{change-of-base} result.
In the case of Lawvere theories,
we can map a $\cat{D}$-model of a theory to a $\cat{D'}$-model
along a (finite) product-preserving functor
$H \colon \cat{D} \to \cat{D'}$.

\begin{proposition}
  \label{prop:change-of-base}
  Let $\cat{D}$, $\cat{D'}$ be categories with small products and
  $H \colon \cat{D} \to \cat{D'}$ be a product-preserving functor.
  Then $H$ canonically lifts to a functor
  $\bar{H} \colon \EM(T; \cat{D}) \to \EM(T; \cat{D'})$,
 with
  \begin{equation}\label{eq:change-of-base-lifting}
\vcenter{    \xymatrix@R-1.8em{
      \EM(T;\cat{D}) \ar[r]^-{\bar{H}} \ar[d]_-{U_{\cat{D}}}
      & \EM(T;\cat{D'}) \ar[d]^-{U_{\cat{D'}}} \\
      \cat{D} \ar[r]^-{H}
      & \cat{D'} \mathrlap{\enspace.}
    }
}  \end{equation}
  The functor $\bar{H}$ preserves arbitrary products. Moreover,
  if $H$ is faithful, so is $\bar{H}$.
 \qed
\end{proposition}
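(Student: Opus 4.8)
The plan is to define $\overbar{H}$ on objects by transporting the relative-algebra structure along the canonical product-comparison isomorphisms coming from $H$, and then to check each asserted property in turn. For $A\in\cat{D}$ and $X\in\Sets$ write $\theta^{A}_{X}\colon (HA)^{X}\iso H(A^{X})$ for the canonical isomorphism induced by product-preservation of $H$ (here $A^{X}$ and $(HA)^{X}$ are the $\abs{X}$-fold products, i.e.\ cotensors), characterized by $H(\pi^{A^{X}}_{x})\after\theta^{A}_{X}=\pi^{(HA)^{X}}_{x}$ for each $x\in X$. Given a $\cat{D}$-relative $T$-algebra $(A,\alpha)$, I would set $\overbar{H}(A,\alpha)=(HA,\overbar{\alpha})$, where for $t\in TX$ the map $\overbar{\alpha}_{X}(t)\colon (HA)^{X}\to HA$ is $H(\alpha_{X}(t))\after\theta^{A}_{X}$. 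Equivalently $\overbar{\alpha}=\kappa^{A}\after\alpha$, where $\kappa^{A}_{X}\colon \cat{D}(A^{X},A)\to\cat{D'}\bigl((HA)^{X},HA\bigr)$ sends $g\mapsto H(g)\after\theta^{A}_{X}$.

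Since monad maps compose and $\alpha$ is a monad map by hypothesis, it suffices to show that $\kappa^{A}$ is itself a monad map between the two continuation-like monads $\cat{D}(A^{(\place)},A)$ and $\cat{D'}\bigl((HA)^{(\place)},HA\bigr)$ of~(\ref{eq:dual-adj-cotensor-hom}). Naturality of $\kappa^{A}$ in $X$ and compatibility with the units are immediate from the defining property of $\theta$: the unit at $X$ sends $x$ to the projection $\pi^{A^{X}}_{x}$, and $\kappa^{A}_{X}(\pi^{A^{X}}_{x})=H(\pi^{A^{X}}_{x})\after\theta^{A}_{X}=\pi^{(HA)^{X}}_{x}$. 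The compatibility with the multiplications is the crux: unfolded through the adjunction~(\ref{eq:dual-adj-cotensor-hom}), it becomes a diagram in $\cat{D'}$ whose commutativity rests on the naturality of $\theta$ (in both $A$ and the index set) and on the coherence of the comparison isomorphisms under iterated cotensoring, e.g.\ $H\bigl((A^{X})^{Y}\bigr)\cong\bigl((HA)^{X}\bigr)^{Y}$. I expect this to be the main obstacle, although it is purely formal once the relevant naturality squares for $\theta$ are spelled out.

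For morphisms I would set $\overbar{H}(f)=Hf$. A short computation using the universal property of $(HA)^{TX}$ gives $\overbar{\alpha}^{\sharp}_{X}=(\theta^{A}_{TX})^{-1}\after H(\alpha^{\sharp}_{X})\after\theta^{A}_{X}$, so the defining square of Definition~\ref{def:relative-emcat} for $Hf\colon(HA,\overbar{\alpha})\to(HB,\overbar{\beta})$ is obtained by applying $H$ to the corresponding square for $f$ and conjugating by the isomorphisms $\theta$, invoking the naturality $\theta^{B}_{X}\after (Hf)_{*}=H(f_{*})\after\theta^{A}_{X}$; hence $Hf$ is indeed a morphism of $\cat{D'}$-relative $T$-algebras. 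Functoriality of $\overbar{H}$ and commutativity of~(\ref{eq:change-of-base-lifting}) are then immediate, since $\overbar{H}$ agrees with $H$ on underlying objects and morphisms while the forgetful functors merely extract carriers.

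Finally, products in $\EM(T;\cat{D})$ are created by $U_{\cat{D}}$: the carrier of a product of $\{(A_{i},\alpha^{i})\}$ is $\prod_{i}A_{i}$, with structure map induced componentwise from the $\alpha^{i}$ so that the projections become morphisms, and likewise in $\cat{D'}$. Preservation of products by $\overbar{H}$ then reduces to preservation of products by $H$ together with the same coherence of $\theta$ used above; concretely one checks that $\overbar{H}$ sends the limiting product cone to a limiting cone. For faithfulness, note that $U_{\cat{D}}$ is always faithful (a morphism of relative algebras is a $\cat{D}$-morphism satisfying a property), and~(\ref{eq:change-of-base-lifting}) gives $U_{\cat{D'}}\after\overbar{H}=H\after U_{\cat{D}}$; if $H$ is faithful then the right-hand composite is faithful, whence $\overbar{H}$ is faithful as well.
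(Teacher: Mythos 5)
Your proposal is correct and follows essentially the same route as the paper's own proof: your $\kappa^{A}$ is exactly the paper's natural transformation $\psi^{A}=(\theta^{-1})^{*}\after H$ built from the product-comparison isomorphism, the object part is likewise defined as the composite monad map $\psi\after\alpha$, morphisms are sent to $Hf$, and faithfulness is deduced from the commuting square~(\ref{eq:change-of-base-lifting}) together with faithfulness of $U_{\cat{D}}$. Even the level of detail matches: the paper also defers the multiplication-compatibility of $\psi$ to ``some diagram chasing,'' which is precisely the step you flag as the formal crux.
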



\auxproof{
The following  alternative characterization of
 $\cat{D}$-relative $T$-algebras will be used later.
Its proof is straightforward.
\begin{proposition}
  A natural transformation
  $\tau \colon T \to \cat{D}(A^{(\place)}, A)$
  is a monad map if and only if
  $\alpha^{\sharp} \colon A^{X} \to A^{TX}$
  makes the following diagrams commute: \memo{really?}
  \begin{align*}
    \xymatrix@R-1.8em{
      A^X \ar@{=}[rd] \ar[r]^-{\alpha_X^{\sharp}}
      & A^{TX} \ar@<-0.4em>[d]^-{\eta_X^*} \\
      & A^X
    }
   \quad
    \xymatrix@R-1.8em{
      A^X \ar[r]^-{\alpha^{\sharp}_X}
        \ar@<-0.4em>[d]_-{\alpha^{\sharp}_X}
      & A^{TX} \ar@<-0.8em>[d]^-{\mu^*_X} \\
      A^{TX} \ar[r]^-{\alpha^{\sharp}_{TX}}
      & A^{T^2 X} \\
    }
  \end{align*}
\end{proposition}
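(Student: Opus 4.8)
The plan is to transport each of the two defining conditions of a monad map across the adjunction bijection $\sharp$ of~(\ref{eq:dual-adj-cotensor-hom}), writing $\alpha$ for the natural transformation $\tau$ and $S=\cat{D}(A^{(\place)},A)$ for the induced continuation-like monad. Since $\sharp$ is a bijection between maps $X\to\cat{D}(B,A)$ in $\Sets$ and maps $B\to A^{X}$ in $\cat{D}$, the unit law $\alpha\circ\eta^{T}=\eta^{S}$ is equivalent to its transposed form, and likewise for the multiplication law; I would then check that these transposed forms are exactly the left and right diagrams, respectively. Throughout I use that $X$, $TX$ and $T^{2}X$ are genuine sets, so a morphism of $\cat{D}$ into a product such as $A^{TX}$ is determined by its composites with the projections $\pi_{t}$ ($t\in TX$); this licenses a coordinatewise argument even though $\cat{D}$ is an arbitrary category with small products.

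First I would record the data of $S$ coming from~(\ref{eq:dual-adj-cotensor-hom}). The unit is $\eta^{S}_{X}(x)=\pi_{x}\colon A^{X}\to A$, and the multiplication is $\mu^{S}_{X}=(\epsilon_{A^{X}})^{*}$, precomposition with the counit $\epsilon_{A^{X}}\colon A^{X}\to A^{SX}$ characterised by $\pi_{g}\circ\epsilon_{A^{X}}=g$ for $g\in\cat{D}(A^{X},A)=SX$. I also record the defining property of the transpose, $\pi_{t}\circ\alpha_{X}^{\sharp}=\alpha_{X}(t)$, and the reindexing identity $\pi_{x}\circ f^{*}=\pi_{f(x)}$ for a precomposition map $f^{*}$. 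From these, a one-line coordinatewise computation gives $(\alpha_{X}\circ\eta^{T}_{X})^{\sharp}=\eta_{X}^{*}\circ\alpha_{X}^{\sharp}$ and $(\eta^{S}_{X})^{\sharp}=\id_{A^{X}}$; hence the unit law holds iff $\eta_{X}^{*}\circ\alpha_{X}^{\sharp}=\id_{A^{X}}$, which is precisely the left diagram.

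The multiplication law is the heart of the argument. Using the form $\alpha_{X}\circ\mu^{T}_{X}=\mu^{S}_{X}\circ S\alpha_{X}\circ\alpha_{TX}$, I would transpose both sides to maps $A^{X}\to A^{T^{2}X}$. The left-hand side transposes directly to $\mu_{X}^{*}\circ\alpha_{X}^{\sharp}$ by the same reindexing computation as in the unit case. For the right-hand side the expected answer is $\alpha_{TX}^{\sharp}\circ\alpha_{X}^{\sharp}$, and the linchpin---the step I expect to be the main obstacle---is the identity $A^{\alpha_{X}}\circ\epsilon_{A^{X}}=\alpha_{X}^{\sharp}$, which splices together the transpose, the counit of the dual adjunction, and the functor $A^{(\place)}$ applied to $\alpha_{X}$. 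This identity is verified coordinatewise from $\pi_{t}\circ A^{\alpha_{X}}=\pi_{\alpha_{X}(t)}$ and $\pi_{\alpha_{X}(t)}\circ\epsilon_{A^{X}}=\alpha_{X}(t)$; unwinding $S\alpha_{X}$ as precomposition with $A^{\alpha_{X}}$ and $\mu^{S}_{X}$ as precomposition with $\epsilon_{A^{X}}$ then collapses the right-hand side's transpose to $\alpha_{TX}^{\sharp}\circ\alpha_{X}^{\sharp}$. Since $\sharp$ is bijective, the multiplication law is equivalent to $\mu_{X}^{*}\circ\alpha_{X}^{\sharp}=\alpha_{TX}^{\sharp}\circ\alpha_{X}^{\sharp}$, the right diagram, and combining the two equivalences yields the proposition. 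The care needed is mostly in keeping the double-dualization bookkeeping of $\mu^{S}$ straight, and in invoking naturality of $\alpha$ where the two forms $S\alpha_{X}\circ\alpha_{TX}$ and $\alpha_{SX}\circ T\alpha_{X}$ of the horizontal composite are interchanged.
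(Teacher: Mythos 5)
Your proof is correct and is exactly the ``straightforward'' transposition argument the paper alludes to but never writes out: you transpose the unit and multiplication laws across the product adjunction $\Sets(X,\cat{D}(B,A))\cong\cat{D}(B,A^{X})$ and check coordinates, and your two pivotal observations---the identity $A^{\alpha_{X}}\circ\epsilon_{A^{X}}=\alpha_{X}^{\sharp}$ unwinding $\mu^{S}_{X}\circ S\alpha_{X}$, and the naturality interchange $S\alpha_{X}\circ\alpha_{TX}=\alpha_{SX}\circ T\alpha_{X}$ needed to match the standard form of the multiplication law---are precisely the points that make the computation close. Your coordinatewise calculation moreover establishes the equivalence on the nose (the transpose of the right-hand side is literally $\alpha_{TX}^{\sharp}\circ\alpha_{X}^{\sharp}$, not merely something agreeing after further composition), which settles the authors' own marginal \emph{really?}; the only blemish is inherited from the statement itself, namely its silent renaming of $\tau$ to $\alpha$, which you handle explicitly.
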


\vspace{-1.6em}
\qed
}

\subsection{A State-and-Effect Triangle via Relative Algebras}
\label{sub:state-effect-triangle-relative}

In the ``modality'' recipe in Section~\ref{sub:set-pt-semantics},
the key to  a dual adjunction
between $\Set$ and $\EM(T)$
was to use a $T$-algebra as a dualizing object.
We shall now  extend this from $\Sets$ to a general category $\cat{D}$,
using
 a $\cat{D}$-relative $T$-algebra
in place of a $T$-algebra.

\begin{theorem}[the relative algebra recipe]
  \label{thm:state-effect-triangle-relative}
  Let $\OmegaD\in\cat{D}$ be an object in a complete category $\cat{D}$,
and
\begin{displaymath}
  \bar{\Omega} \;=\; \bigl(\,\OmegaD, \,\tau\colon T\to
  \cat{D}(\OmegaD^{(\place)}, \OmegaD)\,\bigr)
\end{displaymath}
be a $\cat{D}$-relative
 $T$-algebra.
 This yields a state-and-effect triangle:
  \begin{equation}\label{eq:state-effect-triagle-relative}
\vcenter{
 \xymatrix@R-1.2em@C-1em{
      \cat{D}^{\op} \ar@/^/[rr]^-{[\place, \bar{\Omega}]_\cat{D}}
        \ar@{}[rr]|-{\top}
      &&  \EM(T)\ar@/^/[ll]^-{[\place, \bar{\Omega}]_{T}} \\
      & \Kl(T) \ar@/^/[ul]^-{\bbP^{\tau}} \ar@/_/[ur]_-{K}
      & 
    }
}
\quad\text{with $\bbP^{\tau} \cong [-, \bar{\Omega}]_{T} \co K$.}
  \end{equation}
  Here $K$ is the comparison functor.
  The other three functors
are defined as follows.
\begin{itemize}
 \item ($[\place, \bar{\Omega}]_{\cat{D}}$)
 For each $D \in \cat{D}$,
  the object $[D, \bar{\Omega}]_{\cat{D}}$
  is the  set $\cat{D}(D, \OmegaD)$ equipped with
  a $T$-algebra structure
   $\zeta_D$  defined by
  \begin{equation}
    {T\bigl(\cat{D}(D, \OmegaD)\bigr)} \stackrel{\tau}{\to}
    \cat{D}(\OmegaD^{\cat{D}(D, \OmegaD)}, \OmegaD)
    \stackrel{(\id^{\sharp})^*}{\to}
    \cat{D}(D, \OmegaD) \enspace .
  \end{equation}
  The last arrow precomposes $\id^{\sharp}\colon D\to
       \OmegaD^{\cat{D}(D, \OmegaD)}$.

  For a $\cat{D}$-morphism $k \colon D \to E$, the $T$-algebra morphism
  $k^* = [k, \bar{\Omega}]_{\cat{D}}
  \colon [E, \bar{\Omega}]_{\cat{D}} \to [D, \bar{\Omega}]_{\cat{D}}$
  is defined by the precomposition map
  $k^* \colon \cat{D}(E, \OmegaD) \to \cat{D}(D, \OmegaD)$
  between the carrier sets.
 \item  ($[\place, \bar{\Omega}]_{T}$)
Given a $T$-algebra $A_a = (A, a \colon TA \to A)$,
  the $\cat{D}$-object $[A_a, \bar{\Omega}]_{T}$
  is defined as the equalizer of
  $a^*, \tau^{\sharp}_A \colon \OmegaD^A \tto \OmegaD^{TA}$; see the top
	row of~(\ref{eq:201601171013}) below.
 Given a morphism $f \colon A_a \to B_b$ of $T$-algebras,
  a $\cat{D}$-morphism
  $f^* = [f, \bar{\Omega}]_{T} \colon [B_b, \bar{\Omega}]_{T} \to [A_a, \bar{\Omega}]_{T}$
  is induced by the universality of an equalizer, as below.
  \begin{equation}\label{eq:201601171013}
    \vcenter{
      \xymatrix@R-1.5em{
        [A_a, \bar{\Omega}]_{T} \ar[r]^{\mathrm{eq}}
        & \OmegaD^A \ar@<.5ex>[r]^{a^*} \ar@<-.5ex>[r]_{\tau_A^{\sharp}}
        & \OmegaD^{TA} \\
        [B_b, \bar{\Omega}]_{T} \ar[r]^{\mathrm{eq}} \ar@{-->}[u]^{f^*}
        & \OmegaD^B \ar@<.5ex>[r]^{b^*} \ar@<-.5ex>[r]_{\tau_B^{\sharp}}
                    \ar[u]^{f^*}
        & \OmegaD^{TB} \mathrlap{\enspace .} \ar[u]_{(Tf)^*}
      }
    }
  \end{equation}
 \item ($\bbP^{\tau}$)
 $\bbP^{\tau} \colon \Kl(T) \to \cat{D}^{\op}$ is
  given by: $\bbP^{\tau}(X) = \OmegaD^X$, and
  \begin{align*}
    \bbP^{\tau}\bigl(\,X \xxto{f} TY\text{\;(in $\Kl(T)$)}\,\bigr)
      = \bigl(\,\OmegaD^{Y} \stackrel{\tau^{\sharp}}{\to}
          \OmegaD^{TY}  \stackrel{f^*}{\to} \OmegaD^{X}\,\bigr)  .
   \tag*{\qed}
  \end{align*}
\end{itemize}
\end{theorem}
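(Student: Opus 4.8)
The plan is to verify the four ingredients of the triangle in~(\ref{eq:state-effect-triagle-relative}) one at a time---the two functors along the top, the adjunction between them, and the diagonal $\bbP^{\tau}$---and finally the factorization $\bbP^{\tau}\cong[\place,\bar{\Omega}]_{T}\co K$. The guiding observation is that the dual adjunction~(\ref{eq:dual-adj-cotensor-hom}) with dualizing object $\OmegaD$ induces a continuation-like monad $S=\cat{D}(\OmegaD^{(\place)},\OmegaD)$ on $\Sets$, and that by Definition~\ref{def:relative-emcat} the relative algebra $\bar{\Omega}$ is nothing but a monad map $\tau\colon T\to S$. Thus the whole triangle can be read as the dual adjunction recipe (Theorem~\ref{thm:recipeInJacobs15CALCO}) for $S$, transported along $\tau$; I carry out the verification directly in order to keep the explicit descriptions of the functors intact.

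First I would dispose of the two top functors. The functor $[\place,\bar{\Omega}]_{\cat{D}}$ is the composite of the Eilenberg--Moore comparison functor $\cat{D}^{\op}\to\EM(S)$ for the monad $S$ with the restriction-of-scalars functor $\EM(S)\to\EM(T)$ induced by $\tau$; both are functorial, and the composite structure map on $\cat{D}(D,\OmegaD)$ is exactly $\zeta_{D}=(\id^{\sharp})^{*}\co\tau_{\cat{D}(D,\OmegaD)}$, so no separate check of the algebra laws is needed. For $[\place,\bar{\Omega}]_{T}$ the defining equalizer exists because $\cat{D}$ is complete, and functoriality comes from the universal property once the two squares in~(\ref{eq:201601171013}) commute: the lower-left square is the image under $\OmegaD^{(\place)}$ of the algebra-morphism identity $f\co a=b\co Tf$, and the square involving $\tau^{\sharp}$ is naturality of $\tau$ (hence of $\tau^{\sharp}$).

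Next I would establish the adjunction as a natural bijection $\cat{D}(D,[A_{a},\bar{\Omega}]_{T})\cong\EM(T)\bigl(A_{a},[D,\bar{\Omega}]_{\cat{D}}\bigr)$, which exhibits $[\place,\bar{\Omega}]_{T}$ as left adjoint to $[\place,\bar{\Omega}]_{\cat{D}}$ (matching the $\top$ in the diagram). Unfolding the left-hand side by the universal property of the equalizer presents it as the set of $k\colon D\to\OmegaD^{A}$ with $a^{*}\co k=\tau_{A}^{\sharp}\co k$; transposing $k$ across the product universality of~(\ref{eq:dual-adj-cotensor-hom}) yields a function $\tilde{k}\colon A\to\cat{D}(D,\OmegaD)$. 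The crux is to show that this equalizer condition on $k$ is equivalent to $\tilde{k}$ being a $T$-algebra morphism $(A,a)\to(\cat{D}(D,\OmegaD),\zeta_{D})$. This is where I expect the main obstacle: it amounts to reconciling the two formulations of ``morphism into the dualizing object''---the equalizer one inherited from the dual adjunction recipe and the $\zeta_{D}$ one inherited from the modality recipe---and it requires careful bookkeeping of the transposes $(\place)^{\sharp}$ and of the definition of $\zeta_{D}$. Naturality of the bijection in both $D$ and $A_{a}$ is then routine.

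Finally I would treat $\bbP^{\tau}$ and the factorization. Functoriality of $\bbP^{\tau}$ reproduces the computation behind the modality recipe (Theorem~\ref{thm:theModalityRecipe}) with $\OmegaD$ in place of a set of truth values: preservation of identities reduces to $\eta_{X}^{*}\co\tau_{X}^{\sharp}=\id$, which is the monad-map unit law, while preservation of Kleisli composition uses the monad-map multiplication law together with naturality of $\tau$. For the factorization I would avoid a direct equalizer computation and instead argue by Yoneda. Applying the adjunction just established to the free algebra $KX=(TX,\mu_{X})$, then the free--forgetful adjunction for $T$, and then product universality gives, naturally in $D$, $\cat{D}\bigl(D,[KX,\bar{\Omega}]_{T}\bigr)\cong\EM(T)\bigl(KX,[D,\bar{\Omega}]_{\cat{D}}\bigr)\cong\Set\bigl(X,\cat{D}(D,\OmegaD)\bigr)\cong\cat{D}(D,\OmegaD^{X})$. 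Hence $[KX,\bar{\Omega}]_{T}\cong\OmegaD^{X}=\bbP^{\tau}X$, and tracking the morphism actions through the same bijections upgrades this to the natural isomorphism $\bbP^{\tau}\cong[\place,\bar{\Omega}]_{T}\co K$, completing the triangle.
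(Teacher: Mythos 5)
Your proposal is correct, and its core coincides with the paper's own proof. The step you flag as the main obstacle---that $k\colon D\to\OmegaD^{A}$ equalizes $a^{*}$ and $\tau_{A}^{\sharp}$ precisely when its transpose $\tilde{k}\colon A\to\cat{D}(D,\OmegaD)$ is a $T$-algebra morphism into $\bigl(\cat{D}(D,\OmegaD),\zeta_{D}\bigr)$---is in fact the \emph{entire} content of the paper's appendix proof, and it goes through in two lines of bookkeeping: transposing the equalizer condition across~(\ref{eq:dual-adj-cotensor-hom}) rewrites it as $\tilde{k}\co a=k^{*}\co\tau_{A}$, and one always has $k^{*}\co\tau_{A}=\zeta_{D}\co T\tilde{k}$, by naturality of $\tau$ followed by the triangle identity of the dual adjunction; so no obstacle actually materializes. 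Where you genuinely differ from the paper is in the surrounding scaffolding, which the paper leaves implicit: (i) you obtain $[\place,\bar{\Omega}]_{\cat{D}}$---functoriality and the algebra laws for $\zeta_{D}$ included---for free, as the Eilenberg--Moore comparison functor of the continuation monad $\cat{D}(\OmegaD^{(\place)},\OmegaD)$ followed by restriction of scalars along the monad map $\tau$; (ii) you derive the factorization $\bbP^{\tau}\cong[\place,\bar{\Omega}]_{T}\co K$ by a Yoneda argument on free algebras, a point the paper's proof of this theorem omits altogether (your argument parallels how the paper proves the corresponding claim of Theorem~\ref{thm:theModalityRecipe}, via freeness of $KX$). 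The paper's proof buys minimality, isolating adjointness as the single non-routine point; yours buys a modular, self-contained account in which every ingredient of the triangle is explicitly discharged.
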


\begin{remark}
  The notations $[A_a, \bar{\Omega}]_{T}$ and $[D,
  \bar{\Omega}]_{\cat{D}}$ are sort of abusive, because $\bar{\Omega}$
  is not a $T$-algebra or a $\cat{D}$-object. The notations reflect
  the dual nature of a $\cat{D}$-relative $T$-algebra, in the sense
 that is precisely described in the above. Later in
 Section~\ref{subsec:relative-alg-concrete} we develop this point (and
 notations) more systematically.
\end{remark}

The third \emph{relative algebra} recipe in
Theorem~\ref{thm:state-effect-triangle-relative} combines the previous
two recipes. Indeed, the \emph{modality} recipe in
Section~\ref{sub:set-pt-semantics} is a special case, much like usual
$T$-algebras are special cases of relative $T$-algebras
(Corollary~\ref{cor:SetsRelativeIsOrdinary}).  The current
generalization allows us to have a  category $\cat{D}$---possibly other than
$\Sets$---at the top-left of a state-and-effect
triangle. See~(\ref{eq:theModalityRecipe})
and~(\ref{eq:state-effect-triagle-relative}).

Regarding the relationship to the \emph{dual adjunction} recipe in
Section~\ref{sub:dualAdjRecipe}---that automatically ensures
healthiness, see Theorem~\ref{thm:recipeInJacobs15CALCO}---we have the following cornerstone result
towards analysis of general healthiness conditions. Note that the assumption
$T=GF$ in~(\ref{eq:factorizationOfAMonad}) translates, in the context of Theorem~\ref{thm:state-effect-triangle-relative}, to the condition
that $\tau\colon T\to
\cat{D}(\OmegaD^{(\place)}, \OmegaD)$ is the identity.




\begin{theorem}[categorical (partial) healthiness condition]\label{thm:partialHealthiness}
Let $X,Y\in\Sets$. In the setting of
 Theorem~\ref{thm:state-effect-triangle-relative}:
 \begin{itemize}
  \item if $\tau_{Y}\colon TY\to
  \cat{D}(\OmegaD^{Y}, \OmegaD)$ is injective,
	 the functor $\bbP^{\tau}$'s action
\begin{math}
   \bbP^{\tau}_{XY} \colon \Kl(T)(X, Y) \to \cat{D}(\OmegaD^Y, \OmegaD^X)
\end{math}
   is injective;
  \item if $\tau_{Y}$ is surjective, so is $   \bbP^{\tau}_{XY} $.
\end{itemize}
 It follows that, if $\tau\colon T\to \cat{D}(\OmegaD^{(\place)},
 \OmegaD)$ is a natural isomorphism, the functor $\bbP^{\tau}$ is full
 and faithful.
\qed
\end{theorem}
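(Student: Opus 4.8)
The plan is to transpose the entire situation across the dual adjunction~(\ref{eq:dual-adj-cotensor-hom}) and thereby reduce $\bbP^{\tau}_{XY}$ to a postcomposition map between $\Sets$-homsets, for which injectivity and surjectivity are elementary. Instantiating~(\ref{eq:dual-adj-cotensor-hom}) with $B=\OmegaD^{Y}$ and $A=\OmegaD$ supplies a bijection $\Phi\colon \cat{D}(\OmegaD^{Y},\OmegaD^{X})\iso \Sets\bigl(X,\cat{D}(\OmegaD^{Y},\OmegaD)\bigr)$ sending $g\colon\OmegaD^{Y}\to\OmegaD^{X}$ to $x\mapsto \pi_{x}\co g$, where $\pi_{x}$ is the $x$-th projection of the $\abs{X}$-fold product $\OmegaD^{X}$. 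The core computation is to show that $\Phi$ carries $\bbP^{\tau}_{XY}$ to postcomposition with $\tau_{Y}$. Recalling $\Kl(T)(X,Y)=\Sets(X,TY)$ and $\bbP^{\tau}(f)=f^{*}\co\tau^{\sharp}_{Y}$ for $f\colon X\to TY$, I would chase projections: $f^{*}$ is reindexing along $f$, so $\pi_{x}\co f^{*}=\pi_{f(x)}$, while the defining property of the transpose gives $\pi_{t}\co\tau^{\sharp}_{Y}=\tau_{Y}(t)$ for $t\in TY$. Composing yields $\pi_{x}\co\bbP^{\tau}(f)=\tau_{Y}(f(x))$, that is, $\Phi(\bbP^{\tau}(f))=\tau_{Y}\co f$. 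Hence the square relating $\bbP^{\tau}_{XY}$, the identity on $\Sets(X,TY)$, the bijection $\Phi$, and the postcomposition map $(\tau_{Y})_{*}\colon \Sets(X,TY)\to\Sets\bigl(X,\cat{D}(\OmegaD^{Y},\OmegaD)\bigr)$ commutes.

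Since $\Phi$ is a bijection, $\bbP^{\tau}_{XY}$ is injective (resp.\ surjective) exactly when $(\tau_{Y})_{*}$ is. For the first bullet, postcomposition with an injective function is injective on homsets, so injectivity of $\tau_{Y}$ yields injectivity of $\bbP^{\tau}_{XY}$. For the second bullet, a surjection in $\Sets$ is a split epimorphism (by the axiom of choice); choosing pointwise preimages shows $(\tau_{Y})_{*}$ is surjective whenever $\tau_{Y}$ is, giving surjectivity of $\bbP^{\tau}_{XY}$. Finally, if $\tau$ is a natural isomorphism then each component $\tau_{Y}$ is a bijection of sets, hence simultaneously injective and surjective; by the two bullets $\bbP^{\tau}_{XY}$ is then bijective for all $X,Y$, which is precisely the statement that $\bbP^{\tau}$ is faithful and full.

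I expect the only genuinely delicate point to be the projection chase establishing $\Phi(\bbP^{\tau}(f))=\tau_{Y}\co f$---in particular keeping the contravariant reindexing identity $\pi_{x}\co f^{*}=\pi_{f(x)}$ and the transpose identity $\pi_{t}\co\tau^{\sharp}_{Y}=\tau_{Y}(t)$ consistent with the conventions fixed in~(\ref{eq:dual-adj-cotensor-hom}). Everything after that reduction is standard set-theoretic bookkeeping, the only non-formal ingredient being the appeal to choice in the surjectivity half.
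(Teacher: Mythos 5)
Your proof is correct and is essentially the paper's own argument: the paper factors $\bbP^{\tau}$ as $K \co \Kl(\tau)$ through the Kleisli category of the continuation-like monad $\cat{D}(\OmegaD^{(\place)},\OmegaD)$ and invokes full-faithfulness of the comparison functor $K$, whose action on hom-sets is exactly your adjunction bijection $\Phi$, so both proofs reduce $\bbP^{\tau}_{XY}$ to the postcomposition map $(\tau_{Y})_{*}\colon \Sets(X,TY)\to\Sets\bigl(X,\cat{D}(\OmegaD^{Y},\OmegaD)\bigr)$ and conclude with the same mono/split-epi facts (including the appeal to choice). Your projection chase merely verifies inline what the paper cites as the standard fact that the Kleisli comparison functor is full and faithful.
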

 \noindent
 The last corollary accounts for (part of)
 Theorem~\ref{thm:healthiness-nondet-elementary}, generalizing the
 arguments in Section~\ref{sub:diamond-modality}. The proof of
 Theorem~\ref{thm:partialHealthiness} is like the proof
 of~\cite[Theorem~IV.3.1]{MacLane71}, giving the correspondence
 between monic/epic (co)units and fullness/faithfulness of adjoints.


\subsection{Relative Algebras over a Concrete Category}
\label{subsec:relative-alg-concrete}
We have obtained the third, unified recipe for state-and-effect
triangles in Theorem~\ref{thm:state-effect-triangle-relative}, together
with a general healthiness result
(Theorem~\ref{thm:partialHealthiness}). The remaining piece towards the
full coverage of healthiness results like
Theorem~\ref{thm:healthiness-nondet-elementary} is: how specific predicate
transformer semantics---specified by a concrete modality (like
$\wpre_{\Diamond}$ via $\Diamond$)---is related to constructs in the
general recipe in Theorem~\ref{thm:state-effect-triangle-relative}.

To fill this missing piece we shall study
a situation where $\cat{D}$ is \emph{concrete},
by which we specifically mean that: 1) we have a faithful ``forgetful''
functor $V \colon \cat{D} \to \Set$; and 2) the functor $V$ preserves
small limits. Examples are: the Eilenberg-Moore category $\EM(T)$ of a monad $T$
on $\Sets$; the categories $\Top$ and $\Pos$; and other categories of
 ``sets with additional
structures.''

By  Proposition~\ref{prop:change-of-base} and
Corollary~\ref{cor:SetsRelativeIsOrdinary},
the functor $V \colon \cat{D} \to \Set$ lifts
to $\bar{V} \colon \EM(T; \cat{D}) \to \EM(T)$.
This gives rise to the following.
\begin{equation}\label{eq:collaborationOfDAndEMT}
\vcenter{  \xymatrix@R-3.5em@C-1em{
    & \EM(T; \cat{D}) \ar[ld]_-{U_{\cat{D}}}
          \ar[rd]^-{\bar{V}} \\
    \cat{D} \ar[rd]_-{V}
    && \EM(T) \ar[ld]^-{U} \\
    & \Set
  }
}
\quad\text{Here $U_{\cat{D}}$ is from~(\ref{eq:forgetfulFunctorForRelativeAlg}).}
\end{equation}
The diagram~(\ref{eq:collaborationOfDAndEMT}) is skewed (compared
    to~(\ref{eq:change-of-base-lifting})) to convey the intuition that:
an object in $\EM(T; \cat{D})$ is a set equipped both with a
    $\cat{D}$-structure
and with a $T$-algebra structure, in a compatible manner.
The developments below are aimed at formalizing this intuition.

\begin{notation}[$A, A_{\cat{D}}, A_{\hat{\alpha}},\bar{A}$]
\label{notation:theMessyNotation}
From now on we adopt a  notational convention
of writing: $A$ for a set; $A_{\cat{D}}$ for an object in $\cat{D}$ such
that $V(A_{\cat{D}})=A$; $A_{\hat{\alpha}}$ for a $T$-algebra
 $\hat{\alpha}\colon TA\to A$ (hence $U(A_{\hat{\alpha}})=A$); and
 $\bar{A}\in \EM(T; \cat{D})$ for a relative $T$-algebra such that
 $V U_{\cat{D}}\bar{A}=U\bar{V}\bar{A}=A$. See below, and compare it to~(\ref{eq:collaborationOfDAndEMT}).
 \begin{equation}\label{eq:collaborationOfDAndEMTObj}
 \vcenter{  \xymatrix@R-3em@C+2em{
     & \bar{A} \ar@{|->}[dd]^(.3){U\bar{V}=VU_{\cat{D}}} \\
     A_{\cat{D}} \ar@{|->}[rd]_-{V}
     && A_{\hat{\alpha}} \ar@{|->}[ld]^-{U} \\
     & A
   }
 }\end{equation}
 This convention, though admittedly confusing at first sight, follows some literature
 on dualities (such as~\cite{ClarkD98}) and allows us to describe our
 technical developments in a succinct  manner. We emphasize that
\begin{center}
  \underline{\emph{fixing $A$ does not fix $A_{\cat{D}}$, $A_{\hat{a}}$ or $\bar{A}$.}}
\end{center}
\end{notation}



The following characterization of $\cat{D}$-relative $T$-algebras---it
assumes that  $\cat{D}$ is concrete---embodies the intuition that they
are ``$T$-algebras whose algebraic structures are compatible with
$\cat{D}$-structures.'' This is much like a topological monoid is a
monoid whose multiplication is continuous; see
Example~\ref{ex:example-of-relative-emcat}.
\begin{proposition}
  \label{prop:lifting-condition}
  Let $A_{\hat{\alpha}} = \p[\big]{A, \hat{\alpha} \colon TA \to A}$
  be a $T$-algebra and
  $A_{\cat{D}}\in\cat{D}$ be such that $V A_{\cat{D}} = A$.
  The following are equivalent.
  \begin{enumerate}
   \item  There exists
  a $\cat{D}$-relative $T$-algebra $\bar{A}$
  such that $\bar{V} (\bar{A}) = A_{\hat{\alpha}}$
  and $U_{\cat{D}} \bar{A} = A_\cat{D}$. See below.
   \begin{equation}\label{eq:compatibility-D-EMT}
  \vcenter{  \xymatrix@R-3.3em{
      & \bar{A} \ar@{|-->}[ld]_-{U_{\cat{D}}} \ar@{|-->}[rd]^-{\bar{V}} \\
      A_{\cat{D}} \ar@{|->}[rd]_-{V}
      && A_{\hat{\alpha}} \ar@{|->}[ld]^-{U} \\
      & A
    }
  }\end{equation}
   \item  The following
  \emph{lifting condition} holds:
  the monad map $\alpha \colon T \to \Set(A^{(\place)}, A)$
  induced by $\hat{\alpha}$ (Proposition~\ref{prop:emalg-monadmap})
	  factors through
  $V \colon \cat{D}(A_{\cat{D}}^{(\place)}, A_{\cat{D}}) \to
	  \Set(A^{(\place)}, A)$, as in
  \begin{equation}\label{eq:lifting-condition}
\vcenter{    \xymatrix@R-2em{
      & \cat{D}(A_{\cat{D}}^{X}, A_{\cat{D}}) \ar[d]^-{V} \\
      TX \ar[r]_-{\alpha_X} \ar@{-->}[ur]^-{\bar{\alpha}_X}
      & \Set(A^{X}, A)\mathrlap{\enspace.}
    }
}  \end{equation}
   The latter is more concretely stated as follows:
    for each  $X \in \Set$ and $t \in TX$,
     the function $(\alpha_X)(t) \colon A^X \to A$
  lifts to a $\cat{D}$-morphism
  $(\bar{\alpha}_X)(t) \colon A_{\cat{D}}^X \to A_{\cat{D}}$.
  \end{enumerate}
  If the conditions hold, we say that
  $A_{\cat{D}}$ and $A_{\hat{\alpha}}$ are \emph{compatible}.
  \qed
\end{proposition}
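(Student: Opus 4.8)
The plan is to translate both conditions into statements about monad maps and then let the faithfulness of $V$ do the work. First I would invoke Proposition~\ref{prop:emalg-monadmap} (with $\cat{C}=\Sets$, using the self-enrichment isomorphism $\Sets(A^{X},A)\cong A^{\Sets(X,A)}$) to identify the $T$-algebra $A_{\hat\alpha}$ with the monad map $\alpha\colon T\to\Sets(A^{(\place)},A)$ appearing in~(\ref{eq:lifting-condition}). By Definition~\ref{def:relative-emcat}, a $\cat{D}$-relative $T$-algebra with carrier $A_{\cat{D}}$ is exactly a monad map $\bar\alpha\colon T\to\cat{D}(A_{\cat{D}}^{(\place)},A_{\cat{D}})$. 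The last ingredient is that, since $V$ preserves small products, it induces a \emph{morphism of the two continuation-like monads}, i.e.\ a monad map $V\colon \cat{D}(A_{\cat{D}}^{(\place)},A_{\cat{D}})\to\Sets(A^{(\place)},A)$ whose component at $X$ sends a $\cat{D}$-morphism $A_{\cat{D}}^{X}\to A_{\cat{D}}$ to its underlying function $A^{X}\to A$ — this is precisely the vertical arrow of~(\ref{eq:lifting-condition}), it is the content already packaged in Proposition~\ref{prop:change-of-base}, and each of its components is injective because $V$ is faithful. Under these identifications $\bar{V}$ acts by $\bar{V}(A_{\cat{D}},\bar\alpha)=(A,\,V\circ\bar\alpha)$.

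For (1)$\Rightarrow$(2) I would write the given $\bar{A}$ as $(A_{\cat{D}},\bar\alpha)$, so that $U_{\cat{D}}\bar A=A_{\cat{D}}$ holds automatically. By the description of $\bar V$ above, together with the bijection of Proposition~\ref{prop:emalg-monadmap}, the hypothesis $\bar V(\bar A)=A_{\hat\alpha}$ unwinds to the single equation $V\circ\bar\alpha=\alpha$. Reading this in components at $X$ and evaluating at $t\in TX$ gives $V\bigl((\bar\alpha_{X})(t)\bigr)=(\alpha_{X})(t)$, which is exactly the asserted lifting of $(\alpha_X)(t)\colon A^X\to A$ to the $\cat{D}$-morphism $(\bar\alpha_X)(t)\colon A_{\cat{D}}^X\to A_{\cat{D}}$; hence the triangle~(\ref{eq:lifting-condition}) commutes.

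For (2)$\Rightarrow$(1) the lifting condition says each $\alpha_X$ factors through the injective component $V$, so there is a \emph{unique} function $\bar\alpha_X\colon TX\to\cat{D}(A_{\cat{D}}^X,A_{\cat{D}})$ with $V\circ\bar\alpha_X=\alpha_X$. The hard part will be promoting this bare family to an actual monad map, since condition (2) as stated asserts only a pointwise lifting and says nothing about naturality or the monad-map axioms — and this is where faithfulness of $V$ is indispensable. To get naturality I would examine each naturality square for $\bar\alpha$, post-compose it with the injective $V$, and observe that it becomes the corresponding naturality square for $\alpha=V\circ\bar\alpha$ (using that $V$ intertwines the reindexing maps of the two continuation monads, i.e.\ $V(A_{\cat{D}}^{g})=A^{g}$); the latter commutes, so injectivity of $V$ forces the original square to commute. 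The unit and multiplication axioms for $\bar\alpha$ reduce in the same way: applying the componentwise-injective monad map $V$ turns them into the corresponding identities for $\alpha$, which hold because $\alpha$ is a monad map and $V$ is one. Thus $\bar A:=(A_{\cat{D}},\bar\alpha)$ is a genuine $\cat{D}$-relative $T$-algebra, with $U_{\cat{D}}\bar A=A_{\cat{D}}$ by construction and $\bar V(\bar A)=A_{\hat\alpha}$ from $V\circ\bar\alpha=\alpha$. The whole argument hinges on that one structural fact — that a product-preserving faithful $V$ yields a componentwise-injective morphism of continuation monads — so the only real care needed is in checking that this $V$ genuinely respects the continuation units and multiplications, which is the product-preservation content of Proposition~\ref{prop:change-of-base}.
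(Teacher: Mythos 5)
Your proof is correct, and it takes precisely the route the paper leaves implicit: the paper states Proposition~\ref{prop:lifting-condition} without proof, treating it as a direct consequence of Proposition~\ref{prop:emalg-monadmap}, Definition~\ref{def:relative-emcat}, and the monad map $\psi$ constructed in the proof of Proposition~\ref{prop:change-of-base} (whose components are injective when $V$ is faithful). Your key observation --- that the pointwise lifts $\bar{\alpha}_X$ automatically assemble into a monad map because naturality, the unit law, and the multiplication law can each be checked after post-composition with the injective components of $\psi$ and then cancelled --- is exactly the content that justifies the paper's omission of the argument.
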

This result means: to render $A\in\Sets$ into
 a $\cat{D}$-relative $T$-algebra, it suffices to find a $T$-algebra
 structure and a
$\cat{D}$-structure and then to check the above lifting condition.
The lifting condition in Proposition~\ref{prop:lifting-condition}
is a direct generalization of the \emph{monotonicity condition}
(precisely its pointwise version) used in \cite{Hasuo14, Hasuo15TCS};
when $\cat{D} = \Pos$ we get the original monotonicity condition.

Under a further assumption that $T$ is finitary, we can restrict the
required check to finite sets.


\begin{proposition}
Assume the setting of Proposition~\ref{prop:lifting-condition} and $T$ is finitary.
 $A_{\hat{\alpha}}$ and $A_{\cat{D}}$ are compatible
  if and only if the lifting condition~(\ref{eq:lifting-condition}) holds
 for any natural number $n$ in place of $X$.
 \qed
\end{proposition}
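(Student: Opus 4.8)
The plan is to reduce the ``check for all sets $X$'' condition in Proposition~\ref{prop:lifting-condition} to a ``check for all finite sets $n$'' condition, by exploiting that $T$ is finitary and that $V \colon \cat{D} \to \Set$ preserves limits (hence, being faithful and limit-preserving, creates and reflects the relevant structure). The forward direction is trivial: if $A_{\hat\alpha}$ and $A_{\cat{D}}$ are compatible, then the lifting condition~(\ref{eq:lifting-condition}) holds for \emph{all} sets $X$, so in particular for every finite set $n$. The substance is the converse: assuming the lift $\bar\alpha_n$ exists for each finite $n$, I must produce the lift $\bar\alpha_X$ for an arbitrary set $X$.

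\textbf{Reducing an arbitrary element to a finite one.}
First I would fix $X \in \Set$ and $t \in TX$, and use finitarity of $T$ to write $t$ as coming from a finite subset. Concretely, since $T$ is finitary, the canonical map $\colim_{n \subseteq X}\, Tn \to TX$ (the colimit taken over finite subsets $n \hookrightarrow X$, or equivalently finite cardinals with a map $n \to X$) is an isomorphism; so there is a finite set $n$, a function $u \colon n \to X$, and an element $s \in Tn$ with $t = (Tu)(s)$. The goal is to exhibit $\alpha_X(t) \colon A^X \to A$ as a $\cat{D}$-morphism $A_{\cat{D}}^X \to A_{\cat{D}}$. The key is naturality of $\alpha$ as a monad map: the square relating $\alpha_n$ and $\alpha_X$ along $u$ gives
\begin{equation}\label{eq:naturality-of-alpha-finitary}
  \alpha_X\bigl((Tu)(s)\bigr) \;=\; \alpha_X(t)
  \;=\; \bigl(u^* \colon A^X \to A^n\bigr) \,;\, \alpha_n(s)\enspace,
\end{equation}
where $u^* = A^u$ is precomposition with $u$, and I read the right-hand side as first reindexing $A^X \to A^n$ and then applying $\alpha_n(s) \colon A^n \to A$. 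By hypothesis $\alpha_n(s)$ lifts to a $\cat{D}$-morphism $\bar\alpha_n(s) \colon A_{\cat{D}}^n \to A_{\cat{D}}$. Moreover $u^* \colon A_{\cat{D}}^X \to A_{\cat{D}}^n$ is a genuine $\cat{D}$-morphism, being the canonical projection/reindexing map between cotensors (products) in $\cat{D}$; this uses that $\cat{D}$ has and $V$ preserves these products, so that $A_{\cat{D}}^X$ and $A_{\cat{D}}^n$ are the $\cat{D}$-products whose underlying sets are $A^X$ and $A^n$. Composing, $A_{\cat{D}}^X \xrightarrow{u^*} A_{\cat{D}}^n \xrightarrow{\bar\alpha_n(s)} A_{\cat{D}}$ is a $\cat{D}$-morphism whose image under $V$ is exactly the right-hand side of~(\ref{eq:naturality-of-alpha-finitary}), i.e.\ $\alpha_X(t)$. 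This provides the desired lift $\bar\alpha_X(t)$, establishing the lifting condition for $X$ and hence compatibility via Proposition~\ref{prop:lifting-condition}.

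\textbf{Main obstacle.}
The routine points (naturality square, product maps being $\cat{D}$-morphisms) are immediate from the definitions and from $V$ preserving limits. The one genuinely delicate step is \emph{well-definedness}: the factorization $t = (Tu)(s)$ is not unique, so I must check that the lift $\bar\alpha_X(t)$ does not depend on the chosen presentation of $t$ through a finite set. The clean way to handle this is to observe that the lift is uniquely determined once it exists, precisely because $V$ is \emph{faithful}: any two $\cat{D}$-morphisms $A_{\cat{D}}^X \to A_{\cat{D}}$ mapping under $V$ to the single function $\alpha_X(t)$ must be equal. Thus existence of \emph{some} lift for $\alpha_X(t)$ is all that is needed, and the finite presentation merely witnesses that existence; no coherence check across different factorizations is required. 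I would therefore organize the argument so that faithfulness of $V$ does the bookkeeping, which is exactly the hypothesis available on a concrete $\cat{D}$.
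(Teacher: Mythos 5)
Your proof is correct and is essentially the argument the paper intends: the paper states this proposition without proof, but its own proof of Lemma~\ref{lem:finitary-factorization} uses exactly your technique --- factor $t \in TX$ as $t = (Tu)(s)$ with $u \colon n \to X$ from a finite set (finitarity of $T$), compose the assumed lift $\bar{\alpha}_n(s)$ with the $\cat{D}$-product reindexing morphism whose image under $V$ is $u^{*}$, and use naturality of $\alpha$ together with faithfulness of $V$ to dispose of any well-definedness concerns. Nothing is missing.
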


\noindent
It follows that, in case the monad $T$ is  induced by some known
algebraic specification $(\Sigma,E)$,   checking
 the lifting condition can further be restricted to ``basic
operations''  $\sigma \in \Sigma$.
For instance, a monoid $(\Omega, \star, e)$ and
 $\OmegaD \in \cat{D}$
satisfy the lifting condition
if and only if both
the multiplication $\star \colon \Omega \times \Omega \to \Omega$ and
the unit $e \colon 1 \to \Omega$ lift to $\cat{D}$-morphisms.


We can exploit the construction in Theorem~\ref{thm:state-effect-triangle-relative}
when $\cat{D}$ is concrete.
Assume we have a  functor $V \colon \cat{D} \to \Set$ that is faithful
and limit-preserving.
Given a $\cat{D}$-object $D$, the $T$-algebra $[D, \bar{\Omega}]_{\cat{D}}$
can be understood as a homset $\cat{D}(D, \bar{\Omega})$
with the \emph{pointwise} $T$-algebra structure along $\Omega_{\tau}$
i.e.\ we regard it as a subalgebra of the product $\Omega_{\tau}^{VD}$.
Similarly we can see $[A_a, \bar{\Omega}]_{T}$ as the $\cat{D}$-object
whose ``carrier set'' is $\EM(T)(A_a, \Omega_{\tau})$ with the pointwise $\cat{D}$-structure.
It is precisely stated as follows.

\begin{proposition}
  \label{prop:state-pred-duality-forgetful}
  In the situation of Theorem~\ref{thm:state-effect-triangle-relative}
  we have $U \after [D, \bar{\Omega}]_{\cat{D}} = \cat{D}(D,
 \bar{\Omega})$.

  Furthermore, assume we have a limit-preserving functor $V \colon \cat{D} \to \Set$
  and let $\bar{V}(\bar{\Omega}) = \Omega_{\tau}$.
  Then $[D, \bar{\Omega}]_{\cat{D}}$ is a $T$-subalgebra of
 $\Omega_{\tau}^{VD}$---meaning that the algebraic structure of the
 former is a pointwise extension of $\tau$---and we have $V \co [\place,
 \bar{\Omega}]_{T} \cong \EM(T)(\place, \Omega_{\tau})$.
\qed
\end{proposition}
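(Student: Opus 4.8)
The statement bundles three claims of increasing difficulty, and I would treat them in turn. The first, $U \after [D, \bar{\Omega}]_{\cat{D}} = \cat{D}(D, \bar{\Omega})$ (reading $\bar{\Omega}$ as shorthand for $\OmegaD = U_{\cat{D}}\bar{\Omega}$), is immediate from Theorem~\ref{thm:state-effect-triangle-relative}: there the $T$-algebra $[D, \bar{\Omega}]_{\cat{D}}$ is \emph{defined} to have carrier set $\cat{D}(D, \OmegaD)$ equipped with the structure map $\zeta_D$, so the forgetful $U \colon \EM(T)\to\Sets$ just returns $\cat{D}(D,\OmegaD)$; I would dispatch it in one line. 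Before the other two parts I would set up a dictionary describing how $\bar{V}$ acts on $\bar{\Omega}$: by Proposition~\ref{prop:change-of-base} and the identification $\EM(T;\Sets)\cong\EM(T)$ of Corollary~\ref{cor:SetsRelativeIsOrdinary}, the algebra $\Omega_{\tau}=\bar{V}(\bar{\Omega})$ has carrier $\Omega := V\OmegaD$, and its monad map $\beta \colon T \to \Set(\Omega^{(\place)},\Omega)$ satisfies $\beta_X(t) = V(\tau_X(t))$ under the canonical iso $V(\OmegaD^X)\cong\Omega^X$ (valid as $V$ preserves products). Via Proposition~\ref{prop:emalg-monadmap} this transposes to $\hat{\beta}\colon T\Omega\to\Omega$, giving $\beta_X(t)(k)=\hat{\beta}(Tk(t))$ for $k\colon X\to\Omega$. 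This identity $V(\tau_X(-))\leftrightarrow\hat{\beta}$ is the computational heart both remaining parts feed on.

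For $V \after [\place, \bar{\Omega}]_T \cong \EM(T)(\place, \Omega_{\tau})$ I would use that $V$ preserves limits, in particular the defining equalizer~(\ref{eq:201601171013}): $V([A_a,\bar{\Omega}]_T)$ is the equalizer in $\Sets$ of $V(a^*)$ and $V(\tau_A^{\sharp})$ on $\Omega^A \tto \Omega^{TA}$. Under $V(\OmegaD^A)\cong\Omega^A$ the map $V(a^*)$ is precomposition with $a$, while the dictionary identifies $V(\tau_A^{\sharp})$ with the extension map $(\place)^{\sharp}$ of Definition~\ref{def:pt-semantics-in-set} for $\Omega_{\tau}$, namely $h\mapsto \hat{\beta}\after Th$. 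Hence $h\colon A\to\Omega$ lies in the equalizer exactly when $h\after a = \hat{\beta}\after Th$, which is precisely the condition that $h$ is a $T$-algebra homomorphism $A_a\to\Omega_{\tau}$; naturality in $A_a$ then follows from functoriality of the equalizer construction.

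For the subalgebra claim I would present the inclusion as $V_{D,\OmegaD}\colon \cat{D}(D,\OmegaD)\to \Set(VD,\Omega)=\Omega^{VD}$, $g\mapsto Vg$, injective since $V$ is faithful, and check it is a $T$-algebra morphism from $\zeta_D$ to the pointwise product structure $s$ on $\Omega_{\tau}^{VD}$. I would do this componentwise: $s$ is characterized by $\pi_d\after s=\hat{\beta}\after T\pi_d$ for each $d\in VD$, so it suffices to show $\pi_d\after V_{D,\OmegaD}\after\zeta_D = \hat{\beta}\after T(\pi_d\after V_{D,\OmegaD})$. Unfolding $\zeta_D=(\id^{\sharp})^*\after\tau_M$ with $M=\cat{D}(D,\OmegaD)$ gives $\zeta_D(t)=\tau_M(t)\after\id^{\sharp}$; applying $V$ and evaluating at $d$ turns both sides, through the dictionary, into $\hat{\beta}(T\theta_d(t))$, where $\theta_d := \pi_d\after V_{D,\OmegaD}\colon M\to\Omega$ is $m\mapsto (Vm)(d)$ and equals $V(\id^{\sharp})(d)$ componentwise. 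Injectivity plus this morphism property yield the $T$-subalgebra claim.

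The main obstacle is not a deep idea but the transposition bookkeeping: correctly tracking the several $(\place)^{\sharp}$ transposes simultaneously in play---the one defining $\tau_A^{\sharp}$, the map $\id^{\sharp}$, and the extension map for $\Omega_{\tau}$---and cleanly extracting the dictionary $V(\tau_X(-))=\beta_X(-)$ from the change-of-base description of $\bar{V}$. I would isolate that dictionary as a preliminary lemma, so that both main computations reduce to matching projections ($\pi_d$ for the product, $\pi_t$ inside $\tau_A^{\sharp}$) and then collapse to the single identity $\hat{\beta}(Tk(t))$.
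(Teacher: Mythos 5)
The paper gives no proof of this proposition at all---it is stated as routine, with the proof left to the reader---so there is nothing for your argument to diverge from, and your proposal is a correct filling-in of exactly the intended steps. The three parts you supply (the definitional identification of the carrier of $[D,\bar{\Omega}]_{\cat{D}}$; transport of the defining equalizer~(\ref{eq:201601171013}) along the limit-preserving $V$, which turns $V(a^{*})$ and $V(\tau_{A}^{\sharp})$ into the two sides of the homomorphism condition $h \after a = \hat{\beta}\after Th$; and the pointwise characterization $\pi_{d}\after s = \hat{\beta}\after T\pi_{d}$ of the product algebra combined with faithfulness of $V$, using the change-of-base dictionary $\beta_{X}(t)=V(\tau_{X}(t))\after\theta^{-1}$ from Proposition~\ref{prop:change-of-base} and Proposition~\ref{prop:emalg-monadmap}) are precisely the computations the paper's constructions in Theorem~\ref{thm:state-effect-triangle-relative} are set up to make immediate.
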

\noindent
In the latter setting of
Proposition~\ref{prop:state-pred-duality-forgetful} where
 $\cat{D}$ is ``concrete'' with $V\colon \cat{D}\to\Sets$, let
 $\bar{\Omega} = (\OmegaD, \bar{\tau}) \in \EM(T;\cat{D})$ and
 $\bar{V}\bar{\Omega} = \Omega_{\tau} = (\Omega, \tau \colon T{\Omega} \to \Omega)$
be its underlying $T$-algebra. These data give rise to
two different
 predicate transformer semantics:
 one is $\bbP^{\bar{\tau}} \colon \Kl(T) \to \cat{D}^{\op}$
from the relative algebra $\bar{\Omega}$ via the relative algebra recipe (Theorem~\ref{thm:state-effect-triangle-relative}); and
the other is $\bbP^{\tau} \colon \Kl(T) \to \cat{\Set}^{\op}$
from the (ordinary) $T$-algebra $\Omega_{\tau}$ via the modality recipe
(Theorem~\ref{thm:theModalityRecipe}). Between these we have the
following correspondence, as we announced in the beginning of
Section~\ref{subsec:relative-alg-concrete}.
\begin{proposition} \label{prop:01190015}
  In the situation of Theorem~\ref{thm:state-effect-triangle-relative},
 additionally assume that  $V\colon \cat{D}\to\Sets$ is a faithful and
 limit-preserving functor. Then, in terms of the above notations we have
  $\bbP^{\tau} \cong V \after \bbP^{\bar{\tau}}$.
  \begin{equation}\label{eq:state-effect-triagle-concrete}
  \vcenter{
   \xymatrix@R-2.5em@C-1em{
        \cat{D}^{\op} \ar[rr]^-{V}
        &&  \Set^{\op} \\
        & \Kl(T) \ar@/^/[ul]^-{\bbP^{\bar{\tau}}} \ar@/_/[ur]_-{\bbP^{\tau}}
      }
  }
  \end{equation}
\end{proposition}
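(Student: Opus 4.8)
The plan is to exhibit a natural isomorphism $\phi\colon V\after\bbP^{\bar{\tau}}\Rightarrow \bbP^{\tau}$ whose components come from the single fact that $V$ preserves small products. Both functors send an object $X\in\Kl(T)$ to an $X$-fold power of the truth object: $V(\OmegaD^{X})$ on the one side, and $\Omega^{X}=(V\OmegaD)^{X}$ on the other. Since $\OmegaD^{X}$ is by definition the $|X|$-fold product of $\OmegaD$ in $\cat{D}$, limit-preservation of $V$ yields a canonical comparison isomorphism $\phi_{X}\colon V(\OmegaD^{X})\xrightarrow{\cong}\Omega^{X}$, and naturality of $\phi$ in $X$—as a $\Sets$-indexed family—is immediate from the uniqueness clause of the universal property of products. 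It then remains to check that $\phi$ is compatible with the two functors' actions on Kleisli morphisms.

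For this I would use that both actions have the same two-step shape: for $f\colon X\to TY$ we have $\bbP^{\bar{\tau}}(f)=f^{*}\after\bar{\tau}^{\sharp}_{Y}$ and $\bbP^{\tau}(f)=f^{*}\after\tau^{\sharp}_{Y}$, so it suffices to treat the two building blocks—precomposition $f^{*}$ and extension $(\place)^{\sharp}$—separately. Precomposition is the easy half: the map $f^{*}\colon\OmegaD^{TY}\to\OmegaD^{X}$ is simply the morphism between products induced by reindexing along $f\colon X\to TY$, and $V$ carries it, through the comparison isomorphisms $\phi_{TY}$ and $\phi_{X}$, to the corresponding reindexing $f^{*}\colon\Omega^{TY}\to\Omega^{X}$ in $\Sets$. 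This is exactly the coherence of $\phi$ with product projections.

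The substantive half is the extension block, and this is where the main work lies. Recall that $\Omega_{\tau}=\bar{V}(\bar{\Omega})$ by hypothesis, and that $\bar{V}$ is the change-of-base functor of Proposition~\ref{prop:change-of-base} for $H=V$; by its construction the $\Sets$-level monad map underlying $\Omega_{\tau}$ is obtained from $\bar{\tau}\colon T\to\cat{D}(\OmegaD^{(\place)},\OmegaD)$ by postcomposition with $V$, modulo the identification $V\cat{D}(\OmegaD^{Y},\OmegaD)\cong\Sets(\Omega^{Y},\Omega)$ (again from $V$ preserving products). Since each transpose $\bar{\tau}^{\sharp}_{Y}\colon\OmegaD^{Y}\to\OmegaD^{TY}$ is computed purely from the universal property of the product $\OmegaD^{TY}$, which $V$ preserves, the image $V(\bar{\tau}^{\sharp}_{Y})$ coincides, under $\phi_{Y}$ and $\phi_{TY}$, with the $\Sets$-level extension $\tau^{\sharp}_{Y}$ attached to the $T$-algebra $\Omega_{\tau}$ through Proposition~\ref{prop:emalg-monadmap}. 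The main obstacle is precisely this last bookkeeping: one must thread the several comparison isomorphisms $\phi$ through the two definitions of $(\place)^{\sharp}$ and confirm they agree on the nose. Conceptually, however, everything reduces to $V$ preserving small products, so the transpose-and-reindex operations defining $\bbP^{\bar{\tau}}$ and $\bbP^{\tau}$ are intertwined by $\phi$; assembling the object-level isomorphism with the two verified building blocks then yields the desired natural isomorphism $\bbP^{\tau}\cong V\after\bbP^{\bar{\tau}}$.
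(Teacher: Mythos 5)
Your proof is correct; the paper itself contains no proof of this proposition (none follows it in the body, and it is not among the appendix proofs---it is left as routine), and your argument supplies exactly the verification the paper leaves implicit: the comparison isomorphisms $V(\OmegaD^{X})\cong\Omega^{X}$ from product preservation, the reindexing block, and the extension block, the last reducing via the change-of-base construction (Proposition~\ref{prop:change-of-base}) to the fact that the monad map of $\Omega_{\tau}=\bar{V}\bar{\Omega}$ is by definition the $V$-image of $\bar{\tau}$ composed with the product comparison. Note that, consistently with the paper's remark immediately following the proposition, your argument uses only product preservation of $V$ --- faithfulness and preservation of general limits never enter --- which confirms you have found the intended proof rather than an accidental one.
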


 We have required $V\colon \cat{D}\to \Sets$ to be limit-preserving; in fact it mostly suffices
 to assume \emph{product-preservation}. In that case the only thing that
 fails is the isomorphism $V \co [\place,
 \bar{\Omega}]_{T} \cong \EM(T)(\place, \Omega_{\tau})$ in
 Proposition~\ref{prop:state-pred-duality-forgetful}, which is the result
 that connects the top-right corner of the two (\emph{relative algebra} and
 \emph{modality}) triangles. Proposition~\ref{prop:01190015} is
 only concerned about the top-left and bottom corners, and survives
 under $V$ that preserves only products.

%


\subsection{Finitary Predicate Transformers}
\label{subs:example-finitary}
 The categorical results so far for healthiness
(Theorem~\ref{thm:state-effect-triangle-relative}
and~\ref{thm:partialHealthiness}) are not enough for some instances of
healthiness results, as we will see
 in the examples of Section~\ref{sec:prob-examples}. Specifically, besides
 the \emph{structural} aspects covered by those results, we need
 to take account of  \emph{sizes}.

Throughout  Section~\ref{subs:example-finitary} we adopt the setting in
Proposition~\ref{prop:01190015}, i.e.\ the relative algebra recipe
  with a faithful and limit-preserving $V\colon \cat{D}\to \Sets$.
In particular we have
a $\cat{D}$-relative $T$-algebra $\bar{\Omega} = (\OmegaD, \bar{\tau})$,
and  $\Omega_{\hat{\tau}} = \bar{V} \bar{\Omega}$ as its underlying
$T$-algebra. Recall the correspondence between $\hat{\tau}\colon
T\Omega\to \Omega$ and a monad map $\tau$ (Proposition~\ref{prop:emalg-monadmap}).


The key observation is the following lemma.
\begin{lemma}
  \label{lem:finitary-factorization}
  Let $X$ be a set and $t \in TX$.
  If $T$ is finitary, the map $(\tau_X)(t) \colon \Omega^{X} \to \Omega$
  factors through a precomposition map $s^* \colon \Omega^{X} \to \Omega^{X'}$
  for some \emph{finite} subset $s \colon X' \into X$ of $X$
  i.e.\ there exists $\phi' \colon \Omega^{X'} \to \Omega$
  such that $(\tau_X)(t) = \phi' \co s^*$.
 \qed
\end{lemma}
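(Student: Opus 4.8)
The plan is to exploit the \emph{finite support} property enjoyed by finitary monads. First recall, from Proposition~\ref{prop:emalg-monadmap}, that the map $(\tau_X)(t)\colon \Omega^X\to\Omega$ acts by $(\tau_X)(t)(f)=\hat{\tau}\bigl((Tf)(t)\bigr)$ for each $f\colon X\to\Omega$. The structural fact I would use is that a finitary monad $T$ on $\Sets$ preserves filtered colimits, and that every set $X$ is the filtered colimit of the diagram of its finite subsets (with inclusions as the connecting maps). Applying $T$ and using filtered-colimit preservation gives $TX\cong\colim TX'$, the colimit ranging over finite subsets $s\colon X'\into X$, with colimit cocone components $(Ts)\colon TX'\to TX$. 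Since a filtered colimit in $\Sets$ is computed as a quotient of the disjoint union of its stages, every element $t\in TX$ lies in the image of some $(Ts)$; that is, there exist a finite subset $s\colon X'\into X$ and an element $t'\in TX'$ with $t=(Ts)(t')$.

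With such $X'$, $s$ and $t'$ fixed, I would take $\phi'\coloneqq(\tau_{X'})(t')\colon\Omega^{X'}\to\Omega$ as the witness and verify $(\tau_X)(t)=\phi'\co s^*$. This is a short computation relying only on functoriality of $T$: for any $f\colon X\to\Omega$ we have $s^*(f)=f\co s$, so
\begin{align*}
  \phi'\bigl(s^*(f)\bigr)
  &= (\tau_{X'})(t')(f\co s)
   = \hat{\tau}\bigl(T(f\co s)(t')\bigr) \\
  &= \hat{\tau}\bigl((Tf)\bigl((Ts)(t')\bigr)\bigr)
   = \hat{\tau}\bigl((Tf)(t)\bigr)
   = (\tau_X)(t)(f),
\end{align*}
where the step in the second line uses $T(f\co s)=(Tf)\co(Ts)$ and then $t=(Ts)(t')$. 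This establishes the desired factorization through the finite subset $X'$.

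The only genuine content lies in the first paragraph: extracting a finite subset carrying a preimage of $t$. Everything else is a formal verification. I expect the main (though standard) obstacle to be stating the finite-support property cleanly---in particular justifying that preservation of filtered colimits, together with the presentation of $X$ as the filtered union of its finite subsets, yields a preimage $t'$ living over an honest finite \emph{subset} of $X$ rather than over an arbitrary finite set mapping into $X$. Should one prefer to argue with an arbitrary finite set $Y$ and a map $g\colon Y\to X$ with $t=(Tg)(t')$, one recovers a finite subset by factoring $g$ through its image $\Image(g)\subseteq X$, which is finite; the same computation then goes through with $s\colon\Image(g)\into X$.
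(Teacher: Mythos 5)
Your proposal is correct and follows essentially the same route as the paper's proof: use finitariness to factor $t\colon 1\to TX$ through $Ts\colon TX'\to TX$ for a finite subset $s\colon X'\into X$, then take $\phi'=(\tau_{X'})(t')$ and check the factorization. The paper verifies this last step by a small commuting diagram (naturality of $\tau^{\sharp}$ plus $t=Ts\circ t'$) where you compute elementwise, and your closing remark about passing from an arbitrary finite set $Y\to X$ to the finite subset $\Image(g)\subseteq X$ just spells out a point the paper leaves implicit.
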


We formulate a size restriction on predicate transformers.
\begin{definition}[finitary predicate transformer]
 \label{def:finitaryPredTrasn}
  A predicate transformer $\phi \colon \Omega^Y \to \Omega^X$ is
 \emph{finitary}
  if for each $x \in X$ there exists a finite subset $s \colon Y' \into
 Y$
  such that $\pi_{x}\co \varphi$ factors through the precomposition
 $s^{*}$. See below.
  \begin{equation}
    \vcenter{
    \xymatrix@R-2em{
      \Omega^{Y'} \ar@{->}[drr]^{\exists\phi'} \\
      \Omega^{Y} \ar[u]^{s^*} \ar[r]_{\phi}
      & \Omega^{X} \ar[r]_{\pi_x}
      & \Omega
    }
    }
  \end{equation}
\end{definition}

\begin{corollary}
\label{cor:finitary-pred-transf}
  Let $T$ be finitary.
For each $f \colon X \to Y$ in $\Kl(T)$, the predicate transformer
  $\bbP^{\tau} (f) \colon \Omega^Y \to \Omega^X$
  is finitary.
\qed
\end{corollary}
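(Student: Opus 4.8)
The plan is to reduce the corollary to a single, pointwise application of Lemma~\ref{lem:finitary-factorization}. Fix a Kleisli arrow $f\colon X\to Y$, which we regard as an underlying $\Sets$-function $f\colon X\to TY$, and fix a point $x\in X$. According to Definition~\ref{def:finitaryPredTrasn} it suffices to exhibit, for this $x$, a finite subset $s\colon Y'\into Y$ through whose precomposition $s^{*}$ the map $\pi_{x}\co \bbP^{\tau}(f)\colon \Omega^{Y}\to\Omega$ factors. The finite subset is allowed to depend on $x$, so no uniformity across $X$ is required---this is exactly the shape of conclusion that the lemma delivers.

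The first step I would carry out is to identify the composite $\pi_{x}\co \bbP^{\tau}(f)$ with the function $\tau_{Y}(f(x))\colon \Omega^{Y}\to\Omega$, where $\tau_{Y}$ is the component of the monad map associated with the algebra $\hat{\tau}$ through Proposition~\ref{prop:emalg-monadmap}. To see this I would use the alternative description of $\bbP^{\tau}$ in Definition~\ref{def:pt-semantics-in-set}: for a postcondition $h\in\Omega^{Y}$ the predicate $\bbP^{\tau}(f)(h)$ is the composite $X\xrightarrow{f}TY\xrightarrow{Th}T\Omega\xrightarrow{\hat{\tau}}\Omega$, so that $\bigl(\pi_{x}\co\bbP^{\tau}(f)\bigr)(h)=\hat{\tau}\bigl((Th)(f(x))\bigr)$. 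On the other side, the concrete formula for the monad map in Proposition~\ref{prop:emalg-monadmap} gives $\tau_{Y}(f(x))(h)=\hat{\tau}\bigl((Th)(f(x))\bigr)$. Since these agree for every $h$, we obtain $\pi_{x}\co\bbP^{\tau}(f)=\tau_{Y}(f(x))$ as maps $\Omega^{Y}\to\Omega$.

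The second step is then immediate: applying Lemma~\ref{lem:finitary-factorization} to the set $Y$ and the element $t=f(x)\in TY$---this is where finitariness of $T$ is used---yields a finite subset $s\colon Y'\into Y$ and a map $\phi'\colon\Omega^{Y'}\to\Omega$ with $\tau_{Y}(f(x))=\phi'\co s^{*}$. Combining with the identification of the previous step gives $\pi_{x}\co\bbP^{\tau}(f)=\phi'\co s^{*}$, which is precisely the factorization demanded by Definition~\ref{def:finitaryPredTrasn}. As $x\in X$ was arbitrary, $\bbP^{\tau}(f)$ is finitary.

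I do not expect a genuine obstacle here; the only delicate point is the bookkeeping in the first step, namely correctly matching the monad-map component $\tau_{Y}$ against the algebra $\hat{\tau}$ through the correspondence of Proposition~\ref{prop:emalg-monadmap} and keeping the $(\place)^{\sharp}$ transpose conventions straight. Once the identity $\pi_{x}\co\bbP^{\tau}(f)=\tau_{Y}(f(x))$ is established, the corollary is a one-line consequence of the key lemma.
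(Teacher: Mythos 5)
Your proof is correct and is exactly the argument the paper intends (the corollary is stated with an immediate \qed as a direct consequence of Lemma~\ref{lem:finitary-factorization}): the pointwise identification $\pi_{x}\co\bbP^{\tau}(f)=\tau_{Y}(f(x))$ via Proposition~\ref{prop:emalg-monadmap} and Definition~\ref{def:pt-semantics-in-set}, followed by applying the lemma to $t=f(x)\in TY$, is precisely the omitted reasoning. No gaps.
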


\begin{theorem}[healthiness, in a finitary setting]
  \label{thm:finitary-healthiness}
  Let $T$ be a monad and
  $\bar{\Omega} =
    \p[\big]{\OmegaD, \bar{\tau} \colon T \to \cat{D}(\OmegaD^{(\place)}, \OmegaD)}$
  be a $\cat{D}$-relative $T$-algebra.
  Assume $T$ is finitary, and that $\bar{\tau}_X$ is surjective---much
 like
 in Theorem~\ref{thm:partialHealthiness}---but for each \emph{finite}
 set $X$. Then, for each map $\phi \colon \Omega^Y \to \Omega^X$,  the
 following are equivalent (healthiness).
 \begin{itemize}
  \item   There exists $f \colon X \to Y$ in $\Kl(T)$ such that
  $\bbP^{\tau}(f) = \phi$.
  \item
  $\phi$ is finitary (Definition~\ref{def:finitaryPredTrasn}) and lifts to $\cat{D}$. The latter means there
       exists $\bar{\phi} \colon \OmegaD^Y \to \OmegaD^X$
  such that $\bar{\phi} = \phi$. \qed
 \end{itemize}
\end{theorem}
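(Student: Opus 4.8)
The plan is to prove the two implications separately; the forward implication is immediate and the backward (reconstruction) one carries all the work. For the forward direction, assume $\phi = \bbP^{\tau}(f)$ for some $f\colon X\to Y$ in $\Kl(T)$. Then $\phi$ is finitary by Corollary~\ref{cor:finitary-pred-transf}, and it lifts to $\cat{D}$: Proposition~\ref{prop:01190015} gives $\bbP^{\tau}\cong V\after\bbP^{\bar\tau}$, so $\bar\phi := \bbP^{\bar\tau}(f)\colon\OmegaD^Y\to\OmegaD^X$ is a $\cat{D}$-morphism with $V\bar\phi=\phi$.

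For the converse I first reduce to a pointwise statement. Unfolding Definition~\ref{def:pt-semantics-in-set} together with the monad-map description of Proposition~\ref{prop:emalg-monadmap}, one has $\pi_x\after\bbP^{\tau}(f)=(\tau_Y)(f(x))$ for every $x\in X$; hence to realise a given $\phi$ it suffices, for each $x$, to exhibit $t_x\in TY$ with $(\tau_Y)(t_x)=\pi_x\after\phi$, and then set $f(x)=t_x$. Using that $\phi$ is finitary (Definition~\ref{def:finitaryPredTrasn}) I fix for each $x$ a finite $s_x\colon Y'_x\inj Y$ and a $\phi'_x\colon\Omega^{Y'_x}\to\Omega$ with $\pi_x\after\phi=\phi'_x\after s_x^{*}$. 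Naturality of the monad map gives $(\tau_Y)(T(s_x)(t'))=(\tau_{Y'_x})(t')\after s_x^{*}$, and since $s_x^{*}$ is epic it is enough to find $t'_x\in TY'_x$ with $(\tau_{Y'_x})(t'_x)=\phi'_x$, whereupon $t_x:=T(s_x)(t'_x)$ works. As $Y'_x$ is finite, the surjectivity hypothesis makes $\bar\tau_{Y'_x}\colon TY'_x\to\cat{D}(\OmegaD^{Y'_x},\OmegaD)$ surjective; such a $t'_x$ therefore exists exactly when $\phi'_x$ lifts to a $\cat{D}$-morphism $\OmegaD^{Y'_x}\to\OmegaD$.

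The hard part is thus to lift the finitary factorisation from $\Sets$ to $\cat{D}$, i.e.\ to turn the global lift $\bar\phi$ into a $\cat{D}$-morphism $\psi_x\colon\OmegaD^{Y'_x}\to\OmegaD$ with $V\psi_x=\phi'_x$. The obstacle is that a $\Sets$-factorisation through the product projection $s_x^{*}$ need not lift, because $V$ preserves limits but not this (colimit-flavoured) factorisation through a projection. My plan is to split the $\cat{D}$-projection $\OmegaD^{s_x}\colon\OmegaD^Y\to\OmegaD^{Y'_x}$ using a global point $c\colon 1_{\cat{D}}\to\OmegaD$, which the surjectivity hypothesis supplies at the finite set $\emptyset$ via $\bar\tau_{\emptyset}\colon T\emptyset\onto\cat{D}(1_{\cat{D}},\OmegaD)$ (nonempty in our examples, where $T\emptyset\neq\emptyset$). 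Concretely I build the $\cat{D}$-section $m\colon\OmegaD^{Y'_x}\to\OmegaD^Y$ that is the identity on the $Y'_x$-coordinates and the constant $c$ elsewhere, and set $\psi_x:=\pi_x\after\bar\phi\after m$; a short chase under the faithful $V$ then gives $V\psi_x=\phi'_x$ and $\psi_x\after\OmegaD^{s_x}=\pi_x\after\bar\phi$. Feeding $\psi_x$ into the surjectivity of $\bar\tau_{Y'_x}$ yields $t'_x$, and assembling $t_x=T(s_x)(t'_x)$ over all $x$ produces the Kleisli map $f$ with $\bbP^{\tau}(f)=\phi$, completing the equivalence.
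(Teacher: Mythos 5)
Your overall architecture matches the paper's proof: the forward direction via Corollary~\ref{cor:finitary-pred-transf} and Proposition~\ref{prop:01190015}, and for the converse the same pointwise reduction, finitary factorisation $\pi_x\after\phi=\phi'_x\after s_x^{*}$, naturality of the monad map, and surjectivity of $\bar\tau$ at the finite set $Y'_x$. You also correctly identify the one delicate point, namely that the $\Sets$-factorisation through $s_x^{*}$ does not automatically lift to $\cat{D}$. But your resolution of that point has a genuine gap: the section $m\colon\OmegaD^{Y'_x}\to\OmegaD^{Y}$ you build requires a global point $c\colon \OmegaD^{\emptyset}\to\OmegaD$, and the hypotheses of the theorem do not provide one. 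Surjectivity of $\bar\tau_{\emptyset}\colon T\emptyset\to\cat{D}(\OmegaD^{\emptyset},\OmegaD)$ is vacuous when both sets are empty, so it ``supplies'' a point only under the extra assumption $T\emptyset\neq\emptyset$, which is not part of the statement. Worse, this extra assumption fails in the paper's own headline instances: for $T=\dist$ one has $\dist\emptyset=\emptyset$, and for $\cat{D}=\EMod$ there are provably \emph{no} morphisms from the terminal effect module to $\uintv$ (such a morphism would have to send $0=1$ to both $0$ and $1$). The theorem is not vacuous there—condition 2) is satisfied by plenty of maps—so your argument genuinely fails to cover the probabilistic healthiness results (Section~\ref{sub:probabilistic-other}) that the theorem is designed to yield.

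The paper's proof closes this gap differently, and the repair is small: instead of splitting the projection with a constant, split it by \emph{reindexing along a set-theoretic retraction}. Arrange $Y\neq\emptyset$ and $Y'_x\neq\emptyset$ (both harmless: the empty cases follow directly from surjectivity at $\emptyset$, and a finite support set can always be enlarged), choose any retraction $r\colon Y\onto Y'_x$ with $r\after s_x=\id$, and note that $\phi'_x=\pi_x\after\phi\after r^{*}$. The point is that $r^{*}$, being the canonical comparison $\OmegaD^{r}\colon\OmegaD^{Y'_x}\to\OmegaD^{Y}$ between products, \emph{always} lives in $\cat{D}$, and $V(\OmegaD^{r})=r^{*}$ because $V$ preserves products. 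Hence $\psi_x:=\pi_x\after\bar\phi\after\OmegaD^{r}$ is a $\cat{D}$-lift of $\phi'_x$, with no appeal to global points, and the rest of your argument (surjectivity of $\bar\tau_{Y'_x}$, then $t_x:=Ts_x(t'_x)$ and naturality) goes through verbatim.
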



\noindent
In particular, in case a monad $T$ is finitary, every
predicate transformer $\bbP^{\tau}(f) \colon \Omega^Y \to \Omega^X$ that arises
from a ``computation'' $f\colon X\to TY$ is finitary in the sense of
Definition~\ref{def:finitaryPredTrasn}. We will see that this is indeed the case for the
(sub)distribution monads (Sections~\ref{sub:subdist-monad-total}--\ref{sub:probabilistic-other}); these monads are finitary
because we restrict to (sub)distributions with a finite support.

We end with a topological interpretation of Definition~\ref{def:finitaryPredTrasn}.

\begin{proposition}
  \label{prop:finitary-continuous}
 Let $\Omega$ be a finite  set with the discrete topology.
 A predicate transformer $\phi \colon \Omega^Y \to \Omega^X$
  is finitary if and only if
  $\phi$ is  continuous with respect to the product topology of
 $\Omega^Y$ and $\Omega^X$.
 \qed
\end{proposition}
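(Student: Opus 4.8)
The plan is to prove Proposition~\ref{prop:finitary-continuous}, which characterizes finitary predicate transformers topologically. Recall the setting: $\Omega$ is a finite set with the discrete topology, and $\Omega^Y$, $\Omega^X$ carry the product topology. The product topology on $\Omega^Y$ is the standard topology of pointwise convergence, whose basic open sets are cylinders determined by finitely many coordinates; since $\Omega$ is discrete and finite, these cylinders are exactly the sets fixing the values at a finite subset of $Y$.

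First I would establish the backward direction, that finitary implies continuous. Suppose $\phi$ is finitary in the sense of Definition~\ref{def:finitaryPredTrasn}. To check continuity of $\phi\colon \Omega^Y\to\Omega^X$ it suffices, because $\Omega^X$ carries the product topology, to check that each composite $\pi_x\co\phi\colon \Omega^Y\to\Omega$ is continuous for every $x\in X$. By the finitary hypothesis, $\pi_x\co\phi = \phi'\co s^*$ for some finite subset $s\colon Y'\into Y$ and some $\phi'\colon \Omega^{Y'}\to\Omega$. Now $s^*\colon \Omega^Y\to\Omega^{Y'}$ is a projection onto finitely many coordinates, hence continuous for the product topologies; and $\phi'$ is automatically continuous because its domain $\Omega^{Y'}$ is a finite product of finite discrete spaces, hence itself finite and discrete. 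The composite of continuous maps is continuous, so $\pi_x\co\phi$ is continuous, and therefore so is $\phi$.

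Next I would prove the forward direction, that continuous implies finitary. Assume $\phi$ is continuous and fix $x\in X$; I want a finite $Y'\subseteq Y$ through whose precomposition $\pi_x\co\phi$ factors. The composite $\pi_x\co\phi\colon \Omega^Y\to\Omega$ is continuous, and $\Omega$ is finite discrete, so for each value $a\in\Omega$ the fibre $(\pi_x\co\phi)^{-1}(\{a\})$ is clopen. Since $\Omega^Y$ is compact (Tychonoff, as a product of finite discrete spaces) and these finitely many clopen fibres partition it, each fibre is a finite union of basic cylinder sets, and each such cylinder constrains only finitely many coordinates. Let $Y'\subseteq Y$ be the union of all coordinates appearing in this finite collection of cylinders; then $Y'$ is finite, and $\pi_x\co\phi$ is constant on each fibre of $s^*\colon \Omega^Y\to\Omega^{Y'}$ (two inputs agreeing on $Y'$ lie in the same cylinders, hence the same fibre, hence have the same image). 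Therefore $\pi_x\co\phi$ factors through $s^*$ via a well-defined $\phi'\colon \Omega^{Y'}\to\Omega$, exactly as required by Definition~\ref{def:finitaryPredTrasn}.

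The main obstacle, and the crux of the argument, is the forward direction's use of compactness: the key insight is that a continuous map out of the compact space $\Omega^Y$ into a finite discrete space depends on only finitely many coordinates at each output. Concretely, one must argue that each clopen fibre, being open, is a union of basic cylinders, and being closed in a compact space is itself compact, so that a finite subcover of cylinders suffices to express it. I would phrase this as: every clopen subset of $\Omega^Y$ is a finite union of basic cylinders (a standard fact for products of finite discrete spaces), and extract the finite coordinate set $Y'$ from this finite description. The remaining steps are routine once this finiteness of the relevant coordinate dependence is secured.
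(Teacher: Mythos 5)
Your proposal is correct and takes essentially the same approach as the paper's proof: the finitary-implies-continuous direction composes the continuous precomposition $s^*$ with $\phi'$, which is automatically continuous since its domain is finite discrete, and the converse uses compactness of $\Omega^Y$ to show each clopen fibre of $\pi_x \co \phi$ is determined by finitely many coordinates, then takes the union of those coordinate sets over the finitely many fibres. The only cosmetic differences are that the paper isolates the clopen-cylinder fact as a standalone lemma and reduces to $X = 1$ at the outset, whereas you prove the compactness argument inline and achieve the same reduction by fixing $x \in X$ and working with $\pi_x \co \phi$.
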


%
%

\section{(Purely) Probabilistic Examples}
\label{sec:prob-examples}

The term \emph{probabilistic computation} in the literature often refers
to one with an alternation of probabilistic and nondeterministic
branching, the latter modeling (totally unknown) environments'
behaviors, or a (demonic) \emph{scheduler}. This will be an example of our extended alternating
framework of Section~\ref{sec:two-player-setting}. Here we deal with
computations with purely probabilistic branching.

\subsection{Monads and Modalities for Probabilistic Branching
}
\label{subs:dist-monad}

One of the following monads replaces $\pow$ in Section~\ref{sec:nondet}.
We impose the restriction of countable supports.
\begin{definition}[the (sub)distribution monad $\dist,\sdist$]
The \emph{distribution monad} $\dist$ on $\Sets$ is such that:
 $\dist X=\{p:X\to [0,1]\mid \sum_{x\in X}p(x)= 1, \text{and $p(x)=0$
 for all but finitely many $x\in X$}\}$;
 $\dist f(p)(y)= \sum_{x\in f^{-1}(y)}p(x)$ on arrows; its unit is the
 Dirac distribution
 $\eta^{\dist}_{X}(x)(y)= 1$ (if $y=x$) and $0$ otherwise; and
 $\mu^{\dist}_{X}(\Phi)(x)=\sum_{p\in \dist X}\Phi(p)\cdot p(x)$.

 The \emph{subdistribution monad} $\sdist$
 is a variant defined by
 $\sdist X=\{p \text{ w/ finite supp.}
\mid \sum_{x\in X}p(x)\le 1\}$.
\end{definition}




 $\dist$-algebras are often called \emph{convex spaces}, with   convex
 subsets in  $\bbR^n$ as typical
 examples.
 $\dist$-algebra morphisms are  \emph{convex linear maps}, accordingly.
Since any (finite) convex combination can be expressed by
a repetition of
suitable binary convex combinations $x \oplus_p y = (1-p)x + py$ ,
a $\dist$-algebra structure is totally determined
by how  binary convex combinations are interpreted.

\begin{remark}
  Not all convex spaces are represented as convex subsets of
  $\bbR$-vector spaces: a two-point set $\sett{x, y}$ is
  a convex space by defining $(1-p)x + py$ as $x$ (if $p = 0$)
  and $y$ (otherwise). In general we have a monad map $\dist \to \fpow$
  to the finite powerset monad $\fpow$ that takes the \emph{support} of
 a distribution; consequently
 each join-semilattice (i.e.\ $\fpow$-algebras)
  yields a convex space.
  See~\cite{fritz2009convex} for more on
 convex spaces.
\end{remark}

A $\sdist$-algebra $x\colon \sdist X\to X$ is in turn called a
\emph{convex cone}, with the point $x(0)\in X$ (where $0$ is the \emph{zero}
subdistribution) assuming the special role of the \emph{apex} of a
cone. Indeed it is straightforward to see that a convex cone is a
``pointed convex space.''

\auxproof{
\begin{proposition}
  \label{prop:dist-sdist-algebra}
  There is a bijective correspondence between
  \begin{itemize}
    \item a $\sdist$-algebra structure on $X$, and
    \item a pair of a $\dist$-algebra structure on $X$ and
    a point of $X$ (called \emph{apex}).
  \end{itemize}
  Moreover a map $f \colon X \to Y$ between $\sdist$-algebras is
  a $\sdist$-algebra morphism if and only if
  it is $\dist$-algebra morphism and preserves the apex
  under the identification above.
\end{proposition}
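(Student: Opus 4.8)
The plan is to exhibit the two assignments making up the correspondence and then check that they are mutually inverse, with the multiplication law of a $\sdist$-algebra doing the real work in one direction. Throughout, write $\eta^{\sdist}_X(x)$ for the point mass at $x$ (so $a(\eta^{\sdist}_X(x)) = x$ by the unit law), $0 \in \sdist X$ for the zero subdistribution, and $\abs{p} = \sum_x p(x)$ for the total mass. Given a $\sdist$-algebra $a \colon \sdist X \to X$, I note that the inclusion $\dist \inj \sdist$ is a monad map, so restricting $a$ along it yields a $\dist$-algebra $b := a \restriction \dist X$ (a convex space), and I take the apex to be $e := a(0)$. Conversely, given a convex space $b \colon \dist X \to X$ and a point $e \in X$, I define $a \colon \sdist X \to X$ by ``filling the deficit with $e$'': for $p \in \sdist X$ set $p' := p + (1-\abs{p})\,\eta^{\sdist}_X(e) \in \dist X$, the distribution agreeing with $p$ off $e$ and carrying the surplus mass $1-\abs{p}$ at $e$, and put $a(p) := b(p')$.

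That the constructed $a$ is a genuine $\sdist$-algebra is most transparently seen by factoring $a = b \co \dist([\id_X,e]) \co \theta_X$, where $\theta_X \colon \sdist X \iso \dist(X+1)$ is the standard isomorphism and $[\id_X, e]\colon X+1\to X$ collapses the extra point to $e$; under the monad isomorphism $\sdist \cong \dist(\place+1)$ coming from the distributive law of $\dist$ over the maybe monad, $a$ is exactly the algebra transported from $b$ and the pointing $e$. The round trip $(b,e)\mapsto a\mapsto(b,e)$ is then immediate: a genuine distribution has $\abs{p}=1$, hence $p'=p$ and $b$ is recovered, while $0' = \eta^{\sdist}_X(e)$ gives $a(0)=b(\eta^{\sdist}_X(e))=e$, recovering the apex.

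The hard part will be the other round trip $a\mapsto(b,e)\mapsto a$, i.e.\ proving that $a(p)=a(p')$ for every $p$, where $e=a(0)$; this is where the multiplication law $a\co\mu^{\sdist}=a\co\sdist(a)$ is indispensable, and I would isolate two instances of it. Taking $\Phi\in\sdist\sdist X$ with weight $\lambda$ on a single $q$ gives the scaling identity $a(\lambda q)=a(\lambda\,\eta^{\sdist}_X(a(q)))$; taking $\Phi$ with weight $s$ on $\eta^{\sdist}_X(y)$ and weight $1-s$ on $0$ gives, since $a(0)=e$, the absorption identity $a(s\,\eta^{\sdist}_X(y)) = a\bigl(s\,\eta^{\sdist}_X(y)+(1-s)\,\eta^{\sdist}_X(e)\bigr)$. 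Now for $p$ with $s:=\abs{p}>0$, put $q:=\tfrac1s p$ and $y:=a(q)$; the scaling identity gives $a(p)=a(s\,\eta^{\sdist}_X(y))$, while the instance of $\mu^{\sdist}$ that splits $p'$ as $s\cdot q+(1-s)\cdot\eta^{\sdist}_X(e)$ gives $a(p')=a\bigl(s\,\eta^{\sdist}_X(y)+(1-s)\,\eta^{\sdist}_X(e)\bigr)$, and the absorption identity equates the two; the case $s=0$ is the already-checked $a(0)=e=a(\eta^{\sdist}_X(e))$.

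For the morphism statement, a $\sdist$-algebra morphism $f$ is automatically a $\dist$-algebra morphism (again by restriction along $\dist\inj\sdist$) and preserves the apex since $f(e_X)=f(a_X(0))=a_Y(\sdist(f)(0))=a_Y(0)=e_Y$, using $\sdist(f)(0)=0$. For the converse the key computation is that completion commutes with an apex-preserving convex-linear $f$: pushforward preserves total mass and $f(e_X)=e_Y$, so $(\sdist(f)(p))'=\dist(f)(p')$, whence $a_Y(\sdist(f)(p))=b_Y(\dist(f)(p'))=f(b_X(p'))=f(a_X(p))$. I expect the only genuine obstacle to be the lemma $a(p)=a(p')$ above; everything else is bookkeeping, and the cleanest conceptual route throughout is the isomorphism $\sdist\cong\dist(\place+1)$, which also subsumes the algebra-law check for the constructed $a$.
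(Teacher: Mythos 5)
Your two constructions are exactly the paper's: restriction along the submonad inclusion $\dist \inj \sdist$ together with the apex $a(0)$ in one direction, and the deficit-filling completion $a(p)=b\bigl(p+(1-\abs{p})\,\eta(e)\bigr)$ in the other. The difference lies in where the verification effort goes, and there your proposal is genuinely more complete than the paper's own proof. The paper checks (sketchily) that the completed map satisfies the algebra laws, asserts that the two passages are ``easily'' seen to be mutually inverse, and never addresses the morphism statement; you instead prove carefully the one nontrivial point, namely the round trip $a\mapsto(b,e)\mapsto a$, i.e.\ the identity $a(p)=a(p')$, via two well-chosen instances of the multiplication law (scaling and absorption) --- this argument is correct --- and you prove both directions of the morphism claim, using $(\sdist(f)(p))'=\dist(f)(p')$ for the converse, which the paper's proof omits entirely. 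What your route through the monad isomorphism $\sdist\cong\dist(\place+1)$ (absent from the paper) buys is a conceptual packaging that in principle subsumes the whole proposition, morphisms included. The one caveat: transporting the pair $(b,e)$ to an algebra of the composite monad tacitly uses that the distributive-law compatibility condition between the convex structure and the pointing is vacuous in this case (it reduces to the unit-law observation $b(\eta(e))=e$), and since the associativity of your constructed $a$ rests entirely on that appeal, you should record this one-line check. With it included, your proof is complete and strictly more detailed than the paper's.
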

\begin{proof}
  If we have a $\sdist$-algebra $e \colon \sdist X \to X$,
  then since $\dist \subset \sdist$ is a submonad,
  $e$ restricts to a $\dist$-algebra structure
  and designates a point of $X$ by $e(0)$.

  For the other direction, for a $\dist$-algebra structure
  $e' \colon \dist X \to X$ and a point $x \in X$,
  we will define $e \colon \sdist X \to X$ as
  $e(\mu) = e'(\mu + (1 - w(\mu)) \ket{x})$.
  Then we will check so defined $e$ is indeed a $\sdist$-algebra.
  Obviously we have $e(\ket{x}) = x$, so we only have to see
  \begin{align*}
    e \p*{\sum_{i} a_i \ket{e(\mu_i)}} = e \p*{\sum_{i} a_{i} \mu_i} \enspace.
  \end{align*}
  and it in fact follows from the EM-law for a $\dist$-algebra $e'$.

  We can easily check these operations are mutually inverse.
\end{proof}

Proposition~\ref{prop:dist-sdist-algebra} says that a convex cone is
a ``pointed convex space'', so we will call a $\sdist$-algebra morphism
a \emph{pointed convex linear map}. As in the case of convex spaces,
a convex cone structure is fully determined by the interpretation of
the apex and the binary convex combinations.
}

Let us turn to \emph{modalities}---i.e.\ $\dist$- and
$\sdist$-algebras---that would induce predicate transformer semantics by
the \emph{modality} recipe (Theorem~\ref{thm:theModalityRecipe}).
We adopt $\Omega=[0,1]$, the unit interval, with the intuition that
``probabilistic predicates'' are $[0,1]$-valued random
variables whose values express the likelihood of truth.

\begin{definition}[modalities for $\dist,\sdist$]
\label{def:probModalities}
 For $\dist$ we use
 \fussy
 $\tau\colon \dist[0,1]\to [0,1]$; it uses
 the usual convex structure of $[0,1]$.
 \sloppy

 For $\sdist$ we have a continuum of modalities:
 for each real number $r\in\uintv$ a modality
 $\tau_r \colon \sdist{\uintv} \to \uintv$
 is given by
 $\tau_r(p) = \sum_{x \in \uintv} x p(x) + r (1 - \sum_x p(x))$.

 We will in particular use the two extremes
$\tau_{\mathrm{total}} = \tau_0$
and $\tau_{\mathrm{partial}} = \tau_1$; they are called
 the \emph{total} and \emph{partial} modalities for $\sdist$, respectively.
 In the latter \emph{divergence}---whose probability is expressed by
$1 - \sum_x p(x)$---is deemed to yield truth. Hence
$\tau_{\mathrm{total}}$
and $\tau_{\mathrm{partial}}$ are analogues of $\Diamond$ and $\Box$ in
the nondeterministic setting (Section~\ref{sec:nondet}).
\end{definition}



\subsection{Healthiness for: Possibly Diverging Probabilistic
  Computations
and the Total Modality}
\label{sub:subdist-monad-total}

We shall first focus on the subdistribution monad $\sdist$ and the
total modality $\tau_{\mathrm{total}}$. These data give rise to
predicate transformer semantics---in the form of a state-and-effect
triangle~(\ref{eq:theModalityRecipe})---via the \emph{modality} recipe
(Theorem~\ref{thm:theModalityRecipe}). In particular we obtain
a functor $\bbP^{\tau_{\mathrm{total}}}\colon \Kl(\sdist)\to
\Sets^{\op}$ for interpreting a function $X\to \sdist Y$; the latter is
identified with
a \emph{probabilistic computation} from $X$ to $Y$ that is possibly
diverging (accounted for by \emph{sub}-probabilities).

Our goal is a healthiness result in this setting, towards which we rely on our \emph{relative algebra}
recipe. As we noted the original
Theorem~\ref{thm:state-effect-triangle-relative} is not enough; we use
its finitary variant (Theorem~\ref{thm:finitary-healthiness}), providing
its ingredient (a relative algebra $\bar{\Omega}$) by means of the
lifting result (Proposition~\ref{prop:lifting-condition}).

It turns out that the category $\cat{D}$ in the relative algebra recipe is
given by so-called \emph{generalized effect modules}. They have been
used in the context of categorical quantum logics~\cite{Jacobs15LMCS}
and the more general theory of effectuses~\cite{ChoJWW15Arxiv}.

\begin{definition}[$\GEMod$]
\label{def:gemod}
 A \emph{partial commutative monoid (PCM)} is a set $M$ with a partial binary
 \emph{sum} $\ovee$ and a \emph{zero} element $0 \in M$ that are subject
 to:
 $(x\ovee y) \ovee z \simeq x \ovee (y \ovee z)$,
 $x \ovee 0 \simeq x$ and $x \ovee y \simeq y \ovee x$,
 where $\simeq$ is the Kleene equality.
 A \emph{generalized effect algebra} is a PCM $(M, \ovee, 0)$
 that is  \emph{positive}
 ($x \ovee y = 0 \Rightarrow x = y = 0$)
 and \emph{cancellative} ($x \ovee y = x \ovee z \Rightarrow y = z$).

 A \emph{generalized effect module} is a generalized effect algebra $M$
 with a scalar multiplication ${\cdot} \colon [0, 1] \times M \to M$
 that satisfies
 $(r \ovee s) \cdot x \simeq (r \cdot x) \ovee (s \cdot x)$,
 $r \cdot (x \ovee y) \simeq (r \cdot x) \ovee (r \cdot y)$,
 $1 \cdot x = x$ and $r \cdot (s \cdot x) = (r \cdot s) \cdot x$.
 Here for $r, s \in [0, 1]$
 the partial sum $r \ovee s = r + s$ is defined when $r + s \leq 1$.

 The category of general effect modules (with a straightforward notion
 of
 their morphism, see~\cite{Cho15}) is denoted by $\GEMod$.
\end{definition}
\noindent
An example of a generalized effect module is the set $\sdist
X$ of subdistributions over $X$. Here $p\ovee q\in\sdist X$ is given
by $(p\ovee q)(x)=p(x)+q(x)$, which is well-defined clearly only if
 $\sum_{x\in X}p(x)+q(x)\le 1$. The set $\sdist X$ comes with an obvious
 scalar multiplication, too.
The unit interval $\uintv$ is another example; so is its products
$\uintv^{X}$.
See~\cite{Cho15} for details.



For our purpose of healthiness conditions,
we have to study
the monad map induced by the $\sdist$-algebra $\tauTotal$. It shall
also be denoted by $\tauTotal$. The following is easy.

\begin{lemma}\label{lem:monadMapFromTauTotal}
The monad map $\tauTotal\colon \sdist\to \Sets(\uintv^{(\place)},
 \uintv)$  is concretely given by:
  $(\tauTotal)_X(p)(f)
  = \sum_{x \in X} f(x) p(x)$.
\begin{enumerate}
 \item
 It lifts to
 $\tauTotal\colon \sdist\to \GEMod(\uintv^{(\place)},
 \uintv)$, that is, the map $(\tauTotal)_X(p)\colon [0,1]^{X}\to [0,1] $ for each $X$ and $p\in
 \sdist X$ preserves $0$, $\ovee$ and scalar multiplication.
 \item Furthermore
$(\tauTotal)_{Y}\colon \sdist Y\to \GEMod(\uintv^{Y},
 \uintv)$ is an isomorphism for each \emph{finite} set $Y$.
  \myqed
\end{enumerate}
\end{lemma}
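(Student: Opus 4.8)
The plan is to treat the three parts in turn, with the first two being short computations and the third carrying the content. For the concrete formula in the opening assertion, I would instantiate Proposition~\ref{prop:emalg-monadmap} with $\cat{C}=\Sets$, $T=\sdist$, $A=\uintv$ and the algebra $\hat{\alpha}=\tauTotal=\tau_{0}$, whose action is $\tauTotal(q)=\sum_{x}x\,q(x)$ by Definition~\ref{def:probModalities}. The proposition gives $(\tauTotal)_{X}(p)(f)=\tauTotal\bigl(\sdist f(p)\bigr)$ for $p\in\sdist X$ and $f\colon X\to\uintv$. Unfolding $\sdist f(p)(y)=\sum_{x\in f^{-1}(y)}p(x)$ and regrouping the resulting double sum over the fibres of $f$ yields $\sum_{x\in X}f(x)p(x)$, which is the stated formula; the sum is finite because $p$ has finite support.

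For part (1) it suffices to check that, with $p$ fixed, the map $(\tauTotal)_{X}(p)\colon\uintv^{X}\to\uintv$ is a $\GEMod$-morphism, recalling that the cotensor $\uintv^{X}$ carries the pointwise structure. Preservation of $0$, of the partial sum $\ovee$, and of scalar multiplication all follow from the evident additivity and homogeneity of $f\mapsto\sum_{x}f(x)p(x)$; the only non-formal point is that the target value stays in $\uintv$, which holds since $\sum_{x}(f(x)+g(x))p(x)\le\sum_{x}p(x)\le 1$ whenever $f\ovee g$ is defined. Equivalently, this is the lifting condition of Proposition~\ref{prop:lifting-condition}.

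The heart of the lemma is part (2). Fix a finite set $Y=\{y_{1},\dots,y_{n}\}$ and write $e_{y}\in\uintv^{Y}$ for the function sending $y$ to $1$ and every other element to $0$. Injectivity is immediate: evaluating $(\tauTotal)_{Y}(p)$ at $e_{y}$ recovers $p(y)$, so $(\tauTotal)_{Y}(p)=(\tauTotal)_{Y}(p')$ forces $p=p'$. For surjectivity, given a $\GEMod$-morphism $\phi\colon\uintv^{Y}\to\uintv$ I would set $p(y)=\phi(e_{y})$ and verify that $p\in\sdist Y$ and $\phi=(\tauTotal)_{Y}(p)$. The key observation---this is where finiteness enters---is that every $f\in\uintv^{Y}$ is the partial sum $f=\bigl(f(y_{1})\cdot e_{y_{1}}\bigr)\ovee\cdots\ovee\bigl(f(y_{n})\cdot e_{y_{n}}\bigr)$, which is $\ovee$-defined because at each coordinate exactly one summand is nonzero. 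Applying $\phi$ and using that it preserves $\ovee$ and scalar multiplication gives $\phi(f)=\sum_{i}f(y_{i})\,\phi(e_{y_{i}})=\sum_{i}f(y_{i})p(y_{i})=(\tauTotal)_{Y}(p)(f)$; taking all $f(y_{i})=1$ shows $\sum_{i}p(y_{i})=\phi(f)\le 1$, so $p$ is indeed a subdistribution.

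I expect surjectivity to be the only real obstacle, and the point to watch is precisely the reliance on finiteness of $Y$: the decomposition of $f$ above is a \emph{finite} $\ovee$-combination, and $\GEMod$-morphisms are only required to commute with finite partial sums. For an infinite $Y$ the candidate $p$ need not reproduce $\phi$---one would need infinitary sums that the effect-module structure does not provide---so surjectivity genuinely fails there. This is exactly why the finitary healthiness result (Theorem~\ref{thm:finitary-healthiness}) must be invoked for $\sdist$ in place of Theorem~\ref{thm:partialHealthiness}.
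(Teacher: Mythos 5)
Your proof is correct and matches the paper's own (unpublished, auxiliary) argument: the paper likewise derives the formula from Proposition~\ref{prop:emalg-monadmap}, and for part (2) constructs the inverse $\GEMod(\uintv^{Y},\uintv)\to\sdist Y$ by evaluating $\phi$ at the indicator functions $\delta_{y}$, using preservation of the (finite) partial sum $\bigovee_{y}\delta_{y}\le 1$ to see that the result is a subdistribution. Your explicit decomposition $f=\bigl(f(y_{1})\cdot e_{y_{1}}\bigr)\ovee\cdots\ovee\bigl(f(y_{n})\cdot e_{y_{n}}\bigr)$ is exactly the step the paper leaves as ``easy to check'' for mutual inverseness, and your remark on where finiteness is indispensable is accurate.
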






\auxproof{
\begin{proof}
  Let us construct the inverse
  $\mu_X \colon \GEMod(\uintv^X, \uintv) \to \sdist X$.
  Given $\phi \colon \uintv^X \to \uintv$ in $\GEMod$,
  construct a subdistribution $\mu_X(\phi)$ on $X$ by
  $\mu_X(\phi) = \sum_{x \in X} \phi(\delta_x) \ket{x}$ where
  $\delta_x(x) = 1$ and $\delta_x(y) = 1$ for $y \neq x$.
  The sum is well-defined by the finiteness of $X$.
  We also have $\sum_{x \in X} \phi(\delta_x) \leq 1$ since
  $\sum_{x \in X} \delta_x \le 1$ and
  $\sum_{x \in X} \phi(\delta_x) = \phi \p*{\sum_{x \in X} \delta_x} \le 1$.
  It is easy to check $\sigma_X$ and $\mu_X$ are mutually inverse.
\end{proof}
}

From the last lemma the following healthiness result follows
immediately, via our general results. Specifically:
Lemma~\ref{lem:monadMapFromTauTotal}.1 discharges the condition of
Proposition~\ref{prop:lifting-condition} and provides the ingredient for
the relative algebra recipe; we then exploit
Lemma~\ref{lem:monadMapFromTauTotal}.2 and that $\sdist$ is finitary
in applying Theorem~\ref{thm:finitary-healthiness}.
\begin{theorem}[healthiness for $\sdist$ and $\tauTotal$]
 For a function $\varphi \colon \uintv^Y \to \uintv^X$ the following
 are equivalent: 1) there is $f\colon X\to \sdist Y$ such that
 $\varphi=\bbP^{\tauTotal}(f)$; 2) $\phi$ is finitary
 (Definition~\ref{def:finitaryPredTrasn}) and is a morphism of
 generalized effect modules, meaning that $0$, $\ovee$ and scalar
 multiplications are preserved by $\varphi$. \qed
\end{theorem}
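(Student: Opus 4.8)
The plan is to instantiate the finitary healthiness result Theorem~\ref{thm:finitary-healthiness} with $T=\sdist$, $\cat{D}=\GEMod$, and $\OmegaD=\uintv$ viewed as a generalized effect module. First I would record the structural facts about $\GEMod$ needed to run the relative algebra recipe over it: namely that $\GEMod$ is complete, and that its forgetful functor $V\colon\GEMod\to\Sets$ is faithful and preserves small limits, so that products such as $\uintv^{X}$ carry the \emph{pointwise} effect-module structure and agree, on underlying sets, with the set-theoretic products used in the modality recipe. These are precisely the ``concreteness'' hypotheses of Section~\ref{subsec:relative-alg-concrete} and of Theorem~\ref{thm:state-effect-triangle-relative}, and they put us in the setting of Proposition~\ref{prop:01190015}.

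Next I would assemble the required $\GEMod$-relative $\sdist$-algebra. Starting from the $\sdist$-algebra $\tauTotal\colon\sdist\uintv\to\uintv$, Proposition~\ref{prop:emalg-monadmap} produces the monad map $\tauTotal\colon\sdist\to\Sets(\uintv^{(\place)},\uintv)$ whose concrete form is computed in Lemma~\ref{lem:monadMapFromTauTotal}, namely $(\tauTotal)_X(p)(f)=\sum_{x}f(x)p(x)$. Part~1 of that lemma says each $(\tauTotal)_X(p)$ preserves $0$, $\ovee$ and scalar multiplication, i.e.\ factors through $\GEMod(\uintv^X,\uintv)\to\Sets(\uintv^X,\uintv)$; this is exactly the lifting condition~(\ref{eq:lifting-condition}) of Proposition~\ref{prop:lifting-condition}. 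Hence that proposition supplies a $\GEMod$-relative $\sdist$-algebra $\bar{\Omega}=(\uintv,\bar{\tau})$ with underlying $\sdist$-algebra $\uintv_{\tauTotal}$ and monad map $\bar{\tau}=\tauTotal$, which is the ingredient demanded by the finitary recipe.

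I would then discharge the two remaining hypotheses of Theorem~\ref{thm:finitary-healthiness}. Finitarity of $\sdist$ is immediate since subdistributions are finitely supported. Surjectivity of $\bar{\tau}_Y$ for \emph{finite} $Y$ is delivered by Part~2 of Lemma~\ref{lem:monadMapFromTauTotal}, which states the stronger fact that $(\tauTotal)_Y\colon\sdist Y\to\GEMod(\uintv^Y,\uintv)$ is an \emph{isomorphism} for finite $Y$. With all hypotheses in place, Theorem~\ref{thm:finitary-healthiness} yields the equivalence of ``there is $f\colon X\to\sdist Y$ with $\bbP^{\tauTotal}(f)=\varphi$'' and ``$\varphi$ is finitary (Definition~\ref{def:finitaryPredTrasn}) and lifts to $\GEMod$.''

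The last step is to translate ``lifts to $\GEMod$'' into the elementary condition in the statement. Because $V\colon\GEMod\to\Sets$ is faithful, a lift $\bar{\varphi}\colon\uintv^Y\to\uintv^X$ of $\varphi$ is unique when it exists, and its existence is exactly the assertion that $\varphi$ is itself a morphism in $\GEMod$; since the effect-module structure on the products $\uintv^Y$ and $\uintv^X$ is pointwise, being a $\GEMod$-morphism unwinds to preservation of $0$, $\ovee$ and scalar multiplication. I do not expect a genuine difficulty in the argument: the computational heart has been pre-packaged into Lemma~\ref{lem:monadMapFromTauTotal}, so the theorem ``follows immediately.'' The only real care is the bookkeeping of the previous paragraphs---confirming that $\GEMod$ is concrete in the required sense (complete, with faithful limit-preserving $V$) so that the abstract clause ``lifts to $\cat{D}$'' genuinely coincides with pointwise preservation of the generalized-effect-module operations; that verification is where I would spend the attention.
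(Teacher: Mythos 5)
Your proposal is correct and follows essentially the same route as the paper: the paper's proof is exactly the instantiation of Theorem~\ref{thm:finitary-healthiness} with $\cat{D}=\GEMod$, using Lemma~\ref{lem:monadMapFromTauTotal}.1 to discharge the lifting condition of Proposition~\ref{prop:lifting-condition} (yielding the relative algebra $\bar{\Omega}=(\uintv,\bar{\tau})$), and Lemma~\ref{lem:monadMapFromTauTotal}.2 together with finitarity of $\sdist$ to supply the surjectivity-on-finite-sets hypothesis. Your extra bookkeeping---verifying concreteness of $\GEMod$ and unwinding ``lifts to $\cat{D}$'' into pointwise preservation of $0$, $\ovee$ and scalar multiplication via faithfulness of $V$---is exactly the detail the paper leaves implicit in its ``follows immediately'' remark.
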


\auxproof{
\begin{remark}
  The (sub)distribution monad we use is the \emph{finite-support}
  (sub)distribution monad; in \cite{Jacobs15CALCO}
  the \emph{countable-support}
  distribution monad can be represented
  using the dual adjunction
  between $\Set$ and $\DcEMod$, where the latter is the subcategory of
  $\EMod$ that additionally impose morphisms to be Scott-continuous.
\end{remark}
}

\subsection{Healthiness for Other Variations}
\label{sub:probabilistic-other}
For the other two variations of monads and modalities we can use the
same arguments as in Section~\ref{sub:subdist-monad-total}.

For the combination of $\sdist$ and the other modality $\tauPartial$,
we use the same category $\GEMod$ as an ingredient $\cat{D}$; the
difference is that the induced monad map $\tauPartial$ is ``dualized.''
\begin{theorem}[healthiness for $\sdist$ and $\tauPartial$]
  For a function
  $\varphi \colon \uintv^Y \to \uintv^X$ the following are equivalent: 1)
  there is  $f\colon X\to \sdist Y$ such that
  $\varphi = \bbP^{\tauPartial}(f)$; 2) $\phi$ is finitary and is a morphism of
  generalized effect modules. Here
  $[0, 1]$ is regarded as a generalized effect module
  in the  way dual to usual: its zero element is $1$, the partial sum $\owedge$ is defined
  by $x \owedge y = x + y - 1$ (if the right hand side is in $[0, 1]$)
  and  scalar multiplication $\co$ is defined by
  $r \co x = r \cdot x + (1 - r)$. \qed
\end{theorem}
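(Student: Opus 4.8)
The plan is to mimic, step for step, the argument for the total modality in Section~\ref{sub:subdist-monad-total}, the only change being that we equip the dualizing object $\uintv$ with the \emph{dual} generalized effect module structure (zero $=1$, partial sum $\owedge$, and scalar action $r,x\mapsto r\cdot x+(1-r)$) described in the statement. Concretely I would: (i) establish the exact analogue of Lemma~\ref{lem:monadMapFromTauTotal} for $\tauPartial$, namely that the induced monad map $\tauPartial\colon\sdist\to\Sets(\uintv^{(\place)},\uintv)$, which works out to $(\tauPartial)_X(p)(f)=\sum_{x}f(x)p(x)+\bigl(1-\sum_x p(x)\bigr)$, lifts to $\GEMod$ (with $\uintv$ carrying the dual structure) and is moreover a bijection onto $\GEMod(\uintv^Y,\uintv)$ for every \emph{finite} $Y$; (ii) feed part~(i) into the lifting criterion of Proposition~\ref{prop:lifting-condition} to obtain the required $\GEMod$-relative $\sdist$-algebra $\bar\Omega$; and (iii) invoke the finitary healthiness theorem (Theorem~\ref{thm:finitary-healthiness}), using that $\sdist$ is finitary and that surjectivity of $\tauPartial_Y$ on finite $Y$ follows from the bijection in~(i).

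The heart of the matter is step~(i), and here the natural device is the order-reversing involution $\neg\colon\uintv\to\uintv$, $\neg x=1-x$. A direct check shows that $\neg$ is an isomorphism of generalized effect modules from $\uintv$ with its usual structure to $\uintv$ with the dual structure: it sends $0$ to $1$, it intertwines $\ovee$ with $\owedge$ (a sum being defined on one side exactly when its image is defined on the other), and it intertwines the two scalar multiplications. The two modalities are then related by the single identity
\[
(\tauPartial)_X(p)\;=\;\neg\after(\tauTotal)_X(p)\after(\neg)^X ,
\]
where $(\neg)^X\colon\uintv^X\to\uintv^X$ is pointwise negation; equivalently, at the level of algebras, $\tauPartial=\neg\after\tauTotal\after\sdist(\neg)$. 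Reading this identity as a composite of $\GEMod$-morphisms---$(\neg)^X$ and $\neg$ being isomorphisms between the usual and dual structures, and $(\tauTotal)_X(p)$ being a $\GEMod$-morphism by Lemma~\ref{lem:monadMapFromTauTotal}.1---immediately yields the lifting asserted in part~(i). The same identity exhibits $(\tauPartial)_Y$ as the composite of the bijection $(\tauTotal)_Y$ with the conjugation $\varphi\mapsto\neg\after\varphi\after(\neg)^Y$ on homsets, which is itself a bijection; hence Lemma~\ref{lem:monadMapFromTauTotal}.2 transports to the bijectivity of $(\tauPartial)_Y$ for finite $Y$.

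With step~(i) in place, steps~(ii) and~(iii) are routine and formally identical to the total case: Proposition~\ref{prop:lifting-condition} turns the lifted monad map into the relative algebra $\bar\Omega=(\uintv_{\mathrm{dual}},\tauPartial)$, and Theorem~\ref{thm:finitary-healthiness} then characterises the predicate transformers in the image of $\bbP^{\tauPartial}$ as exactly those maps $\varphi\colon\uintv^Y\to\uintv^X$ that are finitary and lift to $\GEMod$, i.e.\ preserve the dual zero $1$, the dual sum $\owedge$ and the dual scalar action---which is the asserted equivalence. I expect the only delicate point to be the bookkeeping in step~(i): one must keep careful track of which copy of $\uintv$ (usual versus dual) sits at each position and check that the partiality of $\ovee$/$\owedge$ is respected by $\neg$, but once the isomorphism property of $\neg$ is nailed down everything else is formal.
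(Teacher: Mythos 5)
Your proposal is correct and takes essentially the same route as the paper: the paper proves this theorem by rerunning the Section~\ref{sub:subdist-monad-total} argument with the induced monad map ``dualized''---i.e.\ establish the $\tauPartial$-analogue of Lemma~\ref{lem:monadMapFromTauTotal} (lifting to $\GEMod$ with the dual structure on $\uintv$, plus bijectivity on finite sets), then feed it into Proposition~\ref{prop:lifting-condition} and Theorem~\ref{thm:finitary-healthiness}, exactly as you do. Your only addition is making the dualization explicit via the involution $\neg x = 1-x$ and the identity $(\tauPartial)_X(p)=\neg\after(\tauTotal)_X(p)\after(\neg)^X$, which is a clean (and correct) way to transport both parts of Lemma~\ref{lem:monadMapFromTauTotal} rather than reverifying them by direct computation.
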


For the (not sub-) distribution monad $\dist$ and the modality $\tau$
in Definition~\ref{def:probModalities}, we use the category $\EMod$
of \emph{effect modules} in place of $\GEMod$.
\begin{definition}[$\EMod$]\label{def:EMod}
 An \emph{effect module} $M$ is a generalized effect module that
 additionally has a \emph{top element} $1$. It is required to be the
 greatest with respect to  the canonical order $\le$ on $M$, defined by $x\le y$ if $y=x\ovee
 z$ for some $z\in M$.
 Effect modules and their morphisms---functions that preserve
 $0,1,\ovee$ and scalar multiplication---form a category denoted by
 $\EMod$.
\end{definition}

\begin{theorem}[healthiness for $\dist$ and $\tau$]
  For a predicate transformer function
  $\varphi \colon \uintv^Y \to \uintv^X$ the following are equivalent: 1)
  there is  $f\colon X\to \dist Y$ such that
  $\varphi = \bbP^{\tau}(f)$; 2) $\phi$ is finitary and is a morphism of
  effect modules, meaning that
  $0$, $1$, $\ovee$ and scalar
  multiplication are preserved by $\varphi$. \qed
\end{theorem}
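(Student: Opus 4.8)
The plan is to follow verbatim the argument of Section~\ref{sub:subdist-monad-total} for $\sdist$ and $\tauTotal$, substituting the monad $\dist$ for $\sdist$ and the category $\EMod$ for $\GEMod$. Concretely, I would first establish the analogue of Lemma~\ref{lem:monadMapFromTauTotal}: the $\dist$-algebra $\tau\colon \dist\uintv\to\uintv$ of Definition~\ref{def:probModalities} induces, via Proposition~\ref{prop:emalg-monadmap}, a monad map $\tau\colon \dist\to\Sets(\uintv^{(\place)},\uintv)$ given concretely by $(\tau)_X(p)(f)=\sum_{x\in X}f(x)\,p(x)$, which is a finite sum since $p$ has finite support.

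Next I would discharge the two hypotheses of the finitary healthiness theorem (Theorem~\ref{thm:finitary-healthiness}). For the lifting condition of Proposition~\ref{prop:lifting-condition}, I check that for each $X$ and each $p\in\dist X$ the map $(\tau)_X(p)\colon\uintv^X\to\uintv$ is a morphism of effect modules: it visibly preserves $0$, $\ovee$ and scalar multiplication by linearity of the sum, and crucially it preserves the top element, since $(\tau)_X(p)(1)=\sum_{x}p(x)=1$ because $p$ is a genuine distribution. This last point is exactly where $\dist$ (total mass one) forces us to work in $\EMod$ rather than $\GEMod$: the top-element preservation that fails for subdistributions now holds. This produces the required $\cat{D}$-relative $\dist$-algebra $\bar{\Omega}=(\uintv,\bar{\tau})$ with $\cat{D}=\EMod$. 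For the surjectivity-on-finite-sets hypothesis I would show that $(\tau)_Y\colon\dist Y\to\EMod(\uintv^Y,\uintv)$ is in fact an isomorphism for every finite $Y$. The inverse sends $\varphi\in\EMod(\uintv^Y,\uintv)$ to $p$ with $p(y)=\varphi(\delta_y)$, where $\delta_y\in\uintv^Y$ is the indicator of $y$. That $p$ is a distribution---i.e.\ $\sum_y p(y)=1$---follows from $\sum_y\delta_y=1$ in $\uintv^Y$ (finiteness of $Y$) together with $\varphi$'s preservation of $\ovee$ and of the top element $1$; that the two assignments are mutually inverse is a direct computation as in the $\sdist$ case. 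Since $\dist$ is finitary (supports are finite), Theorem~\ref{thm:finitary-healthiness} then applies and yields the stated equivalence, with its ``lifts to $\cat{D}$'' clause reading precisely as ``is a morphism of effect modules.''

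The only genuinely delicate point---and the one I would write out carefully---is the top-element bookkeeping in both directions: showing $(\tau)_X(p)$ sends $1$ to $1$ uses totality of $p$, while recovering a distribution (not merely a subdistribution) from $\varphi$ uses $\varphi(1)=1$. Everything else is routine linearity together with the same equalizer and finitary machinery already invoked for $\sdist$ and $\tauTotal$; no new structural input beyond replacing $\GEMod$ by $\EMod$ is needed.
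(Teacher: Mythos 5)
Your proposal is correct and is exactly the paper's intended argument: the paper proves this theorem by rerunning the $\sdist$/$\tauTotal$ development of Section~\ref{sub:subdist-monad-total} with $\dist$ in place of $\sdist$ and $\EMod$ in place of $\GEMod$, i.e.\ establishing the analogue of Lemma~\ref{lem:monadMapFromTauTotal} (lifting via Proposition~\ref{prop:lifting-condition}, isomorphism on finite sets) and then invoking Theorem~\ref{thm:finitary-healthiness} with $\dist$ finitary. Your careful tracking of the top-element preservation in both directions is precisely the point that distinguishes the $\EMod$ case from the $\GEMod$ case, so nothing is missing.
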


\auxproof{
\subsection{Subdistribution Monad with Partial Modality}
\label{sub:subdist-monad-partial}

\subsection{Distribution Monad}
\label{sub:dist-monad}

As in the subdistribution case, we can define
a monad map $\sigma_X \colon \dist X \to \EMod(\uintv^X, \uintv)$
by integration, which is bijective when $X$ is finite.
The proposition below is proved similarly.

\begin{proposition}
  A predicate transformer $\phi \colon \uintv^Y \to \uintv^X$ comes from
  some $P \colon X \to \sdist Y$ if and only
  $\phi$ is finitary and an effect module morphism.
\end{proposition}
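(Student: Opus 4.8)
The plan is to derive this as a direct instance of the finitary healthiness theorem (Theorem~\ref{thm:finitary-healthiness}), specialized to $T = \sdist$ with ambient category $\cat{D} = \GEMod$ of generalized effect modules, dualizing object $\OmegaD = \uintv$, and the integration (total) modality $\tauTotal$. Because the statement concerns $\sdist$ rather than $\dist$, the top element $\bbone$ is not preserved by integration against a subprobability, so $\GEMod$ (not $\EMod$) is the correct ambient category and the morphisms appearing in condition 2) are exactly the $\GEMod$-morphisms, as in Section~\ref{sub:subdist-monad-total}; the partial modality is handled symmetrically using the dual $\GEMod$-structure on $\uintv$. The single ingredient that Theorem~\ref{thm:finitary-healthiness} demands is therefore a $\cat{D}$-relative $\sdist$-algebra $\bar{\Omega} = (\uintv, \tauTotal)$ whose underlying monad map is surjective on finite sets, and the whole proof reduces to supplying this.

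First I would produce $\bar{\Omega}$ from the integration map $(\tauTotal)_X(p)(f) = \sum_{x \in X} f(x)\,p(x)$ of Lemma~\ref{lem:monadMapFromTauTotal}. That this is a monad map $\sdist \to \Sets(\uintv^{(\place)}, \uintv)$ is Proposition~\ref{prop:emalg-monadmap} applied to the $\sdist$-algebra $\tauTotal \colon \sdist\uintv \to \uintv$. Lemma~\ref{lem:monadMapFromTauTotal}.1 states precisely that each $(\tauTotal)_X(p) \colon \uintv^X \to \uintv$ preserves $0$, the partial sum $\ovee$ and scalar multiplication; this is exactly the lifting condition of Proposition~\ref{prop:lifting-condition} for $\cat{D} = \GEMod$, so it promotes $\tauTotal$ to a genuine $\GEMod$-relative $\sdist$-algebra $\bar{\Omega}$. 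The relative-algebra recipe (Theorem~\ref{thm:state-effect-triangle-relative}) then applies and produces the predicate transformer $\bbP^{\tauTotal} \colon \Kl(\sdist) \to \GEMod^{\op}$, whose $\Sets$-underlying form is the $\varphi$ of the statement (via Proposition~\ref{prop:01190015}).

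Next I would discharge the finiteness hypotheses of Theorem~\ref{thm:finitary-healthiness}. The monad $\sdist$ is finitary, since its subdistributions have finite support. The required surjectivity "for each finite $Y$" is supplied by Lemma~\ref{lem:monadMapFromTauTotal}.2, which gives that $(\tauTotal)_Y \colon \sdist Y \to \GEMod(\uintv^Y, \uintv)$ is in fact a bijection for finite $Y$; its inverse sends a $\GEMod$-morphism $\psi$ to the finitely supported $y \mapsto \psi(\delta_y)$, the subprobability bound $\sum_y \psi(\delta_y) \le 1$ holding because $\psi$ preserves $\ovee$ and the indicators $\delta_y$ $\ovee$-combine to $\bbone$, whence $\sum_y \psi(\delta_y) = \psi(\bbone) \le 1$. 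With both hypotheses in hand, Theorem~\ref{thm:finitary-healthiness} yields the equivalence: $\varphi = \bbP^{\tauTotal}(f)$ for some $f \colon X \to \sdist Y$ iff $\varphi$ is finitary (Definition~\ref{def:finitaryPredTrasn}) and lifts to a $\GEMod$-morphism $\uintv^Y \to \uintv^X$. The forward direction's finitariness is also recorded by Corollary~\ref{cor:finitary-pred-transf}. The final identification---that "lifts to $\cat{D}$" is the same as "is a morphism of (generalized) effect modules" in the pointwise structure---is Proposition~\ref{prop:state-pred-duality-forgetful}, which exhibits $V \co [\place, \bar{\Omega}]_{T} \cong \EM(\sdist)(\place, \bar{V}\bar{\Omega})$ and thereby realizes predicates as $\GEMod$-subalgebras of the pointwise power.

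The main obstacle, and the only step with content beyond assembling prior results, is the finite-set bijectivity in Lemma~\ref{lem:monadMapFromTauTotal}.2: checking that the candidate inverse $y \mapsto \psi(\delta_y)$ genuinely lands in $\sdist Y$ (the $\le 1$ bound above) and is two-sided inverse to integration. Everything else is bookkeeping---naturality and unit/multiplication compatibility of $\tauTotal$ (routine via Proposition~\ref{prop:emalg-monadmap}), the pointwise $\ovee$/scalar preservation (Lemma~\ref{lem:monadMapFromTauTotal}.1), and the finitary reduction already packaged inside Theorem~\ref{thm:finitary-healthiness}. One subtlety to flag in the write-up is the one noted above: since the statement is about $\sdist$ rather than $\dist$, the "effect module morphism" of condition 2) must be read as preservation of $0$, $\ovee$ and scalar multiplication only, i.e.\ a morphism of \emph{generalized} effect modules, the top element playing no role.
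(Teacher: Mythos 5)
Your proposal is correct and takes essentially the paper's own route: the paper's proof of this proposition is a one-line ``proved similarly'' deferral to the subdistribution pipeline, which is precisely your assembly of the integration monad map (Proposition~\ref{prop:emalg-monadmap}), the pointwise lifting condition (Proposition~\ref{prop:lifting-condition}), finite-set bijectivity via the inverse $\psi \longmapsto \sum_{y}\psi(\delta_y)\ket{y}$ with the bound $\sum_{y}\psi(\delta_y)=\psi(\bbone)\le 1$, and Theorem~\ref{thm:finitary-healthiness}. The only divergence is in how you repaired the statement's internal typo: you kept $T=\sdist$ and read ``effect module morphism'' as a $\GEMod$-morphism, whereas the surrounding text (the monad map $\sigma_X\colon \dist X \to \EMod(\uintv^X,\uintv)$, section heading ``Distribution Monad'') shows the paper intended $\dist$ with genuine $\EMod$-morphisms---but both repaired statements are the paper's main-text healthiness theorems and are proved by the identical argument, so nothing of substance is affected.
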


}

\section{Alternating Branching}
\label{sec:two-player-setting}
In this last section we  further extend our general framework to
accommodate \emph{alternating} branching, in which two players in
conflicting interests interplay. Its instances are  pervasive
in computer science, such as:  \emph{games}, i.e.\ a two-player variant
of automata, in which two players alternate in choosing next states (see
e.g.~\cite{Wilke01});
and various modeling of \emph{probabilistic systems} where it is
common to include additional nondeterministic branching for modeling
demonic behaviors of the environments (or \emph{schedulers}). See
e.g.~\cite{Sokolova05}.

In~\cite{Hasuo15TCS} the  \emph{modality} recipe
(Theorem~\ref{thm:theModalityRecipe}) is extended to alternating
branching; the central observation  is a compositional
treatment of two branching layers, using a monad $T$ on $\Sets$ (for
one)
and a monad $R$ on $\EM(T)$ (for the other). See~(\ref{eq:monad-composition}) later. It turns out that the same idea
works for our current generalized \emph{relative algebra} recipe
(Theorem~\ref{thm:state-effect-triangle-relative}).

After
the general framework we will describe some examples. A notable
one is (a variant of)  \emph{probabilistic predicate transformers}~\cite{MorganMS96}.

\subsection{The Relative Algebra Recipe for Alternation}
\label{sub:general-theory-two-player}


\begin{definition}[$R\star T$]
  Let $T$ be a monad on $\Set$ and $R$ be a monad on $\EM(T)$.
  Then a monad $R \star T$ is defined by the composite of the
  canonical adjunction $F \dashv U \colon \EM(T) \to \Set$
  and the monad $R$.
  \begin{equation}\label{eq:monad-composition}
   \small
    \vcenter{
    \xymatrix@R-1.2em{
      \Set
        \ar@(lu,ld)_-{R \star T = U R F}[]
        \ar@/^/[rr]^-{F}
        \ar@{}[rr]|-{\bot}
      &&  \EM(T) \ar@/^/[ll]^-{U}
        \ar@(ru,rd)^-{R}[]
    }}
  \end{equation}
\end{definition}
\begin{example}\label{ex:rStarTExample}
 One example is given by $T=\pow$ on $\Sets$ and $R=\Upx$, the
 \emph{up-closed powerset monad}, on $\EM(\pow)\cong \CL_{\biglor}$.
 It is given by
 \begin{math}
      \Upx (L, \le) = (\set{\calS \subset L}{\text{$\calS$ is up-closed}}, \supseteq)
 \end{math}; note that the inclusion order is reversed.
 This combination is for alternating branching where both layers are
 nondeterministic.

Another is given by
 $T=\dist$ on $\Sets$ and $R=\RC$, the
 \emph{nonempty convex powerset monad}, on $\EM(\dist)\cong \Conv$ (the
 category of convex spaces and convex-linear maps). The latter is given
 by
 \begin{multline*}
  \RC X =\{S\subset X\mid \text{
  $S$ is nonempty and \emph{convex-closed}, i.e.\
}
  \\
  \text{  $x_{1},\dotsc,x_{n}\in S$ and $\lambda_{1}+\cdots+\lambda_{n}=1$
  implies $\textstyle\sum_{i}\lambda_{i}x_{i}\in S$}\}\enspace.
 \end{multline*}
 This is
 for  alternating probabilistic and nondeterministic branching.

 It is important that these $R$ are not monads on $\Sets$ per se; they involve
 $T$-algebra structures.
\end{example}

\begin{remark}
  We have comparison functors
  $K \colon \Kl(R \star T) \to \Kl(R)$ and
  $L \colon \EM(R) \to \EM(R \star T)$ as follows.
  \begin{equation}\label{eq:two-player-comparison}
   \small
    \vcenter{
    \xymatrix@R-2em@C-2em{
      \Kl(R \star T) \ar[r]^-{K}
      & \Kl(R) \ar@/^/[rd]
      && \EM(R) \ar@/^/[ld] \ar[r]^-{L}
      & \EM(R \star T) \\
      && \EM(T) \ar@/^/[lu] \ar@/^/[ru] \ar@/^/[d] \ar@{}|{\dashv}[lu]
          \ar@{}|{\dashv}[ru]\\
      && \Set \ar@/^/[u] \ar@(lu, ld)_{R \star T} \ar@{}|{\dashv}[u]
    }}
  \end{equation}
\end{remark}

We aim at reproducing the \emph{relative algebra} recipe
(Theorem~\ref{thm:state-effect-triangle-relative}) for  the current
alternating setting. The first ingredient for the recipe was
a dual adjunction
\begin{math}
  \xymatrix@1@C-.5em{
    {\Sets}
      \ar@/^.4em/[r]
      \ar@{}[r]|-{\bot}
    &
    {\cat{D}^{\op}}
      \ar@/^.4em/[l]
  }
\end{math}
from which we derived a continuation-like monad
$\cat{D}(\OmegaD^{(\place)},\OmegaD)$. For the alternating version of the recipe,
in view of~(\ref{eq:monad-composition})
it is
natural to use
a dual adjunction
\begin{math}
  \xymatrix@1@C-.5em{
    {\EM(T)}
      \ar@/^.4em/[r]
      \ar@{}[r]|-{\bot}
    &
    {\cat{D}^{\op}}
      \ar@/^.4em/[l]
  }
\end{math}
(where $\EM(T)$ replaced $\Sets$ in the non-alternating counterpart).
Interestingly,
for such an ingredient
\begin{math}
  \xymatrix@1@C-.5em{
    {\EM(T)}
      \ar@/^.4em/[r]
      \ar@{}[r]|-{\bot}
    &
    {\cat{D}^{\op}}
      \ar@/^.4em/[l]
  }
\end{math}
we
can exploit   (the original, non-alternating version of) relative algebra
recipe itself. See~(\ref{eq:state-effect-triagle-relative}) in
Theorem~\ref{thm:state-effect-triangle-relative}; this yields a
continuation-like monad $[[\place , \bar{\Omega}]_{T} ,
\bar{\Omega}]_{\cat{D}}$ over $\EM(T)$.

Then it is clear that the next key ingredient in the original
recipe---namely a monad map
$\tau \colon T \to \cat{D}(\OmegaD^{(\place)}, \OmegaD)$---has
 a monad map
$\rho \colon R \to [[\place , \bar{\Omega}]_{T} ,
\bar{\Omega}]_{\cat{D}}$ as its alternating counterpart.

Given such data we obtain predicate transformer semantics.

\begin{definition}[$\bbP^{\rho}$, $\bbP^{(\tau, \rho)}$]
  \label{def:alternating-pred-transf}
  Let $T$ be a monad on $\Set$, $\cat{D}$ be a complete category
  and $\bar{\Omega} = (\OmegaD, \tau)$ be a $\cat{D}$-relative $T$-algebra.
  Moreover let $R$ be a monad on $\EM(T)$
  and $\rho \colon R \to [[\place , \bar{\Omega}]_{T} , \bar{\Omega}]_{\cat{D}}$ be a monad map.
  Here $[[\place , \bar{\Omega}]_{T} , \bar{\Omega}]_{\cat{D}}$ is the monad
  that arises from the dual adjunction
  \begin{math}
    \xymatrix@1@C-.5em{
      {\EM(T)}
        \ar@/^.4em/[r]
        \ar@{}[r]|-{\bot}
      &
      {\cat{D}^{\op}}
        \ar@/^.4em/[l]
    }
  \end{math}
  induced by $\bar{\Omega}$ as in Theorem~\ref{thm:state-effect-triangle-relative}.

 We define a functor $\bbP^{\rho} \colon \Kl(R) \to \cat{D}^{\op}$
 by:
  $\bbP^{\rho} A_a = A_a$ and
  \begin{align*}
    \bbP^{\rho}(A_a \stackrel{f}{\to} B_b
 \text{ in $\Kl(R)$}
)
      &= \bigl(\,[B_b, \bar{\Omega}]_{T} \stackrel{\rho^{\sharp}}{\to}
          [R B_b, \bar{\Omega}]_{T} \stackrel{f^*}{\to}
            [A_a, \bar{\Omega}]_{T} \,\bigr)\enspace .
  \end{align*}
 Furthermore,
by  precomposing
 the comparison functor $K \colon \Kl(R \star T) \to \Kl(R)$,
  we have another functor:
  \begin{equation}
    \bbP^{(\tau, \rho)} = \bbP^{\rho} \co K \;\colon\; \Kl(R \star T) \longto \cat{D}^{\op}
    \enspace .
  \end{equation}
\end{definition}
\noindent
The last functor $\bbP^{(\tau, \rho)}$ is what we want: it interprets
a function $X\to (R\star T)Y$---a computation from $X$ to $Y$, with
alternation of $T$- and $R$-branching---in the category
$\cat{D}$, in a backward manner.

 The following extends Theorem~\ref{thm:partialHealthiness}.

\begin{theorem}[alternating healthiness condition]
  \label{thm:alternating-healthiness}
 Assume the setting of Definition~\ref{def:alternating-pred-transf}, and
 let $X$ and $Y$ be sets. If the map
  \begin{align*}
    U \rho_{FY} \colon U R F Y
        \longto U [[FY, \bar{\Omega}]_{T}, \bar{\Omega}]_{\cat{D}}
        \cong \cat{D}(\OmegaD^Y, \OmegaD)
  \end{align*}
  is surjective (injective), then the action
  $\bbP^{(\tau, \rho)}_{XY} \colon \Kl(R \star T)(X, (R \star T) Y) \to
    \cat{D}(\Omega^Y, \Omega^X)$ of $\bbP^{(\tau, \rho)}$ is surjective
 (injective). \qed
\end{theorem}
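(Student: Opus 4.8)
The plan is to reduce the statement to the non-alternating healthiness result of Theorem~\ref{thm:partialHealthiness}, applied ``one level up'' with the base category $\EM(T)$ in place of $\Sets$, and to account for the passage from $R\star T$ to $R$ by the free/forgetful transpose of $F\dashv U\colon\EM(T)\to\Set$. Write $C = [[\place, \bar{\Omega}]_{T}, \bar{\Omega}]_{\cat{D}}$ for the continuation-like monad on $\EM(T)$ induced by $\bar{\Omega}$, so that $\rho\colon R\to C$ is a monad map. By its very definition (Definition~\ref{def:alternating-pred-transf}), $\bbP^{\rho}$ is the composite of the Kleisli functor $\Kl(\rho)\colon\Kl(R)\to\Kl(C)$ with the Kleisli comparison functor $K_{C}\colon\Kl(C)\to\cat{D}^{\op}$ for the dual adjunction between $\EM(T)$ and $\cat{D}^{\op}$ underlying $C$; on hom-sets $\Kl(\rho)$ acts by postcomposition with $\rho$, and $K_{C}$ is full and faithful, hence bijective on hom-sets.

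First I would unwind the two comparison functors on the relevant hom-sets. The comparison $K\colon\Kl(R\star T)\to\Kl(R)$ from~(\ref{eq:two-player-comparison}) sends $X$ to $FX$, and, being the initial comparison functor for the factorization $\Set\to\Kl(R)$ of $R\star T$, it is full and faithful; concretely its action
\[
  K_{X,Y}\colon \Set(X, URFY)\longto \EM(T)(FX, RFY)
\]
is exactly the transpose bijection $\Phi$ of $F\dashv U$. Since $\bbP^{(\tau,\rho)}=\bbP^{\rho}\co K$ and $\bbP^{\rho}=K_{C}\co\Kl(\rho)$, the action in question factors as
\[
  \bbP^{(\tau,\rho)}_{XY}
  \;=\;
  (K_{C})_{FX,FY}\co(\rho_{FY})_{*}\co K_{X,Y},
\]
where $(\rho_{FY})_{*}\colon\EM(T)(FX,RFY)\to\EM(T)(FX,CFY)$ is postcomposition with $\rho_{FY}$. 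The two outer maps are bijections, so $\bbP^{(\tau,\rho)}_{XY}$ is injective (resp.\ surjective) precisely when $(\rho_{FY})_{*}$ is.

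Next I would transport $(\rho_{FY})_{*}$ across $F\dashv U$. By naturality of $\Phi$ in its second variable, $(\rho_{FY})_{*}$ corresponds under $\Phi$ to plain postcomposition $(U\rho_{FY})_{*}\colon\Set(X, URFY)\to\Set(X, UCFY)$ in $\Sets$ with the function $U\rho_{FY}$. Here I would also record the identifications $[FY,\bar{\Omega}]_{T}\cong\OmegaD^{Y}$ (the value of $[\place,\bar{\Omega}]_{T}$ on a free algebra is the cotensor $\OmegaD^{Y}$, which is what makes $\bbP^{\tau}\cong[\place,\bar{\Omega}]_{T}\co K$ in Theorem~\ref{thm:state-effect-triangle-relative}) and hence $U\,CFY = \cat{D}\bigl([FY,\bar{\Omega}]_{T},\OmegaD\bigr)\cong\cat{D}(\OmegaD^{Y},\OmegaD)$, matching the codomain of $U\rho_{FY}$ in the statement. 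It then remains to observe that in $\Sets$ the map $(U\rho_{FY})_{*}$ is injective iff $U\rho_{FY}$ is injective, and surjective iff $U\rho_{FY}$ is surjective (the latter by choosing a set-theoretic section of the surjection). Chaining these equivalences yields the claim.

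The argument is essentially bookkeeping, and the main thing to get right is the correct matching of the three bijections---the two full-and-faithful comparisons $K$ and $K_{C}$ and the free/forgetful transpose $\Phi$---together with the single naturality step rewriting $(\rho_{FY})_{*}$ as $(U\rho_{FY})_{*}$. The only genuinely delicate points I anticipate are verifying that $K_{X,Y}$ really is the transpose $\Phi$ (so that the two invocations of $F\dashv U$ are compatible) and pinning down $[FY,\bar{\Omega}]_{T}\cong\OmegaD^{Y}$ so that the hypothesis on $U\rho_{FY}$ lands in the intended hom-object; both are routine once the functors are unfolded. In effect this is the proof of Theorem~\ref{thm:partialHealthiness} (in the spirit of~\cite[Theorem~IV.3.1]{MacLane71}, linking monic/epic units to faithfulness/fullness of adjoints) carried out over $\EM(T)$ rather than $\Sets$, with the extra transpose handling the reduction of $R\star T$-computations to $R$-computations on free algebras.
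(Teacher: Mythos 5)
Your proof is correct and is essentially the argument the paper intends: the paper omits the proof of this theorem entirely (it is stated with a \qed and only the remark that it ``extends Theorem~\ref{thm:partialHealthiness}''), and your argument is exactly the evident extension of the paper's proof of that theorem---factor $\bbP^{(\tau,\rho)}$ through the two full-and-faithful comparison functors, reduce to postcomposition with $\rho_{FY}$ on hom-sets, and transport across $F\dashv U$ so that injectivity (resp.\ surjectivity, via an axiom-of-choice section in $\Sets$) of $U\rho_{FY}$ transfers to $\bbP^{(\tau,\rho)}_{XY}$. Your supporting identifications---that $K_{X,Y}$ is the transpose bijection of $F\dashv U$, that $\bbP^{\rho}=K_{C}\co\Kl(\rho)$, and that $[FY,\bar{\Omega}]_{T}\cong\OmegaD^{Y}$ so the hypothesis lands in $\cat{D}(\OmegaD^{Y},\OmegaD)$---are all the right ones and check out.
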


Let us now assume that $\cat{D}$ is concrete, and develop an alternating
counterpart of Section~\ref{subsec:relative-alg-concrete}.


\auxproof{
In the relative algebra recipe,
we need not to define a monad map
$\bar{\tau} \colon T \to \cat{D}(\OmegaD^{(\place)}, \OmegaD)$ directly.
It is enough to give a Eilenberg-Moore $T$-algebra
$\hat{\tau} \colon T{\Omega} \to \Omega$
and check the lifting condition;
moreover we only have to check the condition \emph{pointwise}
as stated in Proposition~\ref{prop:lifting-condition}.
Similarly in alternating setting,
to acquire a monad map
$\bar{\rho} \colon R \to [[\place , \bar{\Omega}]_{T} , \bar{\Omega}]_{\cat{D}}$
it is enough to give an $R$-algebra
$\hat{\rho} \colon R{\Omega_{\tau}} \to {\Omega_{\tau}}$
that is ``compatible''.
Recall that for a $T$-algebra $A_a$,
$[[A_a , \bar{\Omega}]_{T} , \bar{\Omega}]_{\cat{D}}$
is a $T$-subalgebra of
$\Omega_{\tau}^{V [A_a , \bar{\Omega}]_{T}}
  \cong \Omega_{\tau}^{\EM(T)(A_a, \Omega_{\tau})}$
by Proposition~\ref{prop:state-pred-duality-forgetful}.
}
\begin{theorem}
  \label{thm:two-player-lifting}
  Suppose we have a monad $T$ on $\Set$, a complete category $\cat{D}$
 with a  faithful and limit-preserving functor
 $V \colon \cat{D}
\to \Set$,
and
  a $\cat{D}$-relative $T$-algebra $\bar{\Omega} = (\OmegaD, \bar{\tau})$.
  Moreover assume we have a monad $R$ on $\EM(T)$ and
  an $R$-algebra structure $\hat{\rho} \colon R{\Omega_{\tau}} \to {\Omega_{\tau}}$
  on ${\Omega_{\tau}}$. Then the following are equivalent.
  \begin{enumerate}
    \item The monad map $\rho \colon R \to {\Omega_{\tau}}^{\EM(T)(\place, \Omega_{\tau})}$
      that corresponds to $\hat{\rho}$
      (Proposition~\ref{prop:emalg-monadmap})
      lifts to a monad map $\bar{\rho} \colon R \to [[\place ,
	  \bar{\Omega}]_{T}, \bar{\Omega}]_{\cat{D}}$, equipped with a
	  suitable $\cat{D}$-structure.
    \item \emph{(Lifting condition)} For each $T$-algebra $A_a$, the extension map
      $\rho_{A_a}^{\sharp} \colon \EM(T)(A_a, \Omega_{\tau}) \to \EM(T)(R A_a, \Omega_{\tau})$,
      which maps a $T$-algebra morphism $f \colon A_a \to \Omega_{\tau}$
      to $\hat{\rho} \co Rf \colon R A_a \to \Omega_{\tau}$,
      lifts (along $V$) to a $\cat{D}$-morphism
      $\bar{\rho}^{\sharp} \colon [A_a, \bar{\Omega}]_{T} \to [R A_a, \bar{\Omega}]_{T}$.
      That is, there exists $\bar{\rho}^{\sharp}$ such that $V \bar{\rho}^{\sharp} = \rho^{\sharp}$.
    \item \emph{(Pointwise lifting condition)} For each $A_a$ and $x \in U R A_a$,
    the map
    $({\rho}^{\sharp})_x = \pi_x \after \rho^{\sharp} \colon \EM(A_a, \Omega_{\tau}) \to \Omega$
    lifts to a $\cat{D}$-morphism
    $(\bar{\rho}^{\sharp})_x \colon [A_a, \bar{\Omega}] \to \OmegaD$.
    Here $\pi_x$ is defined by the following composite:
    \begin{equation*}
      \pi_x =\bigl(\,
        \EM(T)(R A_a, \Omega_{\tau}) \xxtos{U} \Set(U R A_a, \Omega)
          \xxtos{\ev_x} \Omega \,\bigr) \enspace \text{in $\Set$}
    \end{equation*}
  and $\ev_x$ evaluates a function $h \colon U R A_a \to \Omega$ by $x$.
  \qed
  \end{enumerate}
\end{theorem}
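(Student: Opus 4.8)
The plan is to reproduce, one level up, the proof of Proposition~\ref{prop:lifting-condition}: there the base was $\Sets$ and the question was whether a $T$-algebra structure lifts through the faithful $V$ to a $\cat{D}$-relative one; here the base is $\EM(T)$, the monad $R$ plays the role of $T$, and $[[\place,\bar{\Omega}]_T,\bar{\Omega}]_{\cat{D}}$ plays the role of the continuation-like monad. The starting point is Proposition~\ref{prop:emalg-monadmap} applied to the complete category $\EM(T)$ (complete because $U\colon\EM(T)\to\Set$ creates the limits $\Sets$ has), the monad $R$, and the object $\Omega_{\tau}\in\EM(T)$. It supplies the monad map $\rho\colon R\to\Omega_{\tau}^{\EM(T)(\place,\Omega_{\tau})}$ with extension maps $\rho^{\sharp}_{A_a}\colon\EM(T)(A_a,\Omega_{\tau})\to\EM(T)(RA_a,\Omega_{\tau})$, $f\mapsto\hat{\rho}\co Rf$, and records that being a monad map is equivalent to commutativity of the displayed $\sharp$-square. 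By Proposition~\ref{prop:state-pred-duality-forgetful} we moreover have $V\co[\place,\bar{\Omega}]_T\cong\EM(T)(\place,\Omega_{\tau})$, so each $\rho^{\sharp}_{A_a}$ is literally a $\Set$-map $V[A_a,\bar{\Omega}]_T\to V[RA_a,\bar{\Omega}]_T$.

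For (1)$\Leftrightarrow$(2) I would transpose along the dual adjunction of Theorem~\ref{thm:state-effect-triangle-relative}, whose hom-isomorphism reads $\EM(T)(B,[D,\bar{\Omega}]_{\cat{D}})\cong\cat{D}(D,[B,\bar{\Omega}]_T)$. Taking $B=RA_a$ and $D=[A_a,\bar{\Omega}]_T$ gives $\EM(T)(RA_a,[[A_a,\bar{\Omega}]_T,\bar{\Omega}]_{\cat{D}})\cong\cat{D}([A_a,\bar{\Omega}]_T,[RA_a,\bar{\Omega}]_T)$, which turns a component $\bar{\rho}_{A_a}$ into a $\cat{D}$-morphism $\bar{\rho}^{\sharp}$ of exactly the type in~(2), and conversely; applying $V$ and using Proposition~\ref{prop:state-pred-duality-forgetful} shows $V\bar{\rho}^{\sharp}=\rho^{\sharp}_{A_a}$, i.e.\ $\bar{\rho}^{\sharp}$ is a lift. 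It then remains to match the remaining adjectives: naturality of $\bar{\rho}$ follows from that of $\rho$ since lifts along the faithful $V$ are unique, and the unit/multiplication laws making $\bar{\rho}$ a monad map transfer from those of $\rho$ because the two sides of each law are $\EM(T)$-morphisms with equal images under the faithful $U\colon\EM(T)\to\Set$. This is the routine half.

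The substantive step is (2)$\Leftrightarrow$(3), the pointwise reduction, and here I would lean on Proposition~\ref{prop:state-pred-duality-forgetful}, which realizes $[RA_a,\bar{\Omega}]_T$ as a $\cat{D}$-subobject of the power $\OmegaD^{URA_a}$ carrying the pointwise structure, cut out as the equalizer~(\ref{eq:201601171013}). Given pointwise lifts $(\bar{\rho}^{\sharp})_x\colon[A_a,\bar{\Omega}]_T\to\OmegaD$, the universal property of the $\cat{D}$-power pairs them into a single $\cat{D}$-morphism $[A_a,\bar{\Omega}]_T\to\OmegaD^{URA_a}$; it factors through the subobject $[RA_a,\bar{\Omega}]_T\hookrightarrow\OmegaD^{URA_a}$ because its image under $V$ equals $\rho^{\sharp}_{A_a}$, which lands in $\EM(T)(RA_a,\Omega_{\tau})=V[RA_a,\bar{\Omega}]_T$, and $V$---being limit-preserving and faithful---preserves and reflects that equalizer factorization. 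Conversely a $\bar{\rho}^{\sharp}$ as in~(2) yields each $(\bar{\rho}^{\sharp})_x$ by postcomposing with the embedding and projecting via $\pi_x$. The main obstacle is precisely this reflection argument: one must check that a family of pointwise $\cat{D}$-liftings genuinely assembles into a morphism into the equalizer $[RA_a,\bar{\Omega}]_T$ and not merely into the ambient power, and that the description $\pi_x=\ev_x\co U$ used to phrase~(3) is compatible with the $\cat{D}$-structure; both are discharged by the limit-preservation and faithfulness of $V$ together with the pointwise realization of Proposition~\ref{prop:state-pred-duality-forgetful}.
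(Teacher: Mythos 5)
The paper itself gives no proof of this theorem (it is stated without one, and it is not among the appendix proofs), so there is nothing official to compare against; your proposal fills the gap correctly and with exactly the machinery the paper's development intends: Proposition~\ref{prop:emalg-monadmap} applied in the complete category $\EM(T)$, transposition along the dual adjunction of Theorem~\ref{thm:state-effect-triangle-relative} plus faithfulness of $V$ and $U$ for (1)$\Leftrightarrow$(2), and the equalizer description~(\ref{eq:201601171013}) together with Proposition~\ref{prop:state-pred-duality-forgetful} for (2)$\Leftrightarrow$(3). In particular your key step---that the morphism $[A_a,\bar{\Omega}]_T\to\OmegaD^{URA_a}$ assembled from the pointwise lifts equalizes the parallel pair defining $[RA_a,\bar{\Omega}]_T$ because its $V$-image does, $V$ being faithful and limit-preserving---is precisely the reflection argument that makes the pointwise condition (3) sufficient, and it is the alternating analogue of Proposition~\ref{prop:lifting-condition} one level up, exactly as the paper suggests.
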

\noindent  In the above we implicitly used the isomorphism
  $V [A_a, \bar{\Omega}]_{T} \cong \EM(T)(A_a, \bar{\Omega})$
  given in Proposition~\ref{prop:state-pred-duality-forgetful}.

\subsection{Examples}\label{sub:alternatingExamples}
We list  healthiness results for some alternating situations.
  We indicate how we exploit the
general framework above; the details are omitted for space reasons.

\paragraph{Nondeterminism and Divergence}
In Dijkstra's original work~\cite{Dijkstra76} the first healthiness
result is presented for computations with alternation between
\emph{divergence} and \emph{nondeterminism}. They are described by
functions of the type $X\to (\pow_{+}\star\lift) Y$, where: $\lift X=1+X$ is
the \emph{lift} monad on $\Sets$ (modeling potential divergence); and
$\pow_{+}$ is the \emph{nonempty powerset} monad on
$\EM(\lift)\cong\Sets_{*}$, the category of \emph{pointed sets}. The
latter monad is given specifically by $\pow_{+}(X,x)=\bigl(\{S\subseteq
X\mid S\neq \emptyset\}, \{x\}\bigr)$.

Suitable modalities $\tau$ and
$\rho$ are found to capture the setting of~\cite{Dijkstra76}.
For the category $\cat{D}$ for predicate transformers we introduce
the notion of \emph{strict complete meet-semilattice}. It is a poset
with
the least element $0$ and
arbitrary but nonempty meets.

\begin{theorem}[healthiness for  nondeterminism and divergence]
  For a function
  $\varphi \colon 2^Y \to 2^X$ the following are equivalent: 1)
  there is  $f\colon X \to (\pow_+ \star\lift) Y$ such that
  $\varphi = \bbP^{(\rho, \tau)}(f)$; 2) $\phi$
  preserves $0$ and nonempty meets.
  \qed
\end{theorem}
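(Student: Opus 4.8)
The plan is to instantiate the alternating relative-algebra recipe of Section~\ref{sub:general-theory-two-player} with $T = \lift$, $R = \pow_{+}$, and $\cat{D} = \StrCL$, and then to read off the statement from Theorem~\ref{thm:alternating-healthiness}. The forgetful functor $V \colon \StrCL \to \Sets$ is faithful and preserves limits (nonempty meets, hence all limits, are computed pointwise), so the concrete machinery of Section~\ref{subsec:relative-alg-concrete} applies. I take $\OmegaD = 2 = \{0 < 1\}$, viewed as a strict complete meet-semilattice, and its underlying $\lift$-algebra $\hat{\tau}\colon 1 + 2 \to 2$ to be the one sending the divergence point to $0$ (false). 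This choice is forced, as explained below, and it matches Dijkstra's total-correctness reading in which a diverging computation fails every postcondition.

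First I would produce the $\cat{D}$-relative $\lift$-algebra $\bar{\Omega}$ by discharging the lifting condition of Proposition~\ref{prop:lifting-condition}. Since $\lift X = 1 + X$ is finitary with a single nullary operation (divergence) besides the variables, it suffices to check that each basic operation lifts to a $\StrCL$-morphism: the variable operations give projections $\pi_x \colon 2^X \to 2$, which preserve $0$ and nonempty meets, and the divergence operation gives the constant map with value $0$, which also preserves $0$ and nonempty meets (but not the empty meet --- this is precisely why the codomain category admits only nonempty meets). Had divergence been sent to $1$, the resulting constant-$1$ map would fail to preserve $0$, so no relative algebra over $\StrCL$ would exist; this pins the algebra structure down.

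Next I equip $\Omega_{\tau} = (2, 0)$, regarded as an object of $\EM(\lift) \cong \Sets_{*}$, with the $\pow_{+}$-algebra $\hat{\rho} = \bigwedge$ taking the nonempty meet; it sends the basepoint $\{0\}$ to $0$, so it is a morphism of pointed sets, and associativity of meets makes it a genuine $\pow_{+}$-algebra. To obtain the monad map $\rho$ I verify the pointwise lifting condition of Theorem~\ref{thm:two-player-lifting}: for a $\lift$-algebra $A_a$ and a nonempty subset $S$ of its carrier, the map $g \mapsto \bigwedge_{a \in S} g(a)$ on $\EM(\lift)(A_a, \Omega_{\tau})$ preserves $0$ and nonempty meets pointwise, hence lifts to $\StrCL$. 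This furnishes all the data of Definition~\ref{def:alternating-pred-transf}, and the $\Longrightarrow$ direction of the theorem is then immediate, since $\bbP^{(\tau,\rho)}$ lands in $\StrCL^{\op}$ and therefore produces only $0$- and nonempty-meet-preserving maps.

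The substance of the proof, and the step I expect to be the main obstacle, is the surjectivity hypothesis of Theorem~\ref{thm:alternating-healthiness}: that $U\rho_{FY} \colon URFY \to \cat{D}(\OmegaD^Y, \OmegaD) \cong \StrCL(2^Y, 2)$ is onto. Here $URFY$ is the set of nonempty subsets of $1 + Y$, and unwinding $\rho$ shows that $S$ is sent to the transformer $f \mapsto \bigwedge_{y \in S}\tilde{f}(y)$, where $\tilde{f}$ extends $f$ by $\tilde{f}(*) = 0$; thus the image consists exactly of the constant-$0$ map (whenever $* \in S$) together with the maps $f \mapsto \bigwedge_{y \in S'} f(y)$ for nonempty $S' \subseteq Y$. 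To match this with an arbitrary $\psi \in \StrCL(2^Y, 2)$ I argue lattice-theoretically: nonempty-meet preservation makes $\psi$ monotone and makes $\psi^{-1}(1)$ an up-set closed under nonempty meets, hence either empty (giving $\psi \equiv 0$) or a principal filter $\upcl m$ with $\psi(f) = 1 \iff f \geq m$; preservation of $0$ rules out $m = 0$, so the corresponding $S' = \{y : m(y) = 1\}$ is nonempty and $\psi = (f \mapsto \bigwedge_{y \in S'} f(y))$. Because $2$ is finite and $\StrCL$ carries arbitrary nonempty meets, no finiteness restriction on $Y$ is needed and the plain (non-finitary) Theorem~\ref{thm:alternating-healthiness} applies directly; surjectivity of $U\rho_{FY}$ upgrades, via the product decomposition $\StrCL(2^Y, 2^X) \cong \prod_{x \in X} \StrCL(2^Y, 2)$, to surjectivity of $\bbP^{(\tau,\rho)}_{XY}$, which is exactly the $\Longleftarrow$ direction.
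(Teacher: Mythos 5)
Your proof is correct and follows essentially the same route as the paper's: instantiate the alternating relative-algebra recipe with $T=\lift$, $R=\pow_{+}$, $\cat{D}=\StrCL$, $\OmegaD = 2$ (divergence interpreted as $0$), and discharge the surjectivity hypothesis of Theorem~\ref{thm:alternating-healthiness}. Your lattice-theoretic characterization of $\StrCL(2^{Y},2)$ --- every $0$- and nonempty-meet-preserving map is either constant $0$ or a meet over a nonempty $S'\subseteq Y$ --- is exactly the content of the paper's identification of the induced continuation-like monad on $\Sets_{*}\cong\EM(\lift)$ as $\pow_{+}(\place)+1$, together with its observation that the canonical monad map $\pow_{+}(\place+1)\to\pow_{+}(\place)+1$ (sending $V$ to $\bot$ if $\bot\in V$, and to $V$ otherwise) is pointwise surjective.
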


\paragraph{Alternating Nondeterminism}
Alternation of two layers of nondeterminism is found e.g.\ in \emph{games}. In
program logic point of view---one player  \emph{ensures} a
postcondition, no matter the other player's move is---such computation
is best modeled as a function $X\to (\Upx\star\pow) Y$. Here $\Upx$ is the monad
on $\EM(\pow)\cong \CL_{\biglor}$ from Example~\ref{ex:rStarTExample}.
There are  modalities $\tau$ and
$\rho$ suited to capture the above game-theoretic intuitions;
see~\cite{Hasuo15TCS}. Here we choose a combination in which: the
opponent moves first, and the protagonist follows.
Towards healthiness we take posets and monotone functions in
$\cat{D}$.
\begin{theorem}[healthiness for alternating nondeterminism]
  For a function
  $\varphi \colon 2^Y \to 2^X$ the following are equivalent:
  1) there is  $f \colon X \to (\Upx\star\pow) Y$ s.t.\
  $\varphi = \bbP^{(\rho, \tau)}(f)$; 2) $\phi$ is monotone. \qed
\end{theorem}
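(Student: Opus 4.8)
The plan is to instantiate the general alternating framework of Section~\ref{sub:general-theory-two-player} with the data $T = \pow$, $R = \Upx$, $\OmegaD = 2$ and $\cat{D} = \Pos$, and then to reduce the whole equivalence to a single bijectivity check via Theorem~\ref{thm:alternating-healthiness}. Concretely, I would take $2$ to be the poset $\{0 < 1\}$ and equip it with the $\Pos$-relative $\pow$-algebra $\bar{\Omega}$ obtained by lifting the may-modality $\bigvee \colon \pow 2 \to 2$ along the forgetful functor $V \colon \Pos \to \Sets$; the lift exists because $\bigvee$ is monotone, which is exactly the (non-alternating) lifting criterion of Proposition~\ref{prop:lifting-condition}. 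On the outer layer I would supply the $\Upx$-algebra $\hat{\rho} \colon \Upx\,\Omega_{\tau} \to \Omega_{\tau}$ on $\Omega_{\tau} = (2, \bigvee)$ realizing the opponent-first game reading of~\cite{Hasuo15TCS}.

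First I would obtain the monad map $\rho \colon R \to [[\place, \bar{\Omega}]_{T}, \bar{\Omega}]_{\cat{D}}$ by invoking Theorem~\ref{thm:two-player-lifting}: it suffices to verify its pointwise lifting condition, i.e.\ that for each $\pow$-algebra $A_a$ and each $x \in U\Upx A_a$ the component $(\rho^{\sharp})_x$ lifts to a monotone map $[A_a, \bar{\Omega}]_T \to 2$. Since every map in sight is assembled from the order-preserving operations $\bigvee$ and $\bigwedge$ on the two-element poset, this verification is routine. This step legitimizes the functor $\bbP^{(\tau,\rho)} \colon \Kl(\Upx \star \pow) \to \Pos^{\op}$ whose homset action is the map to be characterized, and fixes its codomain as $\cat{D}(2^Y, 2^X) = \Pos(2^Y, 2^X)$, the monotone maps.

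The heart of the argument is then the computation that $U\rho_{FY} \colon U\Upx F Y \longto \cat{D}(2^Y, 2) = \Pos(2^Y, 2)$ is a bijection, whence Theorem~\ref{thm:alternating-healthiness} yields that $\bbP^{(\tau,\rho)}_{XY}$ is simultaneously surjective and injective, hence a bijection onto $\Pos(2^Y, 2^X)$ --- exactly the equivalence of (1) and (2). I would identify both sides explicitly. The object $FY$ is the free $\pow$-algebra on $Y$, i.e.\ the free complete join-semilattice $(\pow Y, \subseteq)$, so $U\Upx F Y$ is the set of up-closed subsets of $\pow Y$ (the order reversal inside $\Upx$ is invisible on underlying sets). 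On the other hand a monotone map $2^Y \to 2$ is precisely the characteristic map of an up-closed subset of $2^Y = (\pow Y, \subseteq)$. The two sets are therefore canonically in bijection, and I would check that $U\rho_{FY}$ realizes exactly this correspondence.

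The main obstacle is bookkeeping of orders and variance rather than any deep difficulty. One must keep the order reversal in $\Upx(L,\le) = (\{\calS \subseteq L \mid \calS\text{ up-closed}\}, \supseteq)$ straight, resist the self-duality shortcuts warned against in Remark~\ref{rem:pitfalls}, and confirm that the chosen $\hat{\rho}$ (and hence the induced $\rho$) produces genuinely monotone --- not antitone --- predicate transformers in the intended player ordering. Ensuring that $U\rho_{FY}$ lands on, and surjects onto, all of $\Pos(2^Y, 2)$ rather than some strictly smaller or differently-oriented subset is the one place where the game-theoretic reading of~\cite{Hasuo15TCS} genuinely enters the argument.
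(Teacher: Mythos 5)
Your architecture is exactly the paper's route for this theorem: instantiate the alternating recipe with $T=\pow$, $R=\Upx$, $\cat{D}=\Pos$, $\OmegaD=2$, obtain $\tau$ by lifting the may-modality $\bigvee\colon\pow 2\to 2$ via Proposition~\ref{prop:lifting-condition}, obtain $\rho$ from the demonic $\Upx$-algebra $\hat{\rho}=\bigwedge$ on $\Omega_{\tau}=(2,\bigvee)$ via the pointwise lifting condition of Theorem~\ref{thm:two-player-lifting}, and conclude from Theorem~\ref{thm:alternating-healthiness} once $U\rho_{FY}$ is known to be bijective. (The paper's construction in fact yields the stronger statement that $\rho$ is an isomorphism of monads $\Upx\cong[[\place,\bar{\Omega}]_{T},\bar{\Omega}]_{\cat{D}}$, given at every complete join-semilattice $L$ by $\calS\mapsto\lambda f\ldotp\bigwedge_{x\in\calS}f(x)$; bijectivity at the free algebras $FY$, which is all you invoke, suffices.)

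There is, however, a concrete error at the step you yourself call the heart of the argument, and the verification you propose there would fail as written. Unwinding the definitions, $U\rho_{FY}$ sends an up-closed family $\calS\subseteq\pow Y$ to the predicate transformer
\begin{displaymath}
  f\;\longmapsto\;\bigwedge_{S\in\calS}\,\bigvee_{y\in S}f(y)\enspace,
\end{displaymath}
which (using up-closedness of $\calS$) takes the value $1$ exactly when $f^{-1}(0)\notin\calS$. Under the identification $2^{Y}\cong(\pow Y,\subseteq)$ this is the characteristic map not of $\calS$ but of its De Morgan dual $\set{T\subseteq Y}{Y\setminus T\notin\calS}$. For instance $\calS=\pow Y$ is sent to the constantly-$0$ transformer (note that $\emptyset\in\calS$ and the empty join is $0$), whereas the correspondence you describe would produce the constantly-$1$ one; dually $\calS=\emptyset$ is sent to constant $1$, not constant $0$. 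So $U\rho_{FY}$ does \emph{not} realize the naive characteristic-map bijection; this is precisely the self-duality trap of Remark~\ref{rem:pitfalls} that you cite. The damage is local and repairable: taking the De Morgan dual is an involution on up-closed families of subsets of $Y$, so $U\rho_{FY}$ is that involution composed with your bijection and hence is still bijective, after which Theorem~\ref{thm:alternating-healthiness} gives the theorem exactly as you intend. But the identification step must be replaced by the computation above; as proposed, the check that $U\rho_{FY}$ ``realizes exactly this correspondence'' would not go through.
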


\paragraph{Nondeterminism and Probability}
Finally we study a common setting in the study of probabilistic systems
where: a demonic nondeterministic choice occurs first, followed by an
angelic probabilistic choice. This is modeled by a function $X\to
(\RC\star \dist)Y$, where $\RC$ on $\EM(\dist)\cong\Conv$ is from
Example~\ref{ex:rStarTExample}.
Predicate transformer semantics of such computations has $[0,1]$ as the
domain of truth values; and suitable modalities $\tau$ and
$\rho$ are found much like in~\cite{Hasuo15TCS}. The outcome is (a
slight variation of) \emph{probabilistic predicate transformers} in~\cite{MorganMS96}.

For healthiness we use the category
of: effect algebras (Definition~\ref{def:EMod}) and what we call
\emph{regular-sublinear maps} between them. The latter are subject to:
(\emph{subadditivity}) if $x \perp y$ then $f(x) \perp f(y)$ and we
have $f(x) \ovee f(y) \leq f(x \ovee y)$; (\emph{scaling}) $f(\lambda
x) = \lambda f(x)$; and (\emph{translation}) $f(x \ovee \lambda 1) =
f(x) \ovee \lambda 1$ if $x \perp \lambda 1$. It deviates from
\emph{sublinear maps}~\cite{MorganMS96} in that we require $=$ in
(translation).


In the following we assume $Y$'s finiteness; this is like in~\cite{MorganMS96}.

\begin{theorem}[healthiness for nondeterminism and probability]
  Assume $Y$ is finite.
  For a function
  $\varphi \colon [0, 1]^Y \to [0, 1]^X$ the following are equivalent:
  1) there is  $f \colon X \to (\RC\star \dist) Y$ such that
  $\varphi = \bbP^{(\rho, \tau)}(f)$; 2) $\phi$ is regular-sublinear.
 \qed
\end{theorem}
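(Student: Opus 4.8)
The plan is to instantiate the alternating relative-algebra recipe of Section~\ref{sub:general-theory-two-player} with $T=\dist$, $R=\RC$ on $\EM(\dist)\cong\Conv$ (Example~\ref{ex:rStarTExample}), dualizing object $\OmegaD=\uintv$, and $\cat{D}$ the category of effect modules (Definition~\ref{def:EMod}) and regular-sublinear maps. First I would fix the two monad maps. The relative $\dist$-algebra $\bar{\Omega}=(\uintv,\tau)$ is the one underlying the purely probabilistic case: its component $\tau_X(p)\colon\uintv^X\to\uintv$, $f\mapsto\sum_x f(x)p(x)$, is (even) linear, hence trivially regular-sublinear, so $\bar{\Omega}$ is a genuine $\cat{D}$-relative $\dist$-algebra. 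For the second layer I would take the $\RC$-algebra $\hat\rho\colon\RC\,\Omega_{\tau}\to\Omega_{\tau}$ given by the \emph{demonic} choice $\hat\rho(S)=\inf S$ on convex-closed subsets of $\uintv$; this is convex-linear because infimum commutes with Minkowski convex combinations of intervals. I would then obtain $\rho\colon R\to[[\place,\bar{\Omega}]_T,\bar{\Omega}]_{\cat{D}}$ from $\hat\rho$ by verifying the pointwise lifting condition of Theorem~\ref{thm:two-player-lifting}: for a convex space $A_a$ and $x\in\RC(UA_a)$, the functional $h\mapsto\inf_{a\in x}h(a)$ on $\EM(\dist)(A_a,\uintv)$ is regular-sublinear, since infima are superadditive ($\inf h+\inf h'\le\inf(h+h')$), positively homogeneous, and satisfy $\inf_{a\in x}(h(a)+\lambda)=\inf_{a\in x}h(a)+\lambda$ against the top element. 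With $\rho$ in place, the direction $1\Rightarrow 2$ is immediate: $\bbP^{(\tau,\rho)}(f)$ is by construction a morphism of $\cat{D}^{\op}$, and morphisms of $\cat{D}$ are exactly the regular-sublinear maps.

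For the converse I would invoke Theorem~\ref{thm:alternating-healthiness}. Since a map $\uintv^Y\to\uintv^X$ is regular-sublinear iff each $\pi_x\co\varphi$ is, it suffices to prove that $U\rho_{FY}$ is \emph{bijective} for finite $Y$. Unwinding $FY=\dist Y$, $RFY=\RC(\dist Y)$ and the isomorphism $U[[FY,\bar{\Omega}]_T,\bar{\Omega}]_{\cat{D}}\cong\cat{D}(\uintv^Y,\uintv)$ supplied by that theorem, this map sends a nonempty convex-closed $S\subseteq\dist Y$ to the regular-sublinear functional
\begin{equation*}
  \Phi_S\colon\uintv^Y\longto\uintv,\qquad \Phi_S(f)=\inf_{p\in S}\,\textstyle\sum_{y\in Y}f(y)\,p(y)\enspace,
\end{equation*}
i.e.\ the (concave) support function of $S$ restricted to $\uintv$-valued predicates. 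Surjectivity of $U\rho_{FY}$ yields $2\Rightarrow 1$ and injectivity yields faithfulness, so together they give the claimed equivalence for finite $Y$.

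The remaining, and principal, task is to show that $S\mapsto\Phi_S$ is a bijection between nonempty convex-closed subsets of $\dist Y$ and regular-sublinear functionals $\uintv^Y\to\uintv$ --- a finite-dimensional convex-duality statement, for which finiteness of $Y$ (hence compactness of $\dist Y$) is essential. Injectivity is the easy half: if $S\ne S'$, pick $p_0$ lying in one set but not the other and separate it from the (compact convex) other set by a linear functional via Hahn--Banach; since on $\dist Y$ the constraint $\sum_y p(y)=1$ makes any positive affine renormalization of the coefficients leave such functionals unchanged, the separating functional can be taken of the form $\sum_y f(y)(\place)$ with $f\in\uintv^Y$, whence $\Phi_S(f)\ne\Phi_{S'}(f)$. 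The hard part will be surjectivity (biconjugation): given regular-sublinear $\varphi$ I would set $S_\varphi=\{\,p\in\dist Y\mid \sum_y f(y)p(y)\ge\varphi(f)\text{ for all }f\in\uintv^Y\,\}$, a closed convex subset of $\dist Y$, and must prove it is \emph{nonempty} and satisfies $\Phi_{S_\varphi}=\varphi$. The inequality $\Phi_{S_\varphi}\ge\varphi$ is definitional; the reverse inequality and nonemptiness follow from a supporting-hyperplane argument applied to the concave, positively homogeneous $\varphi$, where the \emph{equality} form of the translation axiom $\varphi(f\ovee\lambda 1)=\varphi(f)\ovee\lambda 1$ is exactly what forces the dominating affine functionals to have coefficients summing to $1$ --- i.e.\ to be genuine elements of $\dist Y$ rather than sub- or super-distributions --- while compactness of $\dist Y$ guarantees the infima are attained. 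This separation-and-normalization step, matching the conventions of the alternating recipe (demon first, $\inf$; probability second, expectation) against the effect-module axioms, is the crux of the whole argument.
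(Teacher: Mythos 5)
Your proposal is, in substance, the same proof as the paper's: the same category $\cat{D}$ of effect modules and regular-sublinear maps, the same two modalities ($\tau$ given by expectation, $\hat{\rho}$ given by $\inf$), the same verification of the (pointwise) lifting condition via Theorem~\ref{thm:two-player-lifting}, the same reduction through Theorem~\ref{thm:alternating-healthiness} to surjectivity of $U\rho_{FY}$ for finite $Y$, and the same crux: for a regular-sublinear $\varphi\colon\uintv^{Y}\to\uintv$ the paper also forms the polar set $C=\set{p\in\dist Y}{\varphi(f)\le\sum_{y}f(y)p(y)\text{ for all }f}$ and proves $\varphi(f)=\inf_{p\in C}\sum_{y}f(y)p(y)$ by a separation argument on the compact simplex --- concretely, it uses compactness of $\dist Y\cong\Delta^{Y}$ to reduce the defining constraints to finitely many and then applies Farkas' lemma, with the \emph{equality} form of the translation axiom doing exactly the normalization work you describe. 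Your supporting-hyperplane phrasing is the same classical finite-dimensional duality, so both implications of the stated equivalence are handled as in the paper.

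The one step in your write-up that would fail is the injectivity of $U\rho_{FY}$, i.e.\ of $S\mapsto\Phi_S$ where $\Phi_S(f)=\inf_{p\in S}\sum_{y}f(y)p(y)$. Elements of $\RC(\dist Y)$ are only required to be \emph{convex-closed} in the sense of Example~\ref{ex:rStarTExample} --- closed under convex combinations --- not topologically closed, so your Hahn--Banach separation of a point from a ``compact convex'' set has nothing to lean on. Concretely, for $Y=\{a,b\}$ identify $\dist Y\cong[0,1]$; then $S=(0,1)$ and $S'=[0,1]$ are distinct elements of $\RC(\dist Y)$ with $\Phi_S=\Phi_{S'}$, since a linear functional has the same infimum over a convex set as over its closure. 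Hence $U\rho_{FY}$ is surjective but not injective, and $\bbP^{(\tau,\rho)}$ is full but not faithful. Fortunately this claim is superfluous: as you note yourself, direction 1)~$\Rightarrow$~2) is by construction once the lifting conditions are checked, and direction 2)~$\Rightarrow$~1) needs only surjectivity; dropping the injectivity/faithfulness remark leaves a correct proof that coincides with the paper's.
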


\auxproof{

\subsection{Nondeterminism with Failure}
\label{sub:nondet-failure}

In Dijkstra's original paper~\cite{Dijkstra76} where the healthiness condition
is first introduced, they essentially models a nondeterministic
program as a map $f \colon X \to \pow_+(Y+1)$
where $\pow_+(X)$ is the set of non-empty subsets of $X$.
In the two-player PT situation framework, this situation can be understood
as the case of $T = \calL$ and $R = \pow_+ \colon \Sets_* \to \Sets_*$,
where $\calL$ is the ``lift'' monad $\calL = (\place) + 1$ on $\Sets$
and $\Sets_*$ is the category of pointed sets.
\memo{Ref: \cite{morgan1996unifying}}

According to \cite{morgan1996unifying}, the weakest precondition
for a computation $f \colon X \to \pow_+(Y + 1)$ is
$\wpre(f) \colon 2^{Y + 1} \to 2^{X + 1}$, defined by
\begin{align*}
 wp(f)(\pi)(x) = \bigland_{y \in r(x)} y \enspace .
\end{align*}
The healthiness condition for $p \colon 2^{Y + 1} \to 2^{X + 1}$ is
\begin{itemize}
 \item strictness: $p(\mathbf 0) = \mathbf 0$, where $\mathbf 0$ is the constant predicate that returns 0, and
 \item positive conjunctivity: for every nonempty set $I$, $p(\bigland_{i \in I}\pi_i) = \bigland_{i\in I}p(\pi_i)$.
\end{itemize}
We introduce a new category $\StrCL$ of strict complete meet-semilattices, where objects have 0 and positive meets, and arrows preserve 0 and positive meets.
We have adjunctions
\begin{align*}
 \xymatrix{
 \Sets \ar@/^/[rr]^{F_\calL = (\place) + 1} && \Sets_* = \EM(\calL)\ar@/^/[ll]^{U_\calL = (\place)} \ar@/^/[rr]^{2^{(\place)}}&&  \StrCL^\op \ar@/^/[ll]^{\StrCL(\place,2)}
 }
\end{align*}
where the left adjunction yields $\calL$ and the right adjunction the ``powerset-like'' monad $(X, x) \mapsto \pow_+(X \setminus \{x\}) + 1$ on $\Sets_*$.
The composition of them yield the monad $\pow_+(\place) + 1$.
There is a (pointwise) surjective monad map $\alpha \colon \pow_+ \star \calL = \pow_+(\place + 1) \To \pow_+(\place) + 1$, defined by
\begin{align*}
 \alpha_X \colon \pow_+(X + 1) \longto \pow_+(X) + 1 ; \quad V \mapsto
 \begin{cases}
  \bot & (\bot \in V)\\
  V & (\mbox{otherwise})
 \end{cases}
 \enspace .
\end{align*}
Therefore, the healthiness condition is trivial in this situation: a predicate transformer $p \colon 2^Y \to 2^X$ satisfies healthiness condition if and only if $p$ is an arrow in $\StrCL$.

\subsection{Nondeterministic-Nondeterministic Programs}
\label{sub:nondet-nondet}

\begin{notation}
  When we want to make explicit the distinction between
  subsets and characteristic functions,
  we write $D_f = \set{x \in X}{f(x) = 1}$
  for a characteristic function $f \colon X \to 2$ on $X$,
  and $\chi_f = \lambda x \in X \ldotp (x \in S)$ for $S \subset X$.
\end{notation}

\begin{definition}
  We define two monads; the one is $\UP$ defined on $\Sets$ and
  the other is $\Upx$ on $\CL_{\biglor}$.
  \begin{align*}
    \UP X &:= \set{\calS \subset \pow X}{\text{$\calS$ is up-closed}} \\
    \Upx (L, \le) &:= (\set{\calS \subset \pow L}{\text{$\calS$ is up-closed}}, \supseteq)
  \end{align*}
  Note that in $\Upx$, the order is the \emph{reverse} inclusion order
  rather than the usual inclusion order.
  More intuitively, we define the order by $L \le L'$ if and only if
  for each $x \in L$ there is some $y \in L'$ such that $x \le y$;
  it is equivalent to the reverse inclusion order.

  The unit and the multiplication are defined by
  \[
    \eta_X \colon X \longto \Upx X; x \longmapsto \upcl \{x\}
  \] 
  \cont
  For $N \colon X \to \UP Y$, we define a corresponding
  predicate transformer $\phi \colon 2^Y \to 2^X$ as
  $\phi(N) = \lambda x \ldotp \biglor_{S \in N(x)} \p*{ \bigland_{y \in S} f(y)}
    = \lambda x \ldotp \p*{D_f \in N(x)}$.
\end{definition}

\begin{remark}
  The monad $\UP$ on $\Set$ can be decomposed into the (monadic) adjunction
  $F \dashv U \colon \EM(\pow) \cong \CL_{\biglor} \to \Set$ and
  the up-closed powerset monad $\Upx$ on $\CL_{\biglor}$;
  that is, $\UP = U \co \Upx \co F$.
\end{remark}

We have the adjunction
$[\place, 2]_{\le} \dashv [\place, 2]_{\biglor} \colon \CL_{\biglor} \to \Pos^{\op}$
both endowed with the natural orders.

\begin{proposition}
  \label{prop:SPisomUpx}
  The associated monad $[[\place, 2]_{\biglor}, 2]_{\leq}$ is
  isomorphic to $\Upx$ by
  \begin{align*}
    \sigma_L \colon \Upx L &\longto [[L, 2]_{\biglor}, 2]_{\leq};
      U \longmapsto \lambda f \ldotp \bigland_{x \in U} f(x) \enspace .
  \end{align*}
\end{proposition}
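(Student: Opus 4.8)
The plan is to prove this in the same spirit as Lemma~\ref{lem:monad-isom-jslat}, but one level up: the relevant dual adjunction is now $[\place,2]_{\le}\dashv[\place,2]_{\biglor}\colon\CL_{\biglor}\to\Pos^{\op}$ with dualizing object $2$, so that $\Upx$ must be identified with the induced ``double\-dualization'' monad $[[\place,2]_{\biglor},2]_{\le}$ on $\CL_{\biglor}$. The whole argument is driven by the poset isomorphism $[L,2]_{\biglor}\cong L^{\op}$ of Remark~\ref{rem:pitfalls}: concretely, every join\-preserving map $L\to 2$ is of the form $g_a\colon x\mapsto[x\not\le a]$ for a unique $a\in L$ (namely $a=\biglor g^{-1}(0)$), and $g_a\le g_b$ in the pointwise order iff $b\le a$. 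I would use this to make all of the computations below elementary and order\-theoretic, while heeding the warning of Remark~\ref{rem:pitfalls} that the shortcut $L^{\op}\cong[L,2]_{\biglor}$ is error\-prone, and so carrying out the delicate steps on the test maps $g_a$ explicitly rather than abstractly.

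First I would establish that $\sigma_L$ is a well\-defined $\CL_{\biglor}$\-isomorphism. Evaluating on the test maps gives $\sigma_L(U)(g_a)=\bigland_{x\in U}[x\not\le a]$, which equals $1$ iff $U\cap\dwcl a=\emptyset$; since meets are monotone, $\sigma_L(U)$ is a monotone map $[L,2]_{\biglor}\to 2$ and hence lands in $[[L,2]_{\biglor},2]_{\le}$. For injectivity, if $b\in U\setminus U'$ with both up\-closed, then $U'\cap\dwcl b=\emptyset$ (up\-closedness of $U'$ forbids any $c\le b$ in $U'$) while $b\in U\cap\dwcl b$, so $\sigma_L(U)$ and $\sigma_L(U')$ differ on $g_b$; the same computation shows $\sigma_L$ reflects the order, and it visibly preserves it since a meet over a larger index set is smaller (recall $\Upx$ carries the reverse\-inclusion order). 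For surjectivity, given monotone $\phi$ the set $D=\{a:\phi(g_a)=1\}$ is down\-closed, and $U:=L\setminus D$ is up\-closed with $\sigma_L(U)=\phi$, because $U\cap\dwcl a=\emptyset\iff\dwcl a\subseteq D\iff a\in D$. Thus $\sigma_L$ is an order\-isomorphism between complete join\-semilattices, hence a $\CL_{\biglor}$\-iso, with inverse $\phi\mapsto L\setminus\{a:\phi(g_a)=1\}$. Naturality in $L$ I would then check by transporting the functorial actions---$\Upx h(U)=\upcl h[U]$ on one side and the contravariant double\-dualization $[[h,2]_{\biglor},2]_{\le}$ on the other---through the $g_a$\-parametrization, where it reduces to the monotonicity of $h$.

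It remains to see that the natural iso $\sigma$ is a monad map, i.e.\ compatible with units and multiplications. The unit clause is immediate: since join\-preserving maps are monotone, $\sigma_L(\eta^{\Upx}_L(x))(f)=\bigland_{y\ge x}f(y)=f(x)$, which is exactly the unit $x\mapsto\lambda f.\,f(x)$ of the double\-dualization monad. The multiplication clause is the main obstacle, and the one place where a conceptual reformulation helps: through $[L,2]_{\biglor}\cong L^{\op}$ and the fact that $[P,2]_{\le}$ is the lattice of up\-sets of $P$, the monad $[[\place,2]_{\biglor},2]_{\le}$ becomes the down\-set monad $L\mapsto\Dw(L)$ (down\-closed subsets, inclusion order) whose multiplication is plain union, and complementation $D\mapsto L\setminus D$ is a monad iso from $\Dw$ to $\Upx$. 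I would therefore factor $\sigma$ as the composite of these two isomorphisms and verify that the continuation\-style multiplication of $[[\place,2]_{\biglor},2]_{\le}$ matches ``union of down\-sets'' under the first identification; this is the computation I expect to be most delicate, precisely because of the stacked order\-reversals (the $\op$ in $L^{\op}$, the $\op$ in $\Pos^{\op}$, and the reverse inclusion in $\Upx$), and I would discharge it by unfolding both multiplications on the test maps $g_a$ rather than trusting the identifications blindly.
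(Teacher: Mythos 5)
Your first two paragraphs are essentially right, and they are the right approach: the paper itself states Proposition~\ref{prop:SPisomUpx} \emph{without} proof (it sits in an auxiliary block), so the only model it offers is the appendix proof of Lemma~\ref{lem:monad-isom-jslat}, which your argument parallels one level up. Writing $T$ for the monad $[[\place,2]_{\biglor},2]_{\le}$, your parametrization $f=g_a$, the key simplification $\sigma_L(U)(g_a)=[\,U\cap\dwcl a=\emptyset\,]=[\,a\notin U\,]$ (the second equality because $U$ is up-closed), the injectivity and surjectivity arguments, order-preservation and reflection, naturality, and the unit law are all sound.

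The genuine gap is your route for the multiplication law. Complementation $D\mapsto L\setminus D$ is \emph{not} a monad map from the down-set monad $(\Dw,\bigcup)$ to $\Upx$: it already fails on units, since $L\setminus\dwcl x=\set{y}{y\not\le x}$ does not contain $x$, whereas $\eta^{\Upx}_L(x)=\upcl x$ does; so this fails for every $L$. Consequently your other intermediate claim must fail too: if your identification $\theta\colon TL\cong\Dw L$, $\phi\mapsto\set{a}{\phi(g_a)=1}$, were a monad isomorphism onto $(\Dw,\bigcup)$, then complementation $=\theta\co\sigma$ would be a composite of monad maps (granting that $\sigma$ is one, which is the true statement being proved), a contradiction. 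So both halves of your factorization are individually false --- their failures cancel only in the composite --- and the factored route would not survive the very test-map check you prescribe for it. The repair is to run your fallback computation on $\sigma$ itself rather than on the factors. Taking $\mu^{\Upx}=\bigcup$ (a definition the paper leaves as ``cont.''), and using that $\ev_{g_a}\colon TL\to 2$ preserves joins (joins in $TL$ are pointwise) --- so that it is a legitimate input to elements of $TTL$ and, being monotone, lets meets over up-closures collapse onto their generating sets --- one gets for each $\mathcal{S}\in\Upx\Upx L$:
\begin{align*}
\mu^{T}_{L}\bigl(\sigma_{TL}(\Upx\sigma_L(\mathcal{S}))\bigr)(g_a)
\;=\; \bigland_{U\in\mathcal{S}} \sigma_L(U)(g_a)
\;=\; \bigland_{U\in\mathcal{S}} [\,a\notin U\,]
\;=\; [\,a\notin \textstyle\bigcup\mathcal{S}\,]
\;=\; \sigma_L\bigl(\textstyle\bigcup\mathcal{S}\bigr)(g_a)\enspace,
\end{align*}
which is exactly the required multiplication square. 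With this substitution your proof is complete; as written, the detour through $\Dw$ and complementation is a step that would fail.
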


\begin{proposition}
  A predicate transformer $\phi \colon 2^Y \to 2^X$ comes from
  some $N \colon X \to \UP Y$ if and only if it is monotone.
\end{proposition}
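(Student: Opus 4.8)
The plan is to prove the equivalence by exhibiting an explicit correspondence between Kleisli arrows $N \colon X \to \UP Y$ and monotone maps, fibrewise over $X$; the conceptual key is that an up-closed family in $\pow Y$ is exactly the ``$1$-region'' of a monotone map $2^{Y} \to 2$. First I would simplify the defining formula for the transformer attached to $N$. Writing $D_f = \set{y \in Y}{f(y) = 1}$ for $f \colon Y \to 2$, the inner meet $\bigland_{y \in S} f(y)$ equals $1$ exactly when $S \subseteq D_f$; since each $N(x)$ is up-closed, some $S \in N(x)$ satisfies $S \subseteq D_f$ if and only if $D_f$ itself lies in $N(x)$. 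Hence the transformer induced by $N$ collapses to the working formula: it sends $f$ to the predicate $x \mapsto [\, D_f \in N(x)\,]$, which is correct also in the boundary cases $N(x) = \emptyset$ and $\emptyset \in N(x)$. This reformulation is the only place where up-closedness of $N(x)$ is genuinely used, so I regard it as the technical heart of the argument.

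For the implication $(1) \Rightarrow (2)$ I would take $\phi$ arising from some $N$ and suppose $f \le g$ pointwise, so that $D_f \subseteq D_g$. If $\phi(f)(x) = 1$ then, by the working formula, $D_f \in N(x)$; up-closedness of $N(x)$ together with $D_f \subseteq D_g$ yields $D_g \in N(x)$, i.e.\ $\phi(g)(x) = 1$. Thus $\phi(f) \le \phi(g)$ and $\phi$ is monotone.

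For $(2) \Rightarrow (1)$, given a monotone $\phi$ I would define the candidate $N \colon X \to \UP Y$ by $N(x) = \set{S \subseteq Y}{\phi(\chi_S)(x) = 1}$, where $\chi_S$ is the characteristic function of $S$. Two verifications remain. First, each $N(x)$ is up-closed: if $S \in N(x)$ and $S \subseteq S'$ then $\chi_S \le \chi_{S'}$, so monotonicity of $\phi$ forces $\phi(\chi_{S'})(x) = 1$, whence $S' \in N(x)$; this is exactly what is needed for $N$ to be well-typed into $\UP Y$. Second, the round trip recovers $\phi$: by the working formula the transformer induced by this $N$ sends $f$ to $x \mapsto [\, D_f \in N(x)\,] = [\,\phi(\chi_{D_f})(x) = 1\,]$, and since $f = \chi_{D_f}$ and $\phi(f)(x)$ already lies in $\sett{0, 1}$, this equals $\phi(f)(x)$.

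The whole argument is routine; its only genuine content is the monotone-maps/up-sets dictionary applied pointwise over $X$. I do not anticipate a real obstacle beyond keeping the three encodings of a ``predicate on $Y$''---a subset $S$, its characteristic function $\chi_S$, and the support $D_f$---carefully aligned, and checking the collapse of the join--meet formula to the membership test $D_f \in N(x)$, which is precisely where up-closedness earns its keep.
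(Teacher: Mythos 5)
Your proof is correct and complete: the reduction of $\biglor_{S \in N(x)} \bigland_{y \in S} f(y)$ to the membership test $D_f \in N(x)$ is valid (including the boundary cases $N(x) = \emptyset$ and $\emptyset \in N(x)$), the forward direction uses up-closedness exactly where it is needed, and the inverse construction $N(x) = \set{S \subseteq Y}{\phi(\chi_S)(x) = 1}$ together with the identity $f = \chi_{D_f}$ closes the round trip. However, your route is genuinely different from the paper's. The paper obtains this statement as an instance of its general alternating framework (Section~\ref{sec:two-player-setting}): the monad $\UP$ is decomposed as $\Upx \star \pow$, where $\Upx$ is the up-closed powerset monad on $\EM(\pow) \cong \CL_{\biglor}$; the dual adjunction between $\CL_{\biglor}$ and $\Pos^{\op}$ induced by the dualizing object $2$ yields a continuation-like monad $[[\place, 2]_{\biglor}, 2]_{\le}$; the crucial lemma is that the monad map $\sigma_L \colon \Upx L \to [[L, 2]_{\biglor}, 2]_{\le}$, $U \mapsto \lambda f \ldotp \bigland_{x \in U} f(x)$, is an isomorphism; and Theorem~\ref{thm:alternating-healthiness} then converts pointwise bijectivity of this monad map into fullness and faithfulness of $\bbP^{(\tau,\rho)}$, which is the healthiness statement (the paper only sketches these steps, omitting details). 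Note that the paper's key isomorphism is precisely your ``up-sets are monotone maps into $2$'' dictionary, but stated once for an arbitrary complete join-semilattice $L$ (then instantiated at free algebras $\pow Y$) rather than fibrewise over $X$. What your argument buys is brevity and self-containedness: no monads, comparison functors, or relative algebras are required, and every step is an elementary verification. What the paper's route buys is uniformity: the same recipe simultaneously yields the probabilistic, divergence, and nondeterministic-probabilistic healthiness theorems of Sections~\ref{sec:prob-examples} and~\ref{sub:alternatingExamples}, and it situates the result inside a state-and-effect triangle, so one obtains not merely the characterization of healthy transformers but also its compatibility with the dual adjunction between states and predicates.
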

\begin{proof}

\end{proof}

\subsection{Nondeterministic-Probabilistic Programs}
\label{sub:nondet-prob}

The predicate transformer semantics for a program with nondeterministic
and probabilistic branching and its healthiness condition
are introduced in \cite{MorganMS96}.

In~\cite{Hasuo14, Hasuo15TCS}, we used the \emph{convex powerset monad}
on the category of \emph{convex cones}.

We use the \emph{convex powerset monad} on the category
of \emph{convex spaces} for simplicity.

\begin{definition}
  We define a \emph{nonempty convex powerset monad} $\RC$ on
  $\EM(D) \cong \Conv$ as follows.
  \begin{itemize}
    \item \emph{object}: for a convex space $X$,
      define $\RC X$ as the set of nonempty and convex subsets in $X$.
      The convex combination is
      $\sum_{i} \lambda_i C_i = \set{\sum_{i} \lambda_i x_i}{x_i \in C_i}$.
    \item \emph{morphism}: for a convex linear map $f \colon X \to Y$,
      a map $\RC f$ is defined by $\RC f (C) = f[C]$, where $f[C]$ is
      the image of $C$ by $f$.
    \item \emph{monad structure}: the unit $\eta \colon X \to \RC X$
    and the multiplication $\mu \colon \RC (\RC X) \to \RC X$ are defined by
    $\eta_X (x) = \sett{x}$ and $\mu_X (\calS) = \bigcup \calS$.
  \end{itemize}
  We can easily check the well-definedness of this definition.
\end{definition}

\begin{definition}
  A map $f \colon M \to N$ between effect modules is called \emph{subsublinear}
  \memo{need name} if the following conditions hold:
  \begin{itemize}
    \item \emph{subadditivity}: if $x \perp y$ then $f(x) \perp f(y)$ and
    we have an inequality $f(x) \ovee f(y) \leq f(x \ovee y)$,
    \item \emph{scaling}: $f(cx) = c f(x)$, and
    \item \emph{translation}: $f(x \ovee \lambda 1) = f(x) \ovee \lambda 1$
      when $x \perp \lambda 1$.
  \end{itemize}
  We denote by $\EMod_{\leq}$ the category of effect modules and
  subsublinear maps. Note that it contains $\EMod$ as a subcategory.
\end{definition}

\begin{proposition}
  The binary convex combination $\oplus_p \colon \uintv^2 \to \uintv$
  is subsublinear. Therefore it induces a dual adjunction between
  $\Conv$ and $\EMod_{\le}$; concretely
  $[\place, \uintv] \colon \Conv \to \EMod_{\le}$
  and $[\place, \uintv] \colon \EMod_{\le} \to \Conv$,
  where effect module structures and convex space structures
  are given pointwise.
\end{proposition}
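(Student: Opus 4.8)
The plan is to treat $\uintv$ as a dualizing object that simultaneously carries a convex-space structure and an effect-module structure, and to read off the dual adjunction from a symmetric description of both hom-sets as ``bimorphisms'' into $\uintv$. The subsublinearity of $\oplus_p$ is precisely the ingredient that keeps the convex-space side of this construction inside $\EMod_{\le}$.

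First I would dispatch the opening sentence. Writing $\oplus_p(x_1,x_2)=(1-p)x_1+px_2$ and giving $\uintv^2$ the product effect-module structure (so $(x_1,x_2)\perp(y_1,y_2)$ iff $x_1+y_1\le 1$ and $x_2+y_2\le 1$, with componentwise $\ovee$, scalar multiplication, and top $(1,1)$), a one-line computation shows that $\oplus_p$ preserves $0$, $1$, $\ovee$ and scalar multiplication on the nose; the only point to note is that definedness propagates, i.e.\ $(1-p)(x_1+y_1)+p(x_2+y_2)\le(1-p)+p=1$. Hence $\oplus_p$ is an effect-module homomorphism, so all three defining clauses of a subsublinear map hold, the subadditivity inequality being an equality. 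Along the way I would record two facts that the adjunction machinery needs: subsublinear maps are monotone (from $y=x\ovee z$, subadditivity gives $f(x)\le f(x)\ovee f(z)\le f(y)$) and are closed under composition (subadditivity of $g\co f$ chains subadditivity of $f$, monotonicity of $g$, and subadditivity of $g$; scaling and translation chain directly).

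Next I would check that the two pointwise structures are well defined. For $[\place,\uintv]\colon\Conv\to(\EMod_{\le})^{\op}$ the point is that the effect-module operations $0,1,\ovee,\cdot$ on $\uintv$ are themselves convex-linear; since they are, the pointwise operations preserve convex-linearity, so $\Conv(C,\uintv)$ is an effect module, and precomposition by a convex-linear $h$ is an $\EMod$-homomorphism, hence subsublinear. For $[\place,\uintv]\colon(\EMod_{\le})^{\op}\to\Conv$ this is where subsublinearity of $\oplus_p$ is used: the pairing $\langle f,g\rangle\colon M\to\uintv^2$ of two subsublinear maps is subsublinear into the product effect module (each clause checked componentwise), and composing with the subsublinear $\oplus_p$ yields the pointwise combination $f\oplus_p g=\oplus_p\co\langle f,g\rangle$, which is therefore subsublinear. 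Thus $\EMod_{\le}(M,\uintv)$ carries a pointwise convex structure (the convex-space axioms, being equational, hold pointwise), and precomposition by a subsublinear $k$ preserves $\oplus_p$, so is convex-linear.

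Finally, for the adjunction itself I would uncurry. A convex-linear map $C\to\EMod_{\le}(M,\uintv)$ transposes to a function $\phi\colon C\times M\to\uintv$ that is convex-linear in its first variable (encoding convex-linearity of the transpose) and subsublinear in its second (the codomain condition); symmetrically a subsublinear map $M\to\Conv(C,\uintv)$ transposes to a function subsublinear in the first variable and convex-linear in the second. Since both pointwise structures were defined exactly so that ``subsublinear into a pointwise effect module'' and ``convex-linear into a pointwise convex space'' unwind to these separate-variable conditions, the hom-sets $\Conv\bigl(C,\EMod_{\le}(M,\uintv)\bigr)$ and $\EMod_{\le}\bigl(M,\Conv(C,\uintv)\bigr)$ are both naturally isomorphic to the set of such bimorphisms $C\times M\to\uintv$, giving the dual adjunction $\Conv\rightleftarrows(\EMod_{\le})^{\op}$ with unit and counit the evaluation maps. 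The main obstacle is bookkeeping rather than depth: because $\ovee$ is a \emph{partial} operation and subsublinearity mixes an inequality (subadditivity) with equalities (scaling, translation), I must verify at each pointwise step that definedness conditions propagate and that the separate-variable conditions recombine correctly; the genuinely load-bearing step remains the one isolated above, where subsublinearity of $\oplus_p$ guarantees the convex-space side of the duality stays within $\EMod_{\le}$.
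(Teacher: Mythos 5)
Your proposal is correct and follows essentially the route the paper intends: the paper states this proposition without an explicit proof, and its surrounding machinery (the dualizing-object adjunction as in~(\ref{eq:dual-adj-cotensor-hom}) together with the pointwise lifting condition of Proposition~\ref{prop:lifting-condition}) amounts exactly to the bimorphism argument you spell out. You also correctly isolate the one genuinely load-bearing step---closure of $\EMod_{\le}(M,\uintv)$ under pointwise convex combinations via $f \oplus_p g = \oplus_p \co \langle f, g\rangle$, which needs subsublinear maps to be monotone and closed under composition---while the rest (the pointwise effect-module structure on $\Conv(C,\uintv)$, where one should note as you implicitly do that the canonical order is the pointwise one since pointwise differences of affine maps are affine, and the separate-variable unwinding of both hom-sets) is the routine verification the paper leaves implicit.
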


\begin{definition}
  Define a $\RC$-algebra structure
  $\hat{\rho} \colon \RC \uintv \to \uintv$ on $\uintv$
  by $\hat{\rho}(C) = \inf C$.
  It clearly preserves convex combinations, hence is a convex linear map.
\end{definition}

\begin{proposition}
  The $\RC$-algebra $\hat{\rho} \colon \RC \uintv \to \uintv$
  satisfies the lifting condition over $\EMod_{\leq}$;
  for a convex space $X$, the extension map
  $\rho^{\sharp}_X \colon \Conv(X, \uintv) \to \Conv(\RC X, \uintv)$
  is subsublinear.
\end{proposition}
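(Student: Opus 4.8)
The plan is to unfold the explicit formula for the extension map $\rho^{\sharp}_X$ and then verify, clause by clause, the three conditions defining a subsublinear map, each of which reduces to an elementary property of infima of subsets of $\uintv$. First I would compute the concrete action: since $\rho^{\sharp}_X(f) = \hat{\rho} \after \RC f$, while $\RC f(C) = f[C]$ is the direct image and $\hat{\rho}(C) = \inf C$, for every nonempty convex $C \subseteq X$ one obtains $\rho^{\sharp}_X(f)(C) = \inf_{x \in C} f(x)$. I would also recall from the adjunction between $\Conv$ and $\EMod_{\le}$ that the homsets $\Conv(X,\uintv)$ and $\Conv(\RC X,\uintv)$ carry the \emph{pointwise} effect-module structure inherited from $\uintv$, so that $0$, $1$, the partial sum $\ovee$, and the scalar multiplication are all computed argumentwise. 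Framing matters, the target is the lifting condition of Theorem~\ref{thm:two-player-lifting}(2): since morphisms of $\EMod_{\le}$ are by definition exactly the subsublinear maps and $V$ is the forgetful functor, showing that $\rho^{\sharp}_X$ is subsublinear for every convex space $X$ is precisely showing that it lifts along $V$ to a $\cat{D}$-morphism.

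With the above description the three clauses become pointwise-in-$C$ statements about $\inf$ on $\uintv$. For \emph{scaling}, $\inf_{x\in C}\lambda f(x)=\lambda\inf_{x\in C}f(x)$ gives $\rho^{\sharp}_X(\lambda\cdot f)=\lambda\cdot\rho^{\sharp}_X(f)$. For \emph{subadditivity}, assume $f\perp g$ pointwise; the superadditivity of the infimum, $\inf_C f+\inf_C g\le\inf_C(f+g)$, yields both that $\rho^{\sharp}_X(f)(C)\ovee\rho^{\sharp}_X(g)(C)$ is defined with value $\le\inf_C(f+g)\le 1$ (so $\rho^{\sharp}_X(f)\perp\rho^{\sharp}_X(g)$) and the required inequality $\rho^{\sharp}_X(f)\ovee\rho^{\sharp}_X(g)\le\rho^{\sharp}_X(f\ovee g)$. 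For \emph{translation}, when $f\perp\lambda 1$ the identity $\inf_{x\in C}(f(x)+\lambda)=\lambda+\inf_{x\in C}f(x)$ gives $\rho^{\sharp}_X(f\ovee\lambda 1)=\rho^{\sharp}_X(f)\ovee\lambda 1$.

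To round off the lifting claim I would also confirm that $\rho^{\sharp}_X(f)$ genuinely lands in $\Conv(\RC X,\uintv)$, i.e.\ is convex-linear on $\RC X$. Here the convex combination in $\RC X$ is the Minkowski-type one $\sum_i\lambda_i C_i=\set{\sum_i\lambda_i x_i}{x_i\in C_i}$, and convex-linearity of $f$ together with the independence of the choices $x_i\in C_i$ gives $\inf\set{\sum_i\lambda_i f(x_i)}{x_i\in C_i}=\sum_i\lambda_i\inf_{x_i\in C_i}f(x_i)$, whence $\rho^{\sharp}_X(f)(\sum_i\lambda_i C_i)=\sum_i\lambda_i\,\rho^{\sharp}_X(f)(C_i)$ as required.

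Since every clause collapses to the monotone, positively homogeneous, and translation-equivariant behaviour of $\inf$ on $\uintv$, there is no genuine obstacle; the whole argument is the verification above. The only point needing a little care is the orthogonality half of \emph{subadditivity}: one must check that the partial sum $\rho^{\sharp}_X(f)(C)\ovee\rho^{\sharp}_X(g)(C)$ is actually defined before asserting the inequality, and this is exactly what the chain $\inf_C f+\inf_C g\le\inf_C(f+g)\le 1$ supplies.
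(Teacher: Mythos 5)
Your proposal is correct and follows essentially the same route as the paper: compute the concrete formula $\rho^{\sharp}_X(f)(C) = \inf_{x \in C} f(x)$ and then check subadditivity, scaling and translation pointwise in $C$, which the paper dismisses as ``obvious'' but you spell out (correctly, including the orthogonality half of subadditivity via $\inf_C f + \inf_C g \le \inf_C(f+g) \le 1$). Your extra verification that $\rho^{\sharp}_X(f)$ is convex-linear on $\RC X$ is sound but redundant, since $\rho^{\sharp}_X(f) = \hat{\rho} \after \RC f$ is a composite of $\Conv$-morphisms by construction.
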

\begin{proof}
  For a convex set $C \subset \dist X$, the $C$-component
  $(\rho_X^{\sharp})_C$ is calculated by
  $(\rho_X^{\sharp})_C (f) = \inf_{x \in C} f(x)$.
  Then subadditivity, scaling and translation follow obviously.
\end{proof}

Then we have a monad map $\tau \colon \RC \to [[\place , \uintv], \uintv]$.

\begin{proposition}
  For the free convex space $\dist n \cong \Delta^n$ over a finite set $n$,
  we have that $\tau_{\Delta^n}$ is surjecive.
\end{proposition}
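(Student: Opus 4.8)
The plan is to make $\tau_{\Delta^{n}}$ fully concrete and then recognise the claim as a finite-dimensional separation (support-function) statement. First I would use the freeness of $\dist n$ to identify the effect module $[\Delta^{n},\uintv]=\Conv(\Delta^{n},\uintv)$ with $\uintv^{n}$, by restricting a convex-linear map to the $n$ vertices; under this identification a map is a tuple $f=(f_{1},\dots,f_{n})$ and $f(p)=\langle p,f\rangle:=\sum_{i}p_{i}f_{i}$. Recalling (Proposition~\ref{prop:emalg-monadmap}) that the monad map is read off the $\RC$-algebra $\hat{\rho}\colon C\mapsto\inf C$, the component $\tau_{\Delta^{n}}$ sends a nonempty convex $C\subseteq\Delta^{n}$ to the map $\varphi_{C}\colon\uintv^{n}\to\uintv$ with $\varphi_{C}(f)=\inf_{p\in C}\langle p,f\rangle$. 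Its codomain $[[\Delta^{n},\uintv],\uintv]$ is the set $\EMod_{\le}(\uintv^{n},\uintv)$ of regular-sublinear maps, which---unwinding the three defining clauses---are exactly the $\varphi\colon\uintv^{n}\to\uintv$ that are positively homogeneous, superadditive (hence concave), and translation-linear along the constant-one function $\bbone$, i.e.\ $\varphi(f+\lambda\bbone)=\varphi(f)+\lambda$.

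Given such a $\varphi$, the natural candidate preimage is the dominating set
\[
  C \;=\; \set{p\in\Delta^{n}}{\langle p,f\rangle\ge\varphi(f)\ \text{for all } f\in\uintv^{n}}.
\]
It is visibly convex, being the intersection of $\Delta^{n}$ with closed half-spaces, so its legitimacy as an element of $\RC(\Delta^{n})$ reduces to nonemptiness. By construction $\varphi_{C}(f)=\inf_{p\in C}\langle p,f\rangle\ge\varphi(f)$ for every $f$, so the entire content of surjectivity is the reverse inequality: for each fixed $f_{0}$ I must exhibit a point $p\in C$ with $\langle p,f_{0}\rangle=\varphi(f_{0})$, which simultaneously witnesses $C\neq\emptyset$ and gives $\varphi_{C}=\varphi$.

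The key step is thus a supergradient argument. Since $\varphi$ is finite and concave on the polytope $\uintv^{n}$, at an interior point $f_{0}$ I would apply the finite-dimensional supporting-hyperplane theorem to the hypograph of $\varphi$, obtaining $p\in\bbR^{n}$ with $\varphi(f)\le\varphi(f_{0})+\langle p,f-f_{0}\rangle$ for all $f$; positive homogeneity (evaluating at $f=0$ and $f=2f_{0}$, using $\varphi(0)=0$) upgrades this to $\langle p,f_{0}\rangle=\varphi(f_{0})$ together with $\langle p,f\rangle\ge\varphi(f)$ everywhere. It then remains to check $p\in\Delta^{n}$. Nonnegativity is immediate by testing the basis vectors: $p_{i}=\langle p,e_{i}\rangle\ge\varphi(e_{i})\ge0$. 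That $\sum_{i}p_{i}=1$ is precisely where the translation clause enters: for interior $f_{0}$ one has $f_{0}\pm\lambda\bbone\in\uintv^{n}$ for small $\lambda$, and combining domination with $\varphi(f_{0}\pm\lambda\bbone)=\varphi(f_{0})\pm\lambda$ forces both $\langle p,\bbone\rangle\ge1$ and $\langle p,\bbone\rangle\le1$, hence $\sum_{i}p_{i}=1$. Thus $p\in C$ and $\varphi_{C}(f_{0})=\varphi(f_{0})$ at every interior $f_{0}$.

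The main obstacle I anticipate is the behaviour at the boundary of $\uintv^{n}$, where a concave function need not admit a supergradient and the two-sided perturbation $f_{0}\pm\lambda\bbone$ may leave the domain, so the translation trick pinning $\sum_{i}p_{i}=1$ no longer applies directly. My plan is to circumvent this by approximation rather than by arguing at the boundary: extend $\varphi$ positively-homogeneously to the orthant $\bbR^{n}_{\ge0}$, observe that $\varphi_{C}$ is automatically continuous on the compact $\uintv^{n}$ (it is an infimum of the equicontinuous family $\langle\,\cdot\,,f\rangle$ over the compact set $C$, so the infimum is attained), and conclude $\varphi_{C}=\varphi$ first on the dense interior and then everywhere by continuity. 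Here the finiteness of $n$ is doing the real work---compactness of $\Delta^{n}$ guarantees attainment and continuity of $\varphi_{C}$, and finite concavity supplies supergradients on the relative interior---so the only residual care is to confirm that the given $\varphi$ is closed/regular enough for this biconjugation to be tight, which I expect to follow from its being a genuine regular-sublinear map.
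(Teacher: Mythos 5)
Your proposal reaches the statement by a genuinely different route from the paper's own proof, although both start from the identical candidate set $C=\set{p\in\Delta^n}{\langle p,f\rangle\ge\varphi(f)\ \text{for all }f\in\uintv^n}$ and both reduce surjectivity to the nontrivial inequality $\inf_{p\in C}\langle p,f\rangle\le\varphi(f)$. The paper argues by contradiction and stays global: if the inequality failed at some $f$, the closed subsets $F_g=\set{x\in\Delta^n}{\varphi(g)\le g^{\sharp}(x)}$ together with $F'=\set{x\in\Delta^n}{\varphi(f)\ge f^{\sharp}(x)}$ would have empty intersection; compactness of $\Delta^n$ extracts finitely many $g_1,\dotsc,g_k$ with $\bigl(\bigcap_{i\le k}F_{g_i}\bigr)\cap F'=\emptyset$, and Farkas' lemma converts this emptiness into nonnegative coefficients $c_i,d,\lambda,\lambda'$ with $\sum_i c_ig_i+\lambda\bbone=df+\lambda'\bbone\le\bbone$ but $\sum_i c_i\varphi(g_i)+\lambda>d\varphi(f)+\lambda'$, contradicting regular-sublinearity. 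You instead construct an explicit witness: a supergradient $p$ of the concave map $\varphi$ at each \emph{interior} point $f_0$, upgraded by scaling and the translation axiom to a point of $C$ with $\langle p,f_0\rangle=\varphi(f_0)$, followed by a continuity argument to cover the boundary. Your route buys an explicit supporting point and avoids the theorem-of-alternatives bookkeeping; the paper's route treats all of $\uintv^n$ uniformly, with no interior/boundary case split --- which is exactly where your version needs extra care.

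Two repairs are needed, both local. First, the evaluation ``at $f=2f_0$'' is not legitimate: for interior $f_0$ the point $2f_0$ generally lies outside $\uintv^n$; evaluate instead at $(1+\epsilon)f_0$ for small $\epsilon>0$, where scaling gives $\varphi\bigl((1+\epsilon)f_0\bigr)=(1+\epsilon)\varphi(f_0)$ and the supergradient inequality then forces $\varphi(f_0)\le\langle p,f_0\rangle$. Second, and more substantively, your boundary step silently requires that $\varphi$ itself be continuous up to the boundary of $\uintv^n$: a concave function on a polytope can drop discontinuously at the boundary (e.g.\ $\varphi\equiv 1$ on $(0,1)$ and $\varphi=0$ at the endpoints is concave), so agreement on the dense interior plus continuity of $\inf_{p\in C}\langle p,\place\rangle$ alone does not close the argument. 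This is the point you flagged as ``I expect to follow,'' and it does follow, but it needs a proof: superadditivity plus nonnegativity gives monotonicity, and for $x,y\in\uintv^n$ with $y\le x+\lambda\bbone$ pointwise, setting $c=1/(1+\lambda)$ one has $cx+c\lambda\bbone\in\uintv^n$, so monotonicity, translation and scaling yield $c\varphi(y)\le\varphi(cx+c\lambda\bbone)=c\varphi(x)+c\lambda$, i.e.\ $\varphi(y)\le\varphi(x)+\lambda$. Hence $\varphi$ is $1$-Lipschitz for the sup-norm, in particular continuous on all of $\uintv^n$. With these two patches your argument is complete and constitutes a valid alternative to the paper's compactness-plus-Farkas proof.
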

\begin{proof}
  Assume $\phi \colon \uintv^n \to \uintv$ is subsublinear.
  Define $C \subset \Delta^n$ by
  $C = \set{x \in \Delta^n}
        {\forall f \in \uintv^n \ldotp \phi(f) \leq f^{\sharp}(x)}$.
  It is easy to check $C$ is convex.
  We will prove $\phi(f) = \inf_{x \in C} f^{\sharp}(x)$.
  Since $\phi(f) \leq \inf_{x \in C} f^{\sharp}(x)$ directly follows from
  the definition of $C$, we only have to show in fact the equality holds.

  Assume $\phi(f) \lt \inf_{x \in C} f^{\sharp}(x)$ for some $f$.
  Then we have $\phi(f) \lt f^{\sharp}(x)$ for each $x \in C$.
  By the definition of $C$, it means $\p*{\bigcap_{g} F_g} \cap F'$
  is empty where
  $F_g = \set{x \in \Delta^n}{ \phi(g) \leq g^{\sharp}(x) }$
  for $g \in \uintv^n$ and
  $F' = \set{x \in \Delta^n}{ \phi(f) \geq f^{\sharp}(x) }$.
  Since $F_g$ and $F'$ is closed and $\Delta^n$ is compact,
  we have a finite number of $g_0, \ldots, g_k$ such that
  $\p*{\bigcap_{i \le k} F_{g_i}} \cap F'$ is also empty.
  By Farkas' lemma \memo{need cite?}, there are nonnegative
  $c_1, \ldots, c_k, d, \lambda, \lambda'$ such that
  \begin{align*}
    c_1 g_1 + \cdots + c_k g_k + \lambda 1 &= d f + \lambda' 1 \le 1 \enspace \text{, and} \\
    c_1 \phi(g_1) + \cdots + c_k \phi(g_k) + \lambda 1 &\gt d \phi(f) + \lambda' 1
    \enspace ,
  \end{align*}
  which contradicts the subsublinearity of $\phi$.
\end{proof}

\begin{proposition}
  Assume $Y$ is finite.
  A predicate transformer $\phi \colon \uintv^Y \to \uintv^X$ is described
  as $\phi(p) = \lambda x \ldotp \inf_{\mu \in f(x)} \int p \mu$
  using some state transformer $f \colon X \to \RC (\dist Y)$
  if and only if $\phi$ is subsublinear.

\end{proposition}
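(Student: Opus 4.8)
The plan is to instantiate the alternating relative-algebra recipe (Definition~\ref{def:alternating-pred-transf}) with the inner monad $T=\dist$, the outer monad $R=\RC$ on $\EM(\dist)\cong\Conv$ (Example~\ref{ex:rStarTExample}), and with $\cat{D}$ the (complete) category of effect modules and regular-sublinear maps, taking $\OmegaD=\uintv$ as dualizing object. First I would set up the inner data: $\uintv$ carries its usual convex structure, and the integration monad map $\tau\colon\dist\to\cat{D}(\uintv^{(\place)},\uintv)$, $(\tau)_X(p)(f)=\sum_{x}f(x)p(x)$, lands in $\cat{D}$ because each integration functional is an effect-module morphism and hence a fortiori regular-sublinear; this makes $\bar{\Omega}=(\uintv,\tau)$ a $\cat{D}$-relative $\dist$-algebra and produces the dual adjunction between $\EM(\dist)\cong\Conv$ and $\cat{D}^{\op}$ required by Theorem~\ref{thm:state-effect-triangle-relative}. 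For the outer layer I would equip the convex space $\Omega_{\tau}=\uintv$ with the $\RC$-algebra $\hat{\rho}\colon\RC\,\uintv\to\uintv$, $\hat{\rho}(C)=\inf C$, which is convex-linear and hence a genuine $\RC$-algebra. Feeding $\hat{\rho}$ into Theorem~\ref{thm:two-player-lifting}, the monad map $\rho\colon\RC\to[[\place,\bar{\Omega}]_{\dist},\bar{\Omega}]_{\cat{D}}$ exists exactly when the pointwise lifting condition holds, i.e.\ when each component $(\rho^{\sharp})_C(g)=\inf_{x\in C}g(x)$ is regular-sublinear; subadditivity, scaling and the equality form of the translation law for $\inf$ over a set of \emph{probability} distributions are precisely the three clauses defining regular-sublinearity, so this check is routine and is in fact the origin of the name. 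Unfolding the definition, the induced transformer of a computation $f\colon X\to\RC(\dist Y)$ is $\bbP^{(\rho,\tau)}(f)(p)(x)=\inf_{\mu\in f(x)}\int p\,d\mu$.

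With these ingredients fixed, the equivalence is read off from the alternating healthiness criterion (Theorem~\ref{thm:alternating-healthiness}). A map into the product $\uintv^X$ is a $\cat{D}$-morphism iff each of its $X$ components is, so $\cat{D}(\uintv^Y,\uintv^X)$ is exactly the set of regular-sublinear maps $\uintv^Y\to\uintv^X$; since $\bbP^{(\rho,\tau)}$ lands in $\cat{D}^{\op}$, its values are automatically regular-sublinear, which already gives $(1)\Rightarrow(2)$. For $(2)\Rightarrow(1)$ it suffices, by Theorem~\ref{thm:alternating-healthiness}, that the action $\bbP^{(\rho,\tau)}_{XY}$ be surjective, and the theorem reduces this to surjectivity of
\[
U\rho_{FY}\colon\ \RC(\dist Y)\longto\cat{D}(\uintv^Y,\uintv),\qquad C\longmapsto\bigl(\,p\mapsto\textstyle\inf_{\mu\in C}\int p\,d\mu\,\bigr),
\]
where I use $FY=\dist Y$ and the identification $U[[FY,\bar{\Omega}]_{\dist},\bar{\Omega}]_{\cat{D}}\cong\cat{D}(\uintv^Y,\uintv)$ (Proposition~\ref{prop:state-pred-duality-forgetful}).

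The heart of the proof, and the step for which finiteness of $Y$ is indispensable, is the surjectivity of $U\rho_{FY}$. Given a regular-sublinear $\phi\colon\uintv^Y\to\uintv$ I would form the candidate set $C=\set{\mu\in\dist Y}{\forall p\in\uintv^Y.\ \phi(p)\le\int p\,d\mu}$, which one checks is nonempty and convex-closed, and then show $\phi(p)=\inf_{\mu\in C}\int p\,d\mu$. The inequality ``$\le$'' is immediate from the definition of $C$; for the reverse I would argue by contradiction, assuming $\phi(p)<\int p\,d\mu$ for some fixed $p$ and every $\mu\in C$. As $Y$ is finite, $\dist Y$ is a compact simplex in a finite-dimensional space, so the closed half-spaces cutting out $C$ together with the strict-inequality constraint have empty intersection; by compactness already finitely many of them do, and Farkas' lemma then yields nonnegative coefficients witnessing a convex-combination inequality that directly contradicts the subadditivity, scaling and translation laws of $\phi$. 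Injectivity, which is not needed for the stated equivalence but pins down the underlying state, is the easier separation statement: two distinct convex-closed subsets of the compact $\dist Y$ are separated by an affine functional, i.e.\ by some $p\in\uintv^Y$, hence induce different infima.

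Combining the surjectivity of $U\rho_{FY}$ with Theorem~\ref{thm:alternating-healthiness} gives surjectivity of $\bbP^{(\rho,\tau)}_{XY}\colon\Kl(\RC\star\dist)(X,Y)\to\cat{D}(\uintv^Y,\uintv^X)$ onto the regular-sublinear maps, completing $(2)\Rightarrow(1)$ and hence the equivalence; no finitary side-condition appears because finiteness of $Y$ makes every map $\uintv^Y\to\uintv^X$ automatically finitary. The main obstacle is exactly the convex-duality core of the previous paragraph: establishing that regular-sublinearity is precisely the structure recoverable from the lower envelope $\inf_{\mu\in C}\int(\place)\,d\mu$ of a convex set of distributions, which is where finite-dimensional compactness and Farkas' lemma do the real work.
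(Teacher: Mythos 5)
Your proposal is correct and follows essentially the same route as the paper: the same category of effect modules with regular-sublinear (subsublinear) maps, the same modalities $\tau$ (integration) and $\hat{\rho}=\inf$, the same pointwise lifting check, and the same key surjectivity lemma proved by forming the candidate convex set $C=\set{\mu\in\dist Y}{\forall p.\ \phi(p)\le\int p\,d\mu}$ and deriving a contradiction via compactness of the simplex and Farkas' lemma, before concluding with the alternating healthiness theorem. The only differences are presentational (you route the dual adjunction through the relative-algebra recipe rather than stating it directly, and you note nonemptiness of $C$ explicitly, which the paper glosses over).
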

}

\acks
We thank Toshiki Kataoka for helpful discussions, and the anonymous
referees for useful comments. Special thanks are due
to John Power for the lectures he gave on the occasion of his visit to
Tokyo; the notion of relative algebra is inspired by them.
W.H., H.K.\ and I.H.\ are
 supported by Grants-in-Aid No. 24680001, 15K11984 \& 15KT0012,
 JSPS.


\bibliographystyle{abbrvnat}


\bibliography{./ref.bib}

%
%

\newpage
\appendix
\section{Omitted Proofs and Details}
\subsection{Explicit Definition of Monad Maps}
\label{appendix:monadMap}
\begin{definition}
 Let $S,T$ be monads on $\cat{C}$. A
 \emph{monad map} from $S$ to $T$ is
 a natural transformation $\alpha\colon S\to T$ that makes the following
 diagram commute.
 \begin{equation}
     \xymatrix@R-1.5em@C-1em{
        X \ar[r]^{\eta^S_X} \ar[rd]_{\eta^T_X}
        & SX \ar[d]^{\alpha_X} \\
        & TX
     }
     \xymatrix@R-1.5em@C-1.5em{
        SSX \ar[r]^{S \alpha_{X}} \ar[d]^{\mu^S_X}
        & STX \ar[r]^{\alpha_{TX}}
        & TTX \ar[d]_{\mu^T_X} \\
        SX \ar[rr]^{\alpha_{X}}
        && TX
   }
 \end{equation}
 Here $\eta^{(\place)}$ and $\mu^{(\place)}$ are the unit and the multiplication of monads.
\end{definition}

\subsection{Proof of Lemma~\ref{lem:monad-isom-jslat}}
\begin{proof}
  Let the monad $[2^{(\place)}, 2]_{\biglor}$ be denoted by $T$ in the
  current proof, for brevity.

  We need to check that  $\sigma_X$ is join-preserving for each
 $X$. Indeed, for each $S\subseteq X$, we have
  $\bigvee_{x \in S} \p[\big]{\bigvee_{f \in \calF} f(x)}
  = \bigvee_{f \in \calF} \p[\big]{\bigvee_{x \in S} f(x)}$
  for any family $\calF \subset 2^X$.
  It is easy to check that $\sigma$ is natural, and that it is
 compatible with monad units.
 Compatibility with monad multiplications requires the following diagram
 to commute.
  \begin{align*}
    \xymatrix@R=.8em{
      \pow \pow X \ar[r]^{\pow \sigma_X} \ar[d]^{\bigcup}
      & \pow T X \ar[r]^{\sigma_{T X}}
      & T T X \ar[d]^{\mu_X} \\
      \pow X \ar[rr]^{\sigma_X}
      && T X
    }
  \end{align*}
 Indeed, for each  $\calS \in \pow \pow X$ we have
  \begin{align*}
    \mu \circ \sigma_{T X} \circ \pow \sigma_X (\calS)
    &= \mu \circ \sigma_{T X} \p*{\set{\sigma_X (S)}{S \in \calS}} \\
    &= \mu \p*{\lambda \phi \ldotp
      \biglor_{S \in \calS} \phi \circ \sigma_X (S)} \\
    &= \lambda f' \ldotp \p*{\lambda \phi \ldotp
      \biglor_{S \in \calS} \phi \circ \sigma_X (S)} \p*{\lambda \xi \ldotp \xi(f')} \\
      &= \lambda f' \ldotp \biglor_{S \in \calS} \sigma_X (S)(f') \\
    &= \lambda f' \ldotp \biglor_{S \in \calS} \biglor_{x \in S} f'(x) \\
    &= \sigma_X \p*{\bigcup \calS} \enspace .
  \end{align*}
  Finally we check that
  $\sigma_{X}$ is bijective. Its inverse is  given by
  \begin{align*}
    (\sigma_X)^{-1}(\xi) = \set{x \in X}{\xi(\delta_x) = 1}\enspace,
  \end{align*}
  where $\delta_{x}\colon X\to 2$ is given by: $\delta_x(x) = 1$, and $\delta_x(y) = 0$ if $x \neq y$.
\end{proof}

\subsection{Proof of Theorem~\ref{thm:theModalityRecipe}}
\begin{proof}
  For any set $X$ we have $[KX, \Omega_{\tau}]_T \cong \Omega^X$
  since $KX$ is the free $T$-algebra over $X$.
  It is natural in $X \in \Kl(T)$. Indeed, for $f \colon X \to TY$,
  the  diagram
  \begin{align*}
    \xymatrix@R-1.5em{
      [TX, \Omega_{\tau}]_{T} \ar[d]^{\cong}
      && [TY, \Omega_{\tau}]_{T} \ar[ll]_{(Kf)^*} \ar[d]^{\cong} \\
      \Omega^X
      & \Omega^{TY} \ar[l]_{f^*}
      & \Omega^{Y} \ar[l]_{\tau^{\sharp}}
    }
  \end{align*}
commutes by direct calculation.
\end{proof}
\subsection{Proof of Proposition~\ref{prop:change-of-base}}
\begin{proof}
  Since the functor $H$ is product preserving,
  the canonical map $\theta \colon H(A^X) \to (HA)^X$ in
  $\cat{D'}$, defined by the transpose of
  \begin{equation}\label{eq:change-of-base}
    X \xxto{\id^{\sharp}} \cat{D}(A^X, A)
      \xxto{H} \cat{D'}(H(A^X), HA) \enspace ,
  \end{equation}
 is an isomorphism.
  Using this isomorphism $\theta$,
  we define a natural transformation $\psi^A$ by
  $\psi^A_X = (\theta^{-1})^* \co H$, that is,
  \begin{align*}
     \cat{D}(A^{X}, A)
      \xxto{H} \cat{D'}(H(A^{X}), HA)
      \xxto{(\theta^{-1})^*} \cat{D'}((HA)^{X}, HA)
   \enspace.
  \end{align*}
  This $\psi$ is  seen to be
  a monad map by some diagram chasing.
  We define a functor $\bar{H}$ by
  \begin{align*}
    \bar{H} \p*{A, \alpha} &= \p*{HA, \psi \after \alpha}
        \enspace, \quad\text{and} \\
    \bar{H} ((A, \alpha) \xxto{f} (B, \beta)) &=
        ((HA, \psi \after \alpha) \xxto{Hf} (HB, \psi \after \beta))
        \enspace .
  \end{align*}
  It is a routine to check that
$Hf$ is indeed a morphism of $\cat{D'}$-relative $T$-algebra,
  and that $\bar{H}$ makes the diagram in
 (\ref{eq:change-of-base-lifting}) commute.

 That $\bar{H}$ preserves products is easily checked by direct
 calculations. That $\bar{H}$ is faithful, given that $H$ is so, follows
 immediately from~(\ref{eq:change-of-base-lifting}).
\end{proof}

\subsection{Proof of Theorem~\ref{thm:state-effect-triangle-relative}}
\begin{proof}
  We will denote
  $\OmegaD^X$ and $\cat{D}(M, \OmegaD)$ by $X^*$ and $M^*$ respectively
  in this proof.

  We check the adjointness of $[\place, \bar{\Omega}]_{T}$
  and $[\place, \bar{\Omega}]_{\cat{D}}$. It is enough to show
  that, for a $T$-algebra $A_a$, an object $M \in \cat{D}$ and
  $f \colon M \to A^*$ in $\cat{D}$,
  the commutativity of the two diagrams in (\ref{diag:proof-state-pred3})
  are equivalent:
  \begin{align}
    \label{diag:proof-state-pred3}
    \xymatrix@R-1.5em{
      M \ar[r]^{f} \ar[d]_{f}
      & A^* \ar[d]^{\tau^{\sharp}} \\
      A^* \ar[r]_{a^*}
      & (TA)^*
    } \quad
    \xymatrix@R-1.5em{
      TA \ar[r]^-{Tf^{\sharp}} \ar[d]_{a}
      & T \p*{M^*} \ar[d]^{\zeta_M} \\
      A \ar[r]_-{f^{\sharp}}
      & M^*
    }
  \end{align}

  Since commutativity of the left diagram in (\ref{diag:proof-state-pred3})
  is equivalent to that of the left diagram in (\ref{diag:proof-state-pred}),
  it is enough to show the diagram in (\ref{diag:proof-state-pred2})
  commutes.
  \begin{align}
    \label{diag:proof-state-pred}
      \xymatrix@R-1.5em{
        TA \ar[r]^-{\tau} \ar[d]_{a}
        & A^{**} \ar[d]^{f^*} \\
        A \ar[r]_-{f^{\sharp}}
        & M^*
      }
      \xymatrix@R-1.5em{
        A^*
        & M \ar[l]_{f} \ar[ld]^{\eta} \\
        M^{**} \ar[u]^{(f^{\sharp})^*}
      }
  \end{align}
  \begin{align}
    \label{diag:proof-state-pred2}
    \vcenter{
      \xymatrix@R-1.5em{
        TA \ar[r]^-{\tau} \ar[d]_{Tf^{\sharp}}
        & A^{**} \ar[r]^{f^*} \ar[d]_{(f^{\sharp})^{**}}
        & M^* \\
        T \p*{M^*} \ar[r]_-{\tau}
        & M^{***} \ar[ur]_{\eta^*}
      }
    }
  \end{align}
  The left square commutes
  by the naturality, and the right triangle commutes since
  the right diagram in (\ref{diag:proof-state-pred})
  does by the adjointness.
\end{proof}

\subsection{Proof of Theorem~\ref{thm:partialHealthiness}}
\begin{proof}
  It is easy to check
  the functor $\bbP^{\tau}$ coincides with the composite
  \begin{equation}
    \Kl(T) \xxto{\Kl(\tau)} \Kl(\cat{D}(\OmegaD^{(\place)}, \OmegaD))
      \xxto{K} \cat{D}^{\op} \enspace ,
  \end{equation}
  where $K$ is the comparison functor and $\Kl(\tau)$
  denotes the functor defined by
  \begin{align*}
    \Kl(\tau) (X) &= X \enspace , \quad \text{and} \\
    \Kl(\tau) \bigl(\,X \xxto{f} Y \;\text{(in $\Kl(T)$)}\,\bigr) &=
      \bigl(\,X \xxto{f} TY \xxto{\tau} \cat{D}(\OmegaD^{Y}, \OmegaD)
 \;\text{(in $\Sets$)}
\,\bigr)\enspace.
  \end{align*}

  Since $K$ is full and faithful,
  it is enough to show the action
  $\Kl(\tau) \colon \Kl(T)(X, Y) \to \Kl(\cat{D}(\OmegaD^{(\place)}, \OmegaD))(X, Y)$
  is surjective (resp. injective).
  By the definition of $\Kl(\tau)$,
  this action is the postcomposition map by $\tau$.
  \begin{equation}
    \tau_{*} \colon \Set(X, TY) \to \Set(X, \cat{D}(\OmegaD^{Y}, \OmegaD))
  \end{equation}
  Here we use the identification
  $\Kl(T)(X, Y) = \Set(X, TY)$ and
  $\Kl(\cat{D}(\OmegaD^{(\place)}, \OmegaD))(X, Y) = \Set(X, \cat{D}(\OmegaD^{Y}, \OmegaD))$.
  When $\tau$ is injective, the postcomposition $\tau_{*}$
  is injective by the definition of mono.
  When $\tau$ is surjective, it is split epi (by the axiom of choice)
  hence so is $\tau_{*}$.
\end{proof}

\subsection{Proof of Lemma~\ref{lem:finitary-factorization}}
\begin{proof}
  Since $T$ is finitary, $t \colon 1 \to TX$ factor through
  some finite subset $s \colon X' \into X$ as $t = Ts \circ t'$.
  Then we get a desired factorization as follows:
  \begin{align*}
    \xymatrix@R-1.5em{
      & TX' \ar[d]^-{Ts} \\
      1 \ar[ur]^-{t'} \ar[r]_-{t}
      & TX
    }
    \xymatrix@R-1.5em{
      & \Omega^{TX'} \ar[dl]_-{(t')^*}
      & \Omega^{n} \ar[l]_-{\tau^{\sharp}_{X'}} \\
      \Omega
      & \Omega^{TX} \ar[l]^-{t^*} \ar[u]_-{(Ts)^*}
      & \Omega^{X} \ar[u]_-{s^*} \ar[l]^-{\tau^{\sharp}_X} \mathrlap{\enspace .}
    }
  \end{align*}
\vspace{-2em}
\end{proof}

\subsection{Proof of Theorem~\ref{thm:finitary-healthiness}}
\begin{proof}
  We can assume $X = 1$.

  \emph{Only if:} It follows from Corollary~\ref{cor:finitary-pred-transf}.

  \emph{If:} The statement is obviously true for $Y = \emptyset$,
  so we assume $Y \neq \emptyset$.
  Since $\phi$ is finitary, we can decompose $\phi$
  as $\phi = \phi' \after s^*$ for some finite subset $s \colon Y' \into Y$
  and $\phi' \colon \Omega^{Y'} \to \Omega$.
  We can assume $Y'$ is nonempty.
  Fix a retraction $r \colon Y \onto Y'$.
  We have $\phi' = \phi \after r^*$, then $\phi'$ also lifts
  to an $\cat{D}$-morphism $\bar{\phi'}$.
  By the surjectiveness of $\sigma_{Y'}$, there exists
  some $t' \in T{Y'}$ with $\sigma_{Y'}(t') = f'$.
  Take $t = Ts(t')$, then we have $\sigma_X(t) = \phi$,
  which concludes the proof.
\end{proof}

\subsection{Proof of Theorem~\ref{prop:finitary-continuous}}

We use the following lemma on elementary topology.
It easily follows from the compactness of $\Omega^X$.

\begin{lemma}
  For a finite discrete space $\Omega$ and an arbitrary set $X$,
  each clopen set $C \subset \Omega^{X}$ is written as
  $(s^{*})^{-1}(S)$ for some finite subset $s \colon X' \into X$
  and some subset $S$ of $\Omega^{X'}$.
\end{lemma}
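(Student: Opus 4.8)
The plan is to exploit compactness of the product space $\Omega^{X}$ together with the fact that the \emph{cylinder sets}---those of the form $(s^{*})^{-1}(S)$ for a finite $s\colon X'\into X$ and some $S\subseteq\Omega^{X'}$---form a basis of clopen sets for the product topology. Since $\Omega$ is finite and discrete, each finite power $\Omega^{X'}$ is itself finite discrete, so every subset $S\subseteq\Omega^{X'}$ is clopen there; as $s^{*}\colon\Omega^{X}\to\Omega^{X'}$ is the continuous projection onto the coordinates in $X'$, each cylinder set is clopen in $\Omega^{X}$. Moreover the standard subbasic opens $\pi_{x}^{-1}(V)$ and their finite intersections are exactly cylinder sets, so the cylinders do generate the topology.

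First I would write the given clopen set $C$ as a union $C=\bigcup_{i}U_{i}$ of cylinder sets, which is possible because $C$ is open and the cylinders form a basis. The key step comes next: since $\Omega$ is a finite discrete (hence compact) space, $\Omega^{X}$ is compact by Tychonoff's theorem, and $C$, being closed in a compact space, is itself compact. Hence the open cover $\{U_{i}\}_{i}$ admits a finite subcover, giving $C=U_{i_{1}}\cup\cdots\cup U_{i_{n}}$.

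Each $U_{i_{j}}$ depends only on the coordinates in some finite set $F_{j}\subseteq X$. Taking $X'=F_{1}\cup\cdots\cup F_{n}$---still finite---the whole union $C$ depends only on coordinates in $X'$, in the sense that $f|_{X'}=g|_{X'}$ forces $f\in C\Leftrightarrow g\in C$. Setting $s\colon X'\into X$ to be the inclusion and $S=\{\,c|_{X'}\mid c\in C\,\}\subseteq\Omega^{X'}$, this dependence property yields $C=(s^{*})^{-1}(S)$: the inclusion $C\subseteq(s^{*})^{-1}(S)$ is immediate, and conversely any $f$ with $f|_{X'}\in S$ agrees on $X'$ with some $c\in C$, whence $f\in C$.

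The main obstacle is really just the compactness step: without it the union $\bigcup_{i}U_{i}$ might genuinely involve infinitely many coordinates and the conclusion would fail. The only other point requiring care is the bookkeeping that a cylinder on $F_{j}$ can be regarded as a cylinder on the larger finite set $X'$ (by pulling its defining $S$ back along the projection $\Omega^{X'}\to\Omega^{F_{j}}$), so that all the $U_{i_{j}}$---and hence their union $C$---are genuinely determined by the single finite index set $X'$.
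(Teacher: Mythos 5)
Your proof is correct and follows exactly the route the paper intends: the paper does not spell out a proof but remarks that the lemma ``easily follows from the compactness of $\Omega^{X}$,'' and your argument---cylinder sets as a clopen basis, Tychonoff compactness of $\Omega^{X}$, a finite subcover forcing $C$ to depend on only finitely many coordinates, and then $C=(s^{*})^{-1}(S)$ with $S=\{\,c\restriction_{X'}\mid c\in C\,\}$---is precisely that standard compactness argument, carried out with all the necessary bookkeeping.
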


\begin{proof}[Proof of Proposition~\ref{prop:finitary-continuous}]
  We can assume $X = 1$.

  \emph{If}: assume $\phi$ is continuous. For each $z \in \Omega$,
  the inverse image $\phi^{-1}(z)$ is clopen, so it can be described as
  $(\iota_z^*)^{-1}(S_z)$ for some
  finite subset $\iota_z \colon Y_z \into Y$ and $S_z \subset \Omega^{Y_z}$.
  Then $\iota \colon Y' = \bigcup_{z \in \Omega} Y_z \into Y$ is still finite
  and $\phi$ factors through $\iota^*$.

  \emph{Only if}: assume $\phi$ is finitary;
  we have $s \colon n \to Y$ and $\phi' \colon \Omega^n \to \Omega$
  such that $\phi' \circ s^* = \phi$.
  The map $s^*$ is obviously continuous, and
  so is $\phi'$ since its domain $\Omega^n$ is (finite) discrete.
  Therefore their composite $\phi$ is continuous.
\end{proof}


\end{document}
